\documentclass[11pt]{article}
\usepackage{amsmath}
\usepackage{amsthm}
\usepackage{amssymb}
\usepackage{graphicx}
\usepackage{float}
\usepackage{url}
\usepackage{hyperref}
\usepackage[shortlabels]{enumitem}
\usepackage{microtype}
\usepackage{rotating}
\usepackage{booktabs}
\usepackage{geometry}
\usepackage[T1]{fontenc}

\theoremstyle{plain}
\newtheorem{theorem}{Theorem}[section]
\newtheorem{lemma}[theorem]{Lemma}
\newtheorem{corollary}[theorem]{Corollary}

\theoremstyle{definition}
\newtheorem{definition}[theorem]{Definition}
\newtheorem{problem}[theorem]{Problem}
\newtheorem{example}[theorem]{Example}

\theoremstyle{remark}
\newtheorem{remark}[theorem]{Remark}

\newcounter{casenum}
\newenvironment{caseof}{\setcounter{casenum}{1}}{\vskip.5\baselineskip}
\newcommand{\case}[2]{\vskip.5\baselineskip\par\noindent{\bfseries
Case \arabic{casenum}:} #1\\#2\addtocounter{casenum}{1}}

\def\fab{f_{\alpha \beta}}

\usepackage{fancyvrb}
\DefineVerbatimEnvironment{leancode}{Verbatim}{
  fontsize=\small,
  frame=single,
  framesep=3mm,
  rulecolor=\color{gray},
  commandchars=\\\{\}
}
\usepackage{xcolor}
\usepackage[most]{tcolorbox}

\newcommand{\leanbox}[1]{%
  \begin{tcolorbox}[
    enhanced, breakable,
    colback=green!6, colframe=green!45!black,
    fonttitle=\small\sffamily\bfseries,
    title={Lean~4 Certificate},
    left=4pt, right=4pt, top=3pt, bottom=3pt,
    boxrule=0.5pt
  ]
  \small\ttfamily\raggedright #1
  \end{tcolorbox}
}

\newcommand{\leanaxiombox}[1]{%
  \begin{tcolorbox}[
    enhanced, breakable,
    colback=blue!4, colframe=blue!35!black,
    fonttitle=\small\sffamily\bfseries,
    title={Lean~4 Axiom},
    left=4pt, right=4pt, top=3pt, bottom=3pt,
    boxrule=0.5pt
  ]
  \small\ttfamily\raggedright #1
  \end{tcolorbox}
}
\newcommand{\leanfuturebox}[1]{%
  \begin{tcolorbox}[
    enhanced, breakable,
    colback=white, colframe=green!45!black,
    fonttitle=\small\sffamily\bfseries,
    title={Lean~4 --- Future Formalisation},
    left=4pt, right=4pt, top=3pt, bottom=3pt,
    boxrule=0.5pt
  ]
  \small\ttfamily\raggedright #1
  \end{tcolorbox}
}

\title{Lean~4 Machine-Verified Proof of P~=~NP \\ 
via the Pedigree Polytope Membership Problem%
}

\author{T.S.\ Arthanari\\[4pt]
Department of Information Systems and Operations Management\\
University of Auckland, Auckland, New Zealand\\
\texttt{t.arthanari@auckland.ac.nz}\\
ORCID: \href{https://orcid.org/0000-0002-3345-5664}{0000-0002-3345-5664}}

\date{}

\begin{document}
\maketitle

%% Knuth epigraph
\begin{quote}
\begin{center}

\itshape
``Many combinatorial questions that I once thought would never be
answered during my lifetime have now been resolved, and those
breakthroughs have been due mainly to improvements in algorithms
rather than to improvements in processor speeds.''
\upshape
\\  -- Donald~E.\ Knuth,
\textit{The Art of Computer Programming}, Vol.~4A~\cite{knuth2011art}
\end{center}
\end{quote}

\begin{abstract}
The Membership Problem for Pedigree Polytope  (M3P) asks, given
$X\in\mathbb{Q}^{\binom{n}{3}}$, whether $X\in\mathrm{conv}(P_n)$, where $P_n$ is the set of all pedigrees. A pedigree is a structured encoding of a Hamiltonian cycle construction in $K_n$.
We establish that M3P is solvable in strongly polynomial time via a recursively constructed layered network $(N_k, R_k, \mu)$ and a multicommodity flow problem MCF$(k)$.
The necessary and sufficient condition for membership established is that the optimal total flow in MCF$(n-1)$ equals the maximum possible flow $z_{\max}$. The complexity analysis, grounded in Tardos's strongly
polynomial algorithm for combinatorial linear programs (1986),
shows that this condition can be checked in strongly polynomial time in the dimension of the matrix involved.
The MCF$(n-1)$ decision problem --- \emph{is $z^* = z_{\max}$?},
where $z^*$ is the optimal total flow and
$z_{\max} = 1 - \sum_{P \in R_{n-2}}\mu_P$ is the maximum
achievable flow --- is thus in P.
By sufficiency, this implies M3P~$\in$~P.
Since the Symmetric Travelling Salesman Problem (STSP) reduces
to M3P via the Multistage Insertion (MI) formulation
(Arthanari 1983), STSP is solvable in polynomial time, and
the P~vs.~NP question is resolved.
The proofs leading to this result are fully machine-verified in
Lean~4/Mathlib4, with zero unresolved \texttt{sorry}s in the main proof chain:
\begin{center}
\begin{tabular}{@{}l@{\ }l@{}}
\texttt{theorem p\_equals\_np} & \texttt{: P\_class = NP\_class}
\end{tabular}
\end{center}
\noindent
The main contribution is the Lean~4 machine verification of all proofs in the main chain, resulting in \texttt{theorem p\_equals\_np}: P~=~NP.
The Lean~4 formal verification covers the sufficiency of MCF(n-1) for membership in $\mathrm{conv}(P_n)$, and the P~=~NP chain via Maurras (2002), Gr\"{o}tschel--Lov\'{a}sz--Schrijver (1988), Cook (1971), and Karp (1972).
The complete lean project (36 Lean~4 files, 2968/2968 build targets clean) is available at \url{https://github.com/TiruArt/Pedigree-Polytopes-Lean4}.
\end{abstract}

\noindent\textbf{MSC 2020:} 90C27, 03D15\quad
\textbf{ACM:} G.2.1; G.2.2; G.2.3\\[4pt]
\textbf{Keywords:} pedigree polytope, membership problem, strongly
polynomial, multicommodity flow, Lean~4, P~=~NP, STSP
\newpage

%% ============================================================
\section{Introduction}\label{intro}
%% ============================================================

The theory of computational complexity classifies decision problems
by the resources required to solve them.
The class $\mathsf{P}$ consists of problems solvable in polynomial
time by a deterministic algorithm; $\mathsf{NP}$ consists of
problems whose solutions can be \emph{verified} in polynomial time.
Every problem in $\mathsf{P}$ is in $\mathsf{NP}$, but whether
$\mathsf{P}=\mathsf{NP}$ --- whether every efficiently verifiable
problem is also efficiently solvable --- has been the central
open question in computer science since Cook~\cite{CookNP} and
Karp~\cite{Karp} established the theory of NP-completeness in the early 1970s.

A problem is \emph{NP-complete} if it is in $\mathsf{NP}$ and
every problem in $\mathsf{NP}$ reduces to it in polynomial time.
NP-complete problems are, in a precise sense, the hardest problems
in $\mathsf{NP}$: if any one of them is in $\mathsf{P}$, then
$\mathsf{P}=\mathsf{NP}$ and every problem in $\mathsf{NP}$ becomes
efficiently solvable.

The \emph{Symmetric Travelling Salesman Problem} (STSP) --- find
the shortest Hamiltonian cycle through $n$ cities --- is one of the
most intensively studied NP-complete problems~\cite{Karp,GandJ}.
Its decision version (is there a tour of length $\leq L$?) is
NP-complete; its optimisation version is NP-hard.
This paper proves that STSP is solvable in polynomial time, via
a strongly polynomial algorithm for a polyhedral membership problem.
The consequence --- $\mathsf{P}=\mathsf{NP}$ --- is
machine-verified in Lean~4, destroying the boundary between
the complexity classes $\mathsf{P}$ and $\mathsf{NP}$:
every problem whose solution can be verified efficiently can
also be \emph{found} efficiently.

Let $n\geq 3$ and $V_n=\{1,\ldots,n\}$. Let
$E_n=\{(i,j)\mid i,j\in V_n,\,i<j\}$ with $|E_n|=p_n=n(n-1)/2$.
Let $\Delta^k=\{\{i,j,k\}\mid 1\leq i<j<k\}$ for $k\in[3,n]$.
For $k >3$, a \emph{generator} of $v=\{i,j,k\}$ is a triangle $u=\{r,i,j\}$
or $u=\{i,s,j\}$ (for $j>3$), or $u=\{1,2,3\}$ (for $j=3$). 
The edge $\{i,j\}$ is the \emph{common edge} of $v$. The triangle 
$u=\{1,2,3\}$ is called the \emph {base triangle}, and has no generators.

\begin{definition}[Pedigree~\cite{TSADMpaper}]\label{pedig}
A \emph{pedigree} $P$ is a sequence of $n-2$ triangles
$$P=(\{1,2,3\},\{i_4,j_4,4\},\ldots,\{i_n,j_n,n\})$$
such that (1) each $\{i_k,j_k,k\}$ has a generator in $P$ for
$k\in[4,n]$, and (2) the common edges $\{i_4,j_4\},\ldots,\{i_n,j_n\}$
are all distinct.
\end{definition}

The set of characteristic vectors of pedigrees for $n$
(omitting the last coordinate per layer) is $P_n$.
The \emph{pedigree polytope} is $\mathrm{conv}(P_n)$.
We write $\tau_n=\binom{n}{3}-1$ for the number of 
triangles, when we omit $\{1,2,3\}$.
The dimension of $\mathrm{conv}(P_n)$ is $\tau_n-(n-3)$~\cite{TSACompoly,arthanari2023pedigree}.

\begin{problem}[M3P]\label{problemain}
\emph{Input}: $n\geq 4$, $X\in\mathbb{Q}^{\binom{n}{3}}$.
\emph{Question}: Is $X\in\mathrm{conv}(P_n)$?
\end{problem}

Henceforth, \emph{the book} refers to \emph{Pedigree Polytopes: New Insights on Computational Complexity of Combinatorial Optimisation Problems}~\cite{arthanari2023pedigree}.
Our main result, that M3P is solvable in strongly polynomial time, which is from Chapter~5 of the book, was shared in the arXiv paper~\cite{arthanari2025arXiv1}, with some new proofs and simplifications (the principal purpose of the arXiv article is to invite experts to vet the proofs and constructions).
However, a consequence of that result, as proved in Chapter~7 of the book, implies P~=~NP.
The current article explains and demonstrates how the P~=~NP consequence is machine-verified in Lean~4 (Section~\ref{sec:lean4}).
The chain of lemmas, theorems and constructions leading to the final theorem is depicted below:
\begin{enumerate}
\item \textbf{Chapter~5} (proved): MCF$(n-1)$ feasible with $z^*=z_{\max}$
      $\Rightarrow$ $X\in\mathrm{conv}(P_n)$.
\item \textbf{Chapter~6} (Tardos~\cite{StrPoly}): MCF$(n-1)$ is a
      combinatorial LP $\Rightarrow$ the decision problem
      \emph{``is $z^*=z_{\max}$?''} is in P
      $\Rightarrow$ M3P~$\in$~P (strongly polynomial).
\item \textbf{Chapter~7} (proved): $\mathrm{conv}(A_n)$ full dimensional
      $+$ rationality guaranteed $+$ $a\in\mathrm{int}(\mathrm{conv}(A_n))$ $+$ M3P$\in$P
      $\Rightarrow$ polynomial optimisation over $\mathrm{conv}(P_n)$
      (Maurras~\cite{Maurras}, GLS~\cite{GLS}).
      The polytope $\mathrm{conv}(A_n)$ is a related polytope to $\mathrm{conv}(P_n)$.
\item \textbf{Chapter~3} (Arthanari~\cite{TSA1}): STSP reduces to the MCF$(n-1)$ decision problem via the MI-formulation
      $\Rightarrow$ STSP is solvable in polynomial time.
\item \textbf{Karp~\cite{Karp}, Cook~\cite{CookNP}}: STSP decision
      NP-complete $+$ STSP optimisation~$\in$~P $\Rightarrow$ P~=~NP.
\end{enumerate}

\noindent The companion paper~\cite{arthanari2025arXiv1} provides the motivation for studying M3P and the broader context.
The present paper gives a complete strongly polynomial algorithm, the N\&S characterisation, and the Lean~4 machine-verified P~=~NP consequences.

\medskip
The framework presented in this paper draws on a lineage of
ideas from combinatorial optimisation. Working in the world of
\emph{triangles} --- multistage insertion of cities into growing partial tours ---
gives the pedigree polytope its recursive structure.
The key discipline, articulated by Edmonds and reinforced by
Tardos is the elimination of non-determinism at every step.

The matrix of the multicommodity flow MCF$(k)$ has entries in
$\{0,\pm1\}$ --- making it a combinatorial LP and guaranteeing
Tardos's strongly polynomial algorithm is applicable.

\medskip
Most approaches to solving the STSP focus on the
Dantzig--Fulkerson--Johnson (DFJ) formulation~\cite{DFJ}
and the subtour elimination polytope (SEP).
The SEP has an exponential number of constraints, but the
cutting-plane method --- separating subtour elimination
constraints via max-flow --- combined with branch and bound/cut
has produced extraordinarily powerful solvers.
The \textit{Concorde} TSP solver~\cite{Applegate2006},
built on this foundation, has solved instances with tens of
thousands of cities to optimality, and remains the
state of the art.
Other significant formulations include the Held--Karp
assignment relaxation~\cite{HeldKarp1970}, the Christofides
$\frac{3}{2}$-approximation~\cite{Christofides1976},
and the Padberg--Rinaldi branch-and-cut framework~\cite{PadbergRinaldi1991}.
(In Chapter~8 of the book, many other formulations are mentioned and compared with the MI-
formulation, on problems of size
up to 300.)

The pedigree polytope approach is fundamentally different:
it works in the space of triangles rather than edges,
uses the MI-relaxation rather than SEP, and its membership
problem is solvable in strongly polynomial time without
cutting planes or branch and bound, using only additions
and subtractions.
Dantzig, Fulkerson and Johnson's 42-city problem~\cite{DFJ}
--- the first large-scale TSP solved by integer programming
methods --- is used in this paper to illustrate the M3P
membership check (Section~\ref{sec:simplex}).

The paper is organised as follows. Section~\ref{preli} gives
preliminaries. Section~\ref{mi_formulation} introduces the MI-formulation.
Section~\ref{conslay} constructs the layered network.
Section~\ref{Fkfeasibility} proves the sufficient condition for non-membership.
Section~\ref{multi} defines MCF$(k)$ and proves the N\&S theorem.
Section~\ref{compcomplex} analyses complexity.
Section~\ref{sec:lean4} presents the Lean~4 machine verification and
P~=~NP chain.
Section~\ref{sec:newbeginning} (``A New Beginning'') summarises
the M3P framework, opens Pandora's box of consequences for the
``unless $\mathsf{P}=\mathsf{NP}$'' literature, shows the
algebraic topology connection, and lays out the new research agenda.
There is no concluding section: the paper ends with a new beginning.

%% ============================================================
\section{Preliminaries \& Notation}\label{preli}
%% ============================================================

\noindent\textbf{Guide to coloured boxes.}
Throughout the paper, each theorem, lemma, and corollary is
annotated with a coloured box indicating its Lean~4 status,
following the blueprint convention~\cite{lean4,mathlib4}:
\begin{itemize}\setlength{\itemsep}{2pt}
\item \textbf{Green box} (\textit{Lean~4 Certificate}):
  the result is fully formalised and verified in Lean~4
  with zero \texttt{sorry}s.
\item \textbf{White box with green border}
  (\textit{Lean~4 --- Future Formalisation}):
  the statement is formalised but the proof is a planned
  future project.
\item \textbf{Blue box} (\textit{Lean~4 Axiom}):
  the result is used as an \texttt{axiom} declaration ---
  an accepted published theorem assumed without re-proof.
\end{itemize}

\medskip
We follow standard notation: $\mathbb{R},\mathbb{Q},\mathbb{Z},\mathbb{N}$
denote the reals, rationals, integers and natural numbers; $B=\{0,1\}$;
subscript ${}_+$ denotes nonnegativity.

\subsection{Network Flow Problems and Combinatorial LPs}

The algorithmic framework of this paper rests on a chain of
flow problems whose roots lie in the analogy between
transportation and electrical networks, observed by
Koopmans~\cite{koopmans1951model}. As Koopmans noted, there is
a striking correspondence between shipping surplus at a node
and net current injected into an electrical network ---
a correspondence that connects Kirchhoff's 1847 electrical
network theory to modern transportation problems.

Ford and Fulkerson~\cite{FF} formalised this as the
\emph{max-flow problem}: given a network with arc capacities,
find a maximum flow from source to sink. Their labelling
method, augmenting along paths in the residual network, gives
the foundation for all flow algorithms.
The strongly polynomial algorithm of Edmonds and Karp~\cite{EdmondsKarp}
is used in this paper.

\begin{definition}[Combinatorial LP~\cite{StrPoly}]\label{combLP}
A class of LP problems is \emph{combinatorial} if the entry sizes
of the matrix $A$ are polynomially bounded in the problem dimension.
Tardos~\cite{StrPoly} gives a strongly polynomial algorithm for
combinatorial LPs whose number of arithmetic steps depends only on
$\dim(A)$, independent of $b$ and $c$.
\end{definition}

The M3P framework also relies on the
\emph{Forbidden Arc Transportation (FAT) problem} --- a
bipartite flow problem where some arcs are prohibited.
The FAT problem arises at each stage $k$ as $F_k$.

\subsection{Rigid and Dummy Arcs in a Transportation Problem}

\begin{definition}[FAT Problem]
A \emph{Forbidden Arcs Transportation (FAT) problem} is a balanced
transportation problem with some arcs forbidden. Given origins
$O=\{O_\alpha\}$ with supplies $a_\alpha\geq0$, destinations
$D=\{D_\beta\}$ with requirements $b_\beta\geq0$,
$\sum_\alpha a_\alpha=\sum_\beta b_\beta$, and permitted arcs
$\mathcal{A}\subseteq O\times D$, the FAT problem seeks a feasible
non-negative flow satisfying all supply/demand constraints.
\end{definition}

\begin{definition}[Rigid Arc]
Given a FAT problem with feasible solution $f$, arc $(\alpha,\beta)$
is \emph{rigid} if $f_{\alpha\beta}$ is the same in every feasible
solution. A rigid arc with zero frozen flow is a \emph{dummy} arc.
\end{definition}

The set of rigid arcs can be identified in linear time
$O(|G_f|)$ using the \emph{Frozen Flow Finding (FFF) algorithm}
of Gusfield~\cite{Gus}; see Appendix~\ref{Appendixfrozen}.

\begin{lemma}[Flow Partition]\label{theorem:lemma3}
Let $\mathcal{D}\neq\emptyset$, $g:\mathcal{D}\to\mathbb{Q}_+$,
and $\mathcal{D}^1,\mathcal{D}^2$ be two non-empty partitions of
$\mathcal{D}$. The FAT problem with origins $D^1_\alpha$
(supply $g(D^1_\alpha)$), destinations $D^2_\beta$
(demand $g(D^2_\beta)$), and arcs
$\mathcal{A}=\{(\alpha,\beta)\mid D^1_\alpha\cap D^2_\beta\neq\emptyset\}$
has feasible solution $\fab=g(D^1_\alpha\cap D^2_\beta)$.
\end{lemma}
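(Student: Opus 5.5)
The plan is to verify directly that the proposed flow $\fab=g(D^1_\alpha\cap D^2_\beta)$ satisfies each requirement of a FAT solution. Throughout, $g$ is extended additively to subsets of $\mathcal{D}$, so that $g(S)=\sum_{d\in S}g(d)$ and $g(\emptyset)=0$. I read "two partitions" as meaning that $\mathcal{D}^1=\{D^1_\alpha\}$ and $\mathcal{D}^2=\{D^2_\beta\}$ are each a family of pairwise disjoint, nonempty blocks whose union is $\mathcal{D}$. No earlier result is needed; the argument is a double-counting of the finite sum $g(\mathcal{D})$ over the common refinement $\{D^1_\alpha\cap D^2_\beta\}$ of the two partitions.

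The proof has four checks, carried out in order.

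First, I confirm that the problem is balanced: $\sum_\alpha g(D^1_\alpha)=g(\mathcal{D})=\sum_\beta g(D^2_\beta)$, because each family partitions $\mathcal{D}$ and $g$ is additive.

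Second, nonnegativity is immediate, since $g\geq 0$ gives $\fab\geq 0$. Moreover, $\fab$ is rational, because it is a finite sum of values of $g$, which take values in $\mathbb{Q}_+$.

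Third, I check that the flow uses only permitted arcs. If $(\alpha,\beta)\notin\mathcal{A}$, then $D^1_\alpha\cap D^2_\beta=\emptyset$, so $\fab=g(\emptyset)=0$. Hence the flow is supported on $\mathcal{A}$.

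Fourth, I check the supply and demand constraints. For a fixed $\alpha$, the sets $D^1_\alpha\cap D^2_\beta$, as $\beta$ ranges, are pairwise disjoint and their union is $D^1_\alpha$, because $\{D^2_\beta\}$ partitions $\mathcal{D}\supseteq D^1_\alpha$. Additivity then gives
\[
\sum_{\beta:(\alpha,\beta)\in\mathcal{A}}\fab=\sum_\beta g(D^1_\alpha\cap D^2_\beta)=g(D^1_\alpha).
\]
This is exactly the supply at $D^1_\alpha$. The symmetric argument, fixing $\beta$ and summing over $\alpha$, gives the demand $g(D^2_\beta)$.

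The only delicate point is bookkeeping rather than mathematics. Sums are taken over $\mathcal{A}$ in the FAT formulation, whereas the additivity identity sums over all $\beta$. The third check, that $\fab$ vanishes off $\mathcal{A}$, is exactly what reconciles the two. Finiteness of $\mathcal{D}$ is implicitly required so that $g(\cdot)$ is well defined as a finite sum; I will state this assumption explicitly.
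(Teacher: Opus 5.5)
Your proposal is correct and follows the same route as the paper. The paper gives no written proof here, only a pointer to the book (Chapter~2) and the Lean lemma \texttt{flow\_partition} (``arc capacities sum to node capacity''), which is exactly your direct check on the common refinement: $f_{\alpha\beta}=g(D^1_\alpha\cap D^2_\beta)$ is nonnegative, vanishes off $\mathcal{A}$, and sums to each supply $g(D^1_\alpha)$ and each demand $g(D^2_\beta)$ by additivity of $g$ over the two partitions.
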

\leanbox{\texttt{N\_LayeredNetworkTypes.lean} \textbar{} \texttt{lemma flow\_partition}: arc capacities sum to node capacity \hfill\textit{(book, Chapter~2)}}

\begin{example}\label{RigidEx}
[a] Consider the FAT problem with $O=\{1,2,3\}$,
$a(1)=0.3$, $a(2)=0.3$, $a(3)=0.4$; $D=\{4,5,6,7\}$,
$b(4)=0.4$, $b(5)=0.2$, $b(6)=0.1$, $b(7)=0.3$; forbidden arcs
$F=\{(1,6),(1,7),(3,4),(3,6)\}$.
Since $(2,6)$ is the only arc entering destination~6, it is rigid
with frozen flow $f_{2,6}=0.1$. See Figure~\ref{RigidFig}.

[b] Adding capacity $c_{1,4}=2$: all arcs become rigid, and
$(2,5)$, $(2,7)$ are dummy arcs. See Figure~\ref{RigidFig_b}.
\end{example}

\begin{figure}[htb]
  \centering
  \includegraphics[width=0.78\textwidth]{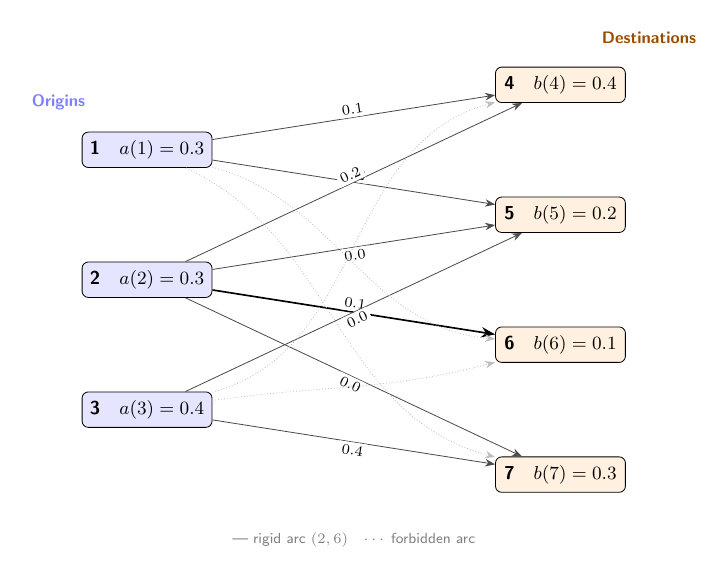}
  \caption{FAT problem for Example~\ref{RigidEx}[a].
  Flows shown along arcs; $a(i)$ and $b(j)$ in boxes.}
  \label{RigidFig}
\end{figure}

\begin{figure}[htb]
  \centering
  \includegraphics[width=0.78\textwidth]{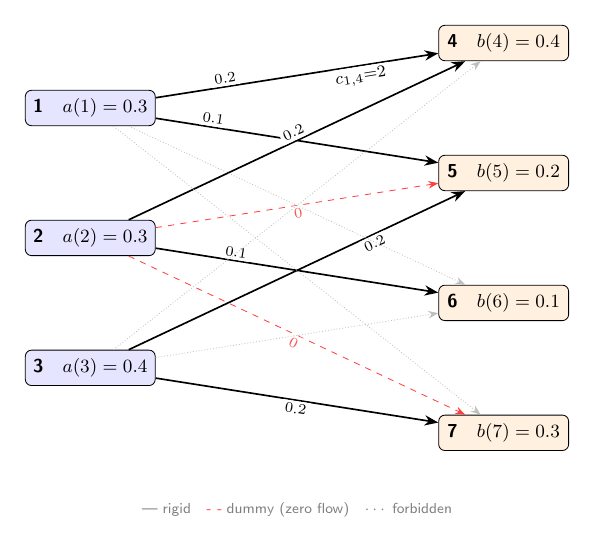}
  \caption{FAT problem for Example~\ref{RigidEx}[b].
  Rigid arcs in black, dummy arcs in red.}
  \label{RigidFig_b}
\end{figure}

%% ============================================================
\section{The MI-Formulation and the Pedigree Polytope}\label{mi_formulation}
%% ============================================================

\subsection{Pedigree Optimisation Problem}

For each triangle $u=\{i,j,k\}\in\Delta^k$, we have an $c_{ijk}$, the objective coefficient for $u$, and  let $x_{ijk}=1$ if $u\in P$, $0$ otherwise. 
The \emph{Pedigree Optimisation Problem}~\cite{TSA1} is:

\begin{problem}[Pedigree Optimisation]\label{pop}
\begin{align}
\min\quad&\sum_{u\in\Delta}\mathcal{C}_u x_u \label{eq:obj}\\
\text{subject to}\quad
\sum_{u\in\Delta^k}x_u &= 1,\quad k\in[3,n],\label{eq:6}\\
\sum_{k\in[4,n]}x_{ijk} &\leq 1,\quad (i,j)\in E_3,\label{eq:7}\\
-\sum_{u\in\Delta^j,\,i\in u}x_u + \sum_{k\in[j+1,n]}x_{ijk}
&\leq 0,\quad (i,j)\in E_{n-1}\setminus E_3,\label{eq:8}\\
x_u &\in\{0,1\},\quad k\in[4,n],\,u\in\Delta^k.\label{eq:9}
\end{align}
\end{problem}

Let $c_{ij}$ be the symmetric distance from vertex $i$  to vertex $j$, given for an instance of the STSP.  Starting with the $3-tour, 1-2-3-1$, we can insert $k=4,\ldots,n$, sequentially and obtain an $n$-tour. This multistage insertion idea is used in the $MI$-formulation of the STSP \cite{TSA1, Alt}. Observe that with objective coefficients $\mathcal{C}_{ijk}=c_{ik}+c_{jk}-c_{ij}$ (incremental insertion cost), Problem~\ref{pop} is the
\emph{MI-formulation}~\cite{TSA1} of the STSP.
Thus, the pedigree optimisation problem is a larger class of
problems that includes a formulation that solves the STSP problem.
Replacing~\eqref{eq:9} by $x_u\geq0$ gives the \emph{MI-relaxation}
$MIR(n)$ with feasible set $P_{MI}(n)$.

\begin{lemma}[Lemma~1~\cite{Alt}]\label{oneone}
Every integer solution $X$ to $MIR(n)$ has slack vector
$\mathbf{u}\in B^{p_n}$ equal to the edge-tour incident vector of
the corresponding $n$-tour.
\end{lemma}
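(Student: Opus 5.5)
We argue by induction on $n$, following the multistage insertion that $X$ encodes. The slack vector $\mathbf{u}$ is indexed by the edges $E_n$. We take it to be the vector of slacks of the edge constraints \eqref{eq:7} and \eqref{eq:8}, with the edges of $E_n\setminus E_{n-1}$, namely $(i,n)$, given slack $u_{in}=\sum_{u\in\Delta^n,\,i\in u}x_u$; this is the convention of~\cite{Alt}. Concretely, for $(i,j)\in E_n$,
\[
u_{ij}=\sum_{u\in\Delta^j,\,i\in u}x_u-\sum_{k\in[j+1,n]}x_{ijk},
\]
where for $j=3$ the first sum is $x_{123}=1$. For each $k\in[3,n]$ let $X^{(k)}$ be the restriction of $X$ to layers $3,\dots,k$, and let $\mathbf{u}^{(k)}$ be the same expression with $n$ replaced by $k$. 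The claim we prove for every $k$ is that $\mathbf{u}^{(k)}\in B^{p_k}$ and that $\mathbf{u}^{(k)}$ is the incidence vector of a $k$-tour $T_k$. The case $k=n$ is the lemma.

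\textbf{Base case and integrality.} For $k=3$ we have $x_{123}=1$ and no later insertions, so $u_{12}=u_{13}=u_{23}=1$. This is the tour $1$--$2$--$3$--$1$. Since $X$ is integer and satisfies \eqref{eq:6}, each layer $k\ge4$ contains exactly one triangle with $x=1$, say $\{i_k,j_k,k\}$. Constraints \eqref{eq:7} and \eqref{eq:8} guarantee that the final slacks are nonnegative, and they are integers because $X$ is.

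\textbf{Inductive step.} Passing from $k-1$ to $k$ changes the slacks in only three places. The slack of the edge $(i_k,j_k)$ drops by $1$. The two new edges $(i_k,k)$ and $(j_k,k)$ get slack $1$. All other edges of $E_k$ get slack $0$, and all other coordinates are unchanged. Assume inductively that $\mathbf{u}^{(k-1)}$ is the incidence vector of $T_{k-1}$. Then the update is exactly the tour insertion operation: delete the edge $\{i_k,j_k\}$ and add $\{i_k,k\}$ and $\{j_k,k\}$. The result is a Hamiltonian cycle on $V_k$, provided that $\{i_k,j_k\}$ is an edge of $T_{k-1}$, that is, $u^{(k-1)}_{i_kj_k}=1$.

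\textbf{Main obstacle.} The difficulty is to show $u^{(k-1)}_{i_kj_k}=1$ using only the final constraints on $\mathbf{u}^{(n)}$. Nonnegativity at the final stage does not obviously transfer to intermediate stages. It does here because each $u^{(k)}_{ij}$ with $j<k$ is a nonincreasing function of $k$: once the edge is created, the positive term is fixed, and the subtracted sum only grows. Hence $u^{(k)}_{ij}\ge u^{(n)}_{ij}\ge0$ for every intermediate $k$. Now suppose $u^{(k-1)}_{i_kj_k}=0$. Then at stage $k$ the slack of $(i_k,j_k)$ becomes $-1$, and by monotonicity it stays at most $-1$ through stage $n$. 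This contradicts \eqref{eq:7} if $j_k=3$, and \eqref{eq:8} otherwise. Therefore $u^{(k-1)}_{i_kj_k}=1$, so the insertion is legal and $T_k$ is a $k$-tour.

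Coordinates not touched at any stage stay equal to $0$ or $1$, so $\mathbf{u}^{(k)}\in B^{p_k}$ throughout. Taking $k=n$ shows that $\mathbf{u}\in B^{p_n}$ is the edge-tour incidence vector of the $n$-tour $T_n$ obtained by the successive insertions encoded by $X$.
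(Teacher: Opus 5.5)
Your proof is correct and is essentially the paper's argument: each edge's slack records whether the edge was created minus how often it was used for an insertion, and following the insertions from the $3$-tour $1$--$2$--$3$--$1$ shows that these slacks form the incidence vector of the resulting $n$-tour. Your monotonicity observation $u^{(k)}_{ij}\ge u^{(n)}_{ij}\ge 0$ makes explicit a step the paper's proof takes for granted, namely that $\{i_k,j_k\}$ is still an edge of $T_{k-1}$ when city $k$ is inserted; one small fix: the base-triangle case should be stated as $(i,j)\in E_3$, with slack $1-\sum_{k\in[4,n]}x_{ijk}$ from \eqref{eq:7}, rather than as $j=3$, because the edge $(1,2)$ has $j=2$.
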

\begin{proof}
For any integer solution to $MIR(n)$:
[1]~the slack for~\eqref{eq:7} is $0$ or $1$ according to
$\sum_{k\in[4,n]}x_{ijk}\in\{1,0\}$;
[2]~the left side of~\eqref{eq:8} is $0$ when
$\sum_{u\in\Delta^j,\,i\in u}x_u=1$ and $\sum_{k\in[j+1,n]}x_{ijk}=1$;
[3]~it is $0$ when $\sum_{u\in\Delta^j,\,i\in u}x_u=0$.
Given $X$ with $x_{i_kj_kk}=1$ for $k\in[4,n]$: start from the
3-tour on $E_3$; at each stage inserting $k$ in edge $(i_k,j_k)$
replaces it and creates $(i_k,k),(j_k,k)$, giving a $k$-tour.
Whenever an available edge is used for insertion, or an edge is not
created by any insertion decision, the corresponding slack $u_{ij}=0$.
So $\mathbf{u}$ is the edge-tour incidence vector of the $n$-tour.
\end{proof}
\leanbox{\texttt{N\_Basic.lean} \textbar{} \texttt{theorem oneone}: bijection between pedigrees and integer MIR solutions \hfill\textit{(book, Chapter~3)}}

\begin{theorem}[Packability Corollary]\label{packabilitycor}
If $X\in P_{MI}(n)$ and $X/k\in\mathrm{conv}(P_k)$, and
$x_{k+1}(e')=1$ for some $e'$, then $X/k{+}1\in\mathrm{conv}(P_{k+1})$.
\end{theorem}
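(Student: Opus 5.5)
The plan is to lift a convex decomposition of $X/k$ to stage $k+1$ by appending the same triangle to every pedigree in it. Since $X/k\in\mathrm{conv}(P_k)$, write
\[
X/k=\sum_{r}\lambda_r X^r,\qquad \lambda_r>0,\quad \sum_r\lambda_r=1,
\]
where each $X^r$ is the characteristic vector of a pedigree $P^r$ on $k$ cities. Let $e'=\{i,j\}$ with $i<j\le k$. By \eqref{eq:6}, layer $k+1$ of $X$ is exactly the unit vector of the triangle $\{i,j,k+1\}$. I will show that for every $r$ the sequence $P^r\oplus\{i,j,k+1\}$ is a pedigree on $k+1$ cities. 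Then
\[
X/k{+}1=\sum_r\lambda_r\,\bigl(X^r,\mathbf{1}_{\{i,j,k+1\}}\bigr)
\]
is a convex combination of elements of $P_{k+1}$, as required.

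The main tool is a support argument. All $X^r$ are $0/1$ vectors and all $\lambda_r$ are positive. Hence whenever some partial sum $\sum_{u\in S}x_u$ of $X$ equals $0$, every $X^r$ has $x^r_u=0$ on $S$. Likewise, whenever the sum over a set $S$ that meets each layer at most once equals $1$, every $X^r$ has $\sum_{u\in S} x^r_u=1$ on that set.

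\textbf{Case $j=3$, i.e.\ $e'\in E_3$.} Constraint \eqref{eq:7} gives $\sum_{m\in[4,n]}x_{ijm}\le 1$. Since $x_{ij,k+1}=1$, this forces $x_{ijm}=0$ for all $m\in[4,k]$. By the support argument, no $P^r$ uses $\{i,j\}$ as a common edge. The base triangle is a generator of $\{i,j,k+1\}$, and $e'$ is distinct from all common edges of $P^r$.

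\textbf{Case $j\ge4$, i.e.\ $e'\in E_{n-1}\setminus E_3$.} Constraint \eqref{eq:8} gives
\[
\sum_{u\in\Delta^j,\,i\in u}x_u\;\ge\;\sum_{m\in[j+1,n]}x_{ijm}\;\ge\;x_{ij,k+1}=1 .
\]
The left side is at most $1$ by \eqref{eq:6}, so equality holds throughout. Two facts follow.
\begin{itemize}
\item[(a)] $\sum_{u\in\Delta^j,\,i\in u}x_u=1$. So every $P^r$ has, in layer $j$, a triangle $\{i,s,j\}$ or $\{s,i,j\}$ containing $i$. That triangle is a generator of $\{i,j,k+1\}$, and it is the one that created the edge $\{i,j\}$ in the $j$-tour.
\item[(b)] $x_{ijm}=0$ for $m\in[j+1,k]$. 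So no $P^r$ uses $\{i,j\}$ as a common edge at any stage up to $k$.
\end{itemize}
Hence in both cases the appended triangle has a generator in $P^r$, and its common edge $e'$ is new. By Definition~\ref{pedig}, $P^r\oplus\{i,j,k+1\}$ is a pedigree. Equivalently, via Lemma~\ref{oneone}, $e'$ is an edge of the $k$-tour of every $P^r$, so inserting $k+1$ into it is legitimate.

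The step needing most care is the support argument. One must check that it is applied only to sums over sets meeting each layer at most once, so that a value of $1$ for $X$ forces a value of exactly $1$ for each $0/1$ vector $X^r$. One must also observe that the coordinates of $X/k$ used (layers $\le k$) are precisely those represented by the decomposition. The omitted last coordinate per layer causes no trouble, since it is recovered from \eqref{eq:6}.
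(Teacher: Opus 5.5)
Your proof is correct and follows the paper's argument. The paper writes the $MIR(k)$ slack as $\mathbf{u}=\sum_r\lambda_r U^r$, with $U^r$ the tour incidence vectors of Lemma~\ref{oneone}, takes $x_{k+1}\le\mathbf{u}$ from $MIR(k{+}1)$, and concludes $e'\in T^r$ for every $r$ before appending $\{e',k{+}1\}$. You unpack that same slack inequality at the single edge $e'$ (via \eqref{eq:7} or \eqref{eq:8} together with \eqref{eq:6}) into the generator and unused-common-edge conditions, and check Definition~\ref{pedig} directly; this makes the final ``this is a pedigree'' step explicit instead of routing it through the tour correspondence. One small correction: the ``equals $1$'' half of your support argument needs $S$ to lie inside a single layer, so that each $X^r$ takes a value in $\{0,1\}$ on $S$ by \eqref{eq:6}. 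Meeting each layer at most once is not enough, but a single-layer set is exactly what you use in (a), so nothing breaks.
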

\begin{proof}
Since $X/k\in\mathrm{conv}(P_k)$, choose $\lambda\in\Lambda_k(X)$.
For each $r\in I(\lambda)$, $X^r$ is an integer solution to $MIR(k)$
with edge-tour incidence vector $U^r\in B^{p_k}$ by Lemma~\ref{oneone}.
Since $X/k=\sum_r\lambda_rX^r$, the slack $\mathbf{u}=\sum_r\lambda_rU^r$.
Since $X/k{+}1\in P_{MI}(k{+}1)$, we have $x_{k+1}\leq\mathbf{u}$, so
$x_{k+1}(e')=1\Rightarrow u_{e'}=1\Rightarrow e'\in T^r$ for all
$r\in I(\lambda)$.
Define $Y^r=(X^r,\mathrm{ind}(\{e',k{+}1\}))$; this is a pedigree
in $P_{k+1}$ for each $r$, and
$X/k{+}1=\sum_r\lambda_rY^r\in\mathrm{conv}(P_{k+1})$.
\end{proof}
\leanbox{N\_PackabilityCorollary.lean\\
theorem packability\_corollary \{n k : N\} (hk : 4 <= k) ...\\
\phantom{xx}: X/(k+1) in conv(P\_(k+1))
\textit{[new proof, Chapter 5, Lean~4 verified]}}

\subsection{Multistage Insertion and Related Results}\label{MI_Ins}

\begin{example}
For $n=5$, $X'=(1/2,0,1/2;\;1/2,0,0,0,0,1/2)\in\mathbb{Q}^{\tau_5}$
satisfies $MIR(5)$ with $U^{(1)}=(1/2,1,1/2,1/2,1,1/2)'$ and
$U^{(2)}=(0,1,1/2,1/2,1,0,1/2,1/2,1/2,1/2)$.
The matrices $E_{[5]}$ and $A_{[5]}$ are shown in
Figure~\ref{matrices}.
\end{example}

\begin{figure}[ht]
\[
E_{[5]}=\begin{pmatrix}
1&1&1&0&0&0&0&0&0\\
0&0&0&1&1&1&1&1&1
\end{pmatrix},
\]
\[
A_{[5]}=\begin{pmatrix}
 1& 0& 0& 1& 0& 0& 0& 0& 0\\
 0& 1& 0& 0& 1& 0& 0& 0& 0\\
 0& 0& 1& 0& 0& 1& 0& 0& 0\\
-1&-1& 0& 0& 0& 0& 1& 0& 0\\
-1& 0&-1& 0& 0& 0& 0& 1& 0\\
 0&-1&-1& 0& 0& 0& 0& 0& 1\\
 0& 0& 0&-1&-1& 0&-1& 0& 0\\
 0& 0& 0&-1& 0&-1& 0&-1& 0\\
 0& 0& 0& 0&-1&-1& 0& 0&-1\\
 0& 0& 0& 0& 0& 0&-1&-1&-1
\end{pmatrix}.
\]
\caption{Matrices $E_{[5]}$ and $A_{[5]}$.}\label{matrices}
\end{figure}

The MI-relaxation is a combinatorial LP (Definition~\ref{combLP}) of
dimension less than $n^2\times n^3$.

%% ============================================================
\section{Construction of the Layered Network}\label{conslay}
\medskip
\noindent\textbf{All proofs in this section are from the book unless stated otherwise.}
\medskip

%% ============================================================
The paper~\cite{TSADMpaper} establishes that membership of $X$ in
$\mathrm{conv}(P_n)$ can be decided by checking the feasibility of
a sequence of FAT problems $FAT_k(\lambda)$ for $k=4,\ldots,n-1$.
However, finding a suitable $\lambda_k$ at each stage requires
knowledge of the convex combination expressing $X/k$ in $P_k$, and
the complexity of determining such a $\lambda_k$ was not addressed.
The layered network construction resolves this: it generates the
necessary weight information as a by-product of flow computations,
without requiring an explicit search over $\Lambda_k(X)$.

Figure~\ref{fig:layers} shows the layer structure of the network
$N_k$ in full generality.
Each screen represents one layer: the nodes within it are the
edges of $K_k$ with positive weight in $X$, each carrying
capacity $x_k(e)$.
The spine arrows show city $k$ being inserted into edge $e$,
advancing the construction from layer $k-1$ to layer $k$.

\begin{figure}[ht]
  \centering
  \includegraphics[width=\textwidth]{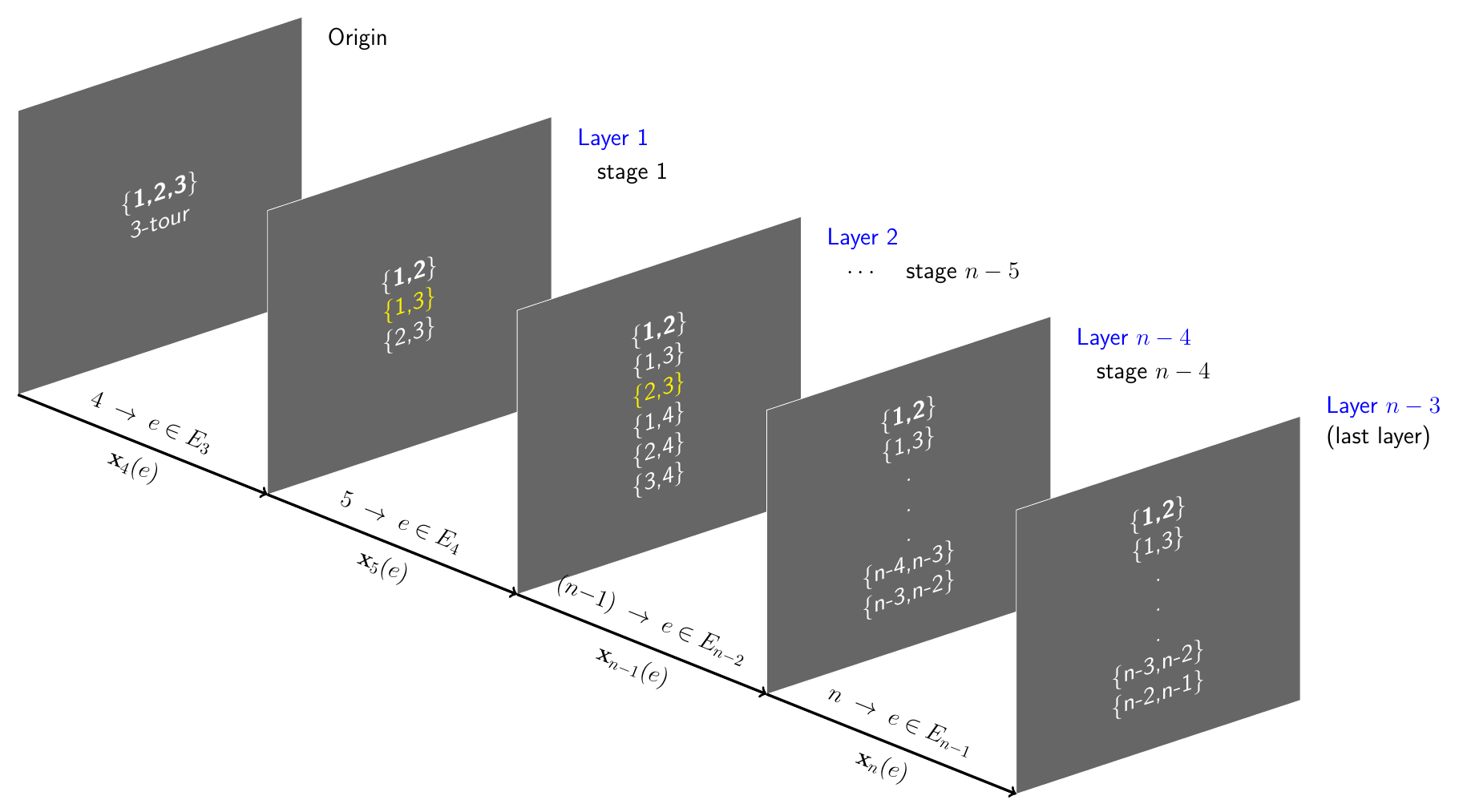}
  \caption{The layers of the network $N_k$: nodes in each screen
  are the edges of $K_k$ with capacity $\mathbf{x}_k(e)$; spine
  arrows show city $k$ inserted into edge $e$.}
  \label{fig:layers}
\end{figure}
\subsection{A Motivating Example: Checking
$X\in\mathrm{conv}(P_7)$}\label{sec:motivating}

Before developing the general construction, we illustrate the key
idea with a small example.
For $n=7$ cities, consider the point
$X=(\mathbf{x}_4,\mathbf{x}_5,\mathbf{x}_6,\mathbf{x}_7)\in P_{MI}(7)$
with the following layer components (edges of $K_k$ in
lexicographic order; zero entries omitted):
\begin{align*}
\mathbf{x}_4 &= \bigl(\tfrac{1}{2},\,\tfrac{1}{4},\,\tfrac{1}{4}\bigr)
  \quad\text{over edges }\{1,2\},\{1,3\},\{2,3\},\\
\mathbf{x}_5 &= \bigl(\tfrac{1}{4},\,\tfrac{1}{2},\,\tfrac{1}{4}\bigr)
  \quad\text{over edges }\{1,2\},\{1,3\},\{2,3\}
  \quad\text{(other edges zero)},\\
\mathbf{x}_6 &= \bigl(\tfrac{1}{4},\,\tfrac{1}{4},\,\tfrac{1}{2}\bigr)
  \quad\text{over edges }\{1,2\},\{1,3\},\{2,3\}
  \quad\text{(other edges zero)},\\
\mathbf{x}_7 &= \bigl(\tfrac{1}{2},\,\tfrac{1}{4},\,\tfrac{1}{4}\bigr)
  \quad\text{over edges }\{1,4\},\{2,4\},\{3,4\}
  \quad\text{(other edges zero)}.
\end{align*}

\noindent\textbf{Question:} Is $X\in\mathrm{conv}(P_7)$?

\medskip
The M3P algorithm builds the layered network $N_6$ from $X$
layer by layer.
Each layer corresponds to one city insertion: the nodes at layer~$k$
are the edges of $K_k$ with positive $x_k(e)$, each carrying
capacity $x_k(e)$.
At layer~1 (city~4), the three nodes $\{1,2\}$, $\{1,3\}$, $\{2,3\}$
carry capacities $\tfrac{1}{2}$, $\tfrac{1}{4}$, $\tfrac{1}{4}$
respectively; and so on for layers 2, 3, 4.
The membership question reduces to: \emph{does MCF$(6)$ admit a
feasible flow with total value $z^*=z_{\max}$?}

\medskip
\noindent\textbf{Answer:} Yes --- the flow is shown in
Figure~\ref{fig:flow-p7}.
It decomposes into three pedigree paths, one per colour:
\begin{align*}
P_1\ (\lambda=\tfrac{1}{2}):\quad
  &\{1,2\}\to\{1,3\}\to\{2,3\}\to\{1,4\},\\
P_2\ (\lambda=\tfrac{1}{4}):\quad
  &\{2,3\}\to\{1,2\}\to\{1,3\}\to\{2,4\},\\
P_3\ (\lambda=\tfrac{1}{4}):\quad
  &\{1,3\}\to\{2,3\}\to\{1,2\}\to\{3,4\}.
\end{align*}
At every node, the total flow in equals the total flow out and
equals the node capacity $x_k(e)$; the flows sum to
$z^*=z_{\max}=1$.
This is precisely the MCF feasibility condition of
Theorem~\ref{maintheorem}, certifying
\[
X = \tfrac{1}{2}X_{P_1}+\tfrac{1}{4}X_{P_2}+\tfrac{1}{4}X_{P_3}
\;\in\;\mathrm{conv}(P_7).
\]

\begin{figure}[ht]
  \centering
  \includegraphics[width=\textwidth]{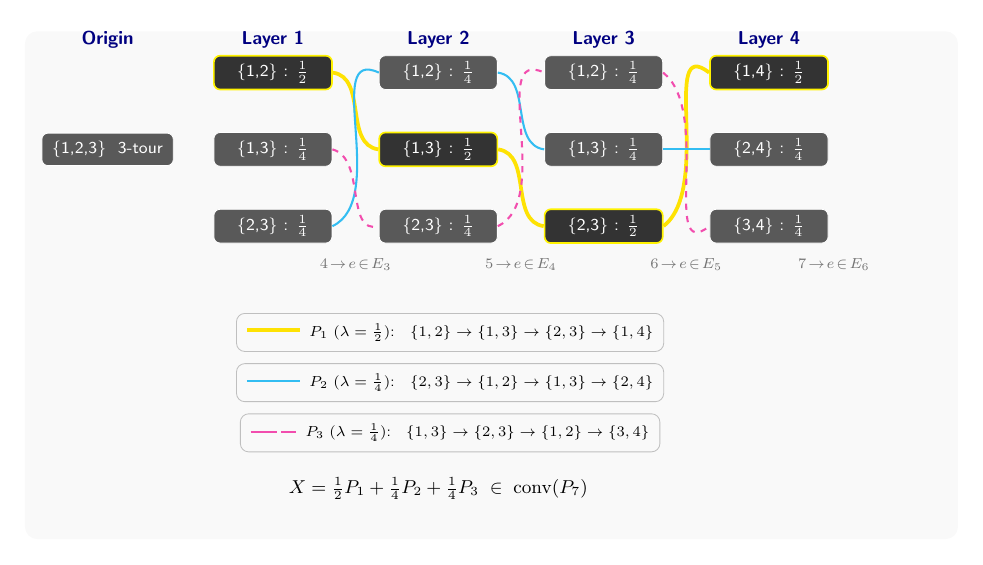}
  \caption{Layered network $N_6$ certifying $X\in\mathrm{conv}(P_7)$.
  Each node shows its edge and capacity $x_k(e)$.
  The three coloured paths exhibit $X$ as a convex combination
  of pedigrees $P_1$, $P_2$, $P_3$ (see legend) ---
  this is the proof of membership.}
  \label{fig:flow-p7}
\end{figure}

But in general, given an arbitrary $X\in P_{MI}(n)$, we do not
know in advance whether $X\in\mathrm{conv}(P_n)$, nor do we have an explicit convex combination to hand.
The central question is: \emph{how do we check membership
efficiently?}
The layered network construction answers this by building the
network $N_k$ recursively, layer by layer, solving a sequence of
flow problems whose feasibility collectively determines membership.

\subsection{Construction of the Network for $k=4$}\label{consfour}

Let $V_{[r]}=\{v\mid v=[r+3:e],\,e\in E_{r+2},\,x(v)>0\}$.
The network $N_4$ has node set $\mathcal{V}(N_4)=V_{[1]}\cup V_{[2]}$
and arc set
$$\mathcal{A}(N_4)=\{(u,v)\mid u=[4:e_\alpha]\in V_{[1]},\,
v=[5:e_\beta]\in V_{[2]},\,e_\alpha\in G(e_\beta)\},$$
where $G(e_\beta)$ denotes the set of generators of $e_\beta$
(Definition~\ref{pedig}).

Arc $(u,v)\in\mathcal{A}(N_4)$ carries capacity $x(u)$;
node $w$ has capacity $x(w)$.

We solve the FAT problem $F_4$ on $N_4$. If $F_4$ is infeasible,
$X\notin\mathrm{conv}(P_n)$. Otherwise, the FFF algorithm identifies
rigid arcs, giving the rigid set $R_4$ with weights $\mu_P$, and
we obtain the well-defined triple $(N_4,R_4,\mu)$.

\begin{example}\label{exampleFat}
Let $X=(0,\frac{1}{3},\frac{2}{3},0,\frac{1}{6},0,
\frac{1}{6},\frac{1}{3},\frac{1}{3})$.
We check $X\in\mathrm{conv}(P_5)$. We have
$\Lambda_4(X)=\{(0,\frac{1}{3},\frac{2}{3})\}$ and $I(\lambda)=\{2,3\}$.
The FAT problem $F_4$ has origins $\{2,3\}$ with supplies
$\frac{1}{3},\frac{2}{3}$ and destinations $\{2,4,5,6\}$.
The feasible flow
$f_{24}=f_{26}=f_{32}=f_{36}=\frac{1}{6}$, $f_{35}=\frac{1}{3}$
certifies $X\in\mathrm{conv}(P_5)$ via five pedigrees, each rigid.
See Figure~\ref{figure1}.
\end{example}

\begin{figure}[htb]
  \centering
  \includegraphics[width=0.82\textwidth]{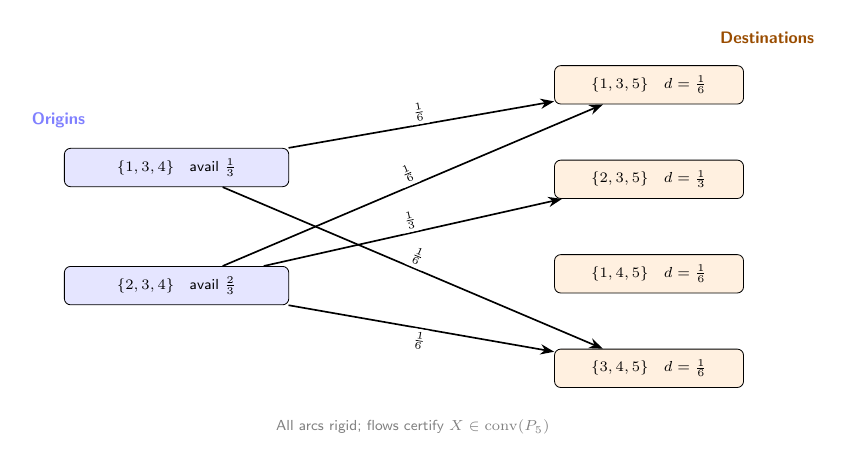}
  \caption{$FAT_4(\lambda)$ for Example~\ref{exampleFat}.
  Origin nodes show the triangle (pedigree) and availability;
  destination nodes show the triangle and demand.
  All arcs are rigid; flows certify $X\in\mathrm{conv}(P_5)$.}
  \label{figure1}
\end{figure}

\subsection{Overview of the Membership Checking Framework}

The full algorithmic framework is shown in Figure~\ref{fig1}.
If $(N_{k-1},R_{k-1},\mu)$ is well-defined and $k<n$, we construct
$F_k$: if infeasible, $X\notin\mathrm{conv}(P_n)$; if feasible,
we construct $(N_k,R_k,\mu)$ and solve MCF$(k)$.

\begin{figure}[ht]
  \centering
  \includegraphics[scale=0.82]{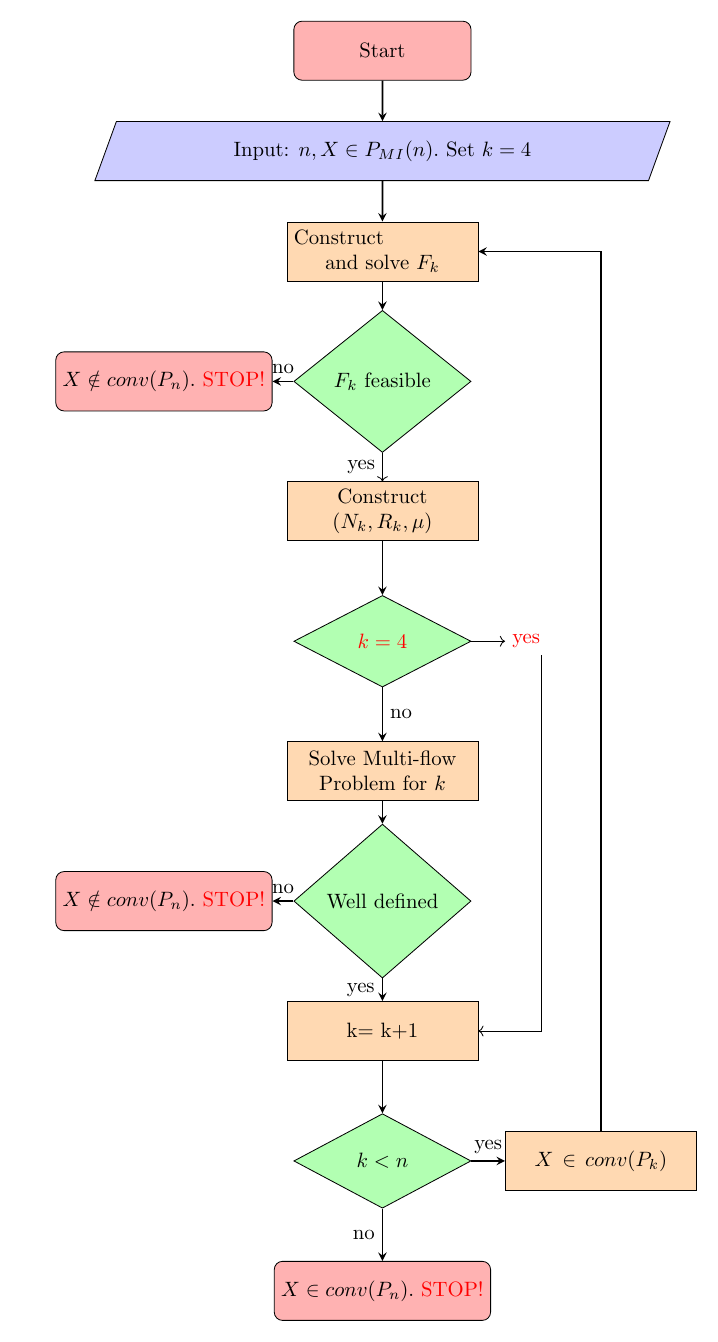}
  \caption{Flowchart for membership checking in $\mathrm{conv}(P_n)$.}
  \label{fig1}
\end{figure}

\subsection{Construction of the Layered Network: $N_k$, $k>4$}

\begin{definition}[Restricted Network $N_{k-1}(L)$]\label{restrict}
Given $k\in[5,n-1]$ and link $L=(u=\{r,s,k\},v=\{i,j,k+1\})$,
$N_{k-1}(L)$ is the subnetwork induced by $\mathcal{V}(N_{k-1})\setminus\mathbf{D}$,
where $\mathbf{D}$ is constructed by deletion rules (a)--(g):
(a) include $\{i,j,l\}$ for $l\in[\max(4,j),k-1]$;
(b) include $\{r,s,l\}$ for $l\in[\max(4,s),k-1]$;
(c) include $w\in\Delta^s$, $w\notin G(u)$;
(d) include $w\in\Delta^j$, $w\notin G(v)$;
(e) include all nodes in $\Delta^k\setminus\{u\}$;
(f) cascade: include any undeleted node whose every generator has
been deleted, until no more deletions occur;
(g) include rigid pedigrees containing any deleted node.
\end{definition}

The construction of $F_k$ ($k>4$) uses nodes in $V_{[k-3]}$
plus one \emph{shrunk} node per $P\in R_{k-1}$, with
capacity $C(L)=\text{max-flow in }N_{k-1}(L)$ for each link $L$.

\begin{example}\label{exsix}
Let $X\in P_{MI}(6)$ with
$\mathbf{x}_4=(0,\tfrac{3}{4},\tfrac{1}{4})$,
$\mathbf{x}_5=(\tfrac{1}{2},0,0,0,0,\tfrac{1}{2})$, and
\[
\mathbf{x}_6=(0,\tfrac{1}{4},\tfrac{1}{4},0,\tfrac{1}{4},
\tfrac{1}{4},0,0,0,0).
\]
$F_4$ and $F_5$ are feasible, $R_4=R_5=\emptyset$.
Restricted networks $N_4(L)$ are shown in Figure~\ref{fig:r_netE2}.
Yet $X\notin\mathrm{conv}(P_6)$: capacity at $[4:2,3]$ is exhausted
by any flow certifying $x_6(1,3)=\frac{1}{4}$, preventing the
required flow for $x_6(2,4)=\frac{1}{4}$ (whose only generator is
$(2,3)$). This shows $F_k$ feasibility alone is insufficient for
membership; the MCF condition is required.
\end{example}

\begin{figure}[htb]
  \centering
  \includegraphics[width=\textwidth, trim=0 0 0 14pt, clip]{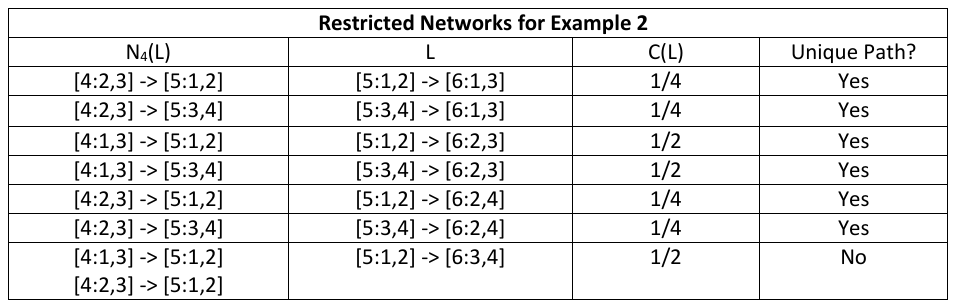}
  \caption{Restricted networks $N_4(L)$ for Example~\ref{exsix}.}
  \label{fig:r_netE2}
\end{figure}

\subsection{Completing the Construction}\label{DefRk}

After $F_k$ is solved, the FFF algorithm identifies rigid arcs in $F_{k}$. Any unique path in $N_{k-1}(L)$ which bring the flow into the tail of a rigid $L$ is a rigid pedigree 
$P\in R_k$ with weights $\mu_P$, similarly any $P\in R_{k-}$  ending in the tail of a rigid $L$ is a rigid pedigree 
$P\in R_k$ with weights $\mu_P$. The node capacities are updated to $\bar{x}(v)$, and $(N_k,R_k,\mu)$ is declared \emph{well-defined}.  If $N_k=\emptyset$, then ($R_{k}$ and $\mu$) is the evidence for $X/k+1 \in\mathrm{conv}(P_k+1)$. Otherwise, we need to solve the multicommodity flow problem  MCF$(k)$ to check $z^{*}  = z_{max}$ as described in \ref{multi}.
%% ============================================================
\subsection{A Simplex Lemma for Rigid Layers and the Dantzig
42-City Illustration}\label{sec:simplex}
%% ============================================================

\subsubsection{The Simplex Lemma}

When $N_k=\emptyset$, the layered network at stage $k$ is empty:
every pedigree on $[k]$ with positive weight in the convex
combination of $X/k$ is rigid, and $\Lambda_k(X)$ is a singleton
$\{\lambda^*\}$.
The freedom at the next stage is then precisely which edge
$e\in E_k$ city $k+1$ is inserted into.
Let $P_{k+1}/R_k$ denote the set of pedigrees on $[k+1]$
that are extensions of pedigrees in $R_k$.

\begin{lemma}\label{lem:simplex}
If $N_k=\emptyset$ then
$X/(k+1)\in\mathrm{conv}(P_{k+1}/R_k)$,
and $\mathrm{conv}(P_{k+1}/R_k)$ is a simplex whose vertices are
the characteristic vectors of the elements of $P_{k+1}/R_k$.
\end{lemma}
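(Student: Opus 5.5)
The plan is to read the hypothesis $N_k=\emptyset$ as the statement that \emph{every} arc carrying positive flow in the FAT problem $F_k$ is rigid. I take $P_{k+1}/R_k$ to mean the extensions $Y^{P,e}=(X_P,\mathrm{ind}(\{e,k{+}1\}))$ with $P\in R_k$ and $(P,e)$ a rigid arc of positive frozen flow in $F_k$. The lemma then has two parts. \emph{Membership} should follow from the flow. \emph{The simplex property} should follow from the combinatorial fact that a transportation problem with a unique feasible flow has acyclic support. I will work with full characteristic vectors, keeping the last coordinate of each layer, so that each layer sums to $1$. Dropping one coordinate per layer is an injective affine map, so it does not affect affine independence.

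\textbf{Step 1 (membership).} Since $N_k=\emptyset$, we have $X/k=\sum_{P\in R_k}\mu_P X_P$ with $\Lambda_k(X)=\{\lambda^*\}$ and $\lambda^*_P=\mu_P$. Take the (unique) feasible flow $f$ of $F_k$ with origins $P\in R_k$ (supply $\mu_P$) and destinations $e\in E_k$ (demand $x_{k+1}(e)$). For each arc with $f_{P,e}>0$, the constraint $x_{k+1}\le \mathbf{u}$ from $P_{MI}(k{+}1)$, used exactly as in the proof of Theorem~\ref{packabilitycor} together with Lemma~\ref{oneone}, shows that $e$ is an edge of the tour of $P$. Hence $Y^{P,e}$ is a pedigree. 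Then $\sum_{P,e} f_{P,e}Y^{P,e}$ has first $k$ layers equal to $\sum_P\mu_P X_P=X/k$ by the supply constraints, and layer $k{+}1$ equal to $x_{k+1}$ by the demand constraints. The weights sum to $1$, so $X/(k+1)\in\mathrm{conv}(P_{k+1}/R_k)$.

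\textbf{Step 2 (affine independence of $R_k$).} Suppose the vectors $X_P$, $P\in R_k$, were affinely dependent: $\sum_P c_PX_P=0$ with $\sum_P c_P=0$ and $c\neq 0$. Since every $\mu_P>0$, the vector $\mu+\varepsilon c$ would be a second element of $\Lambda_k(X)$ for small $\varepsilon$. This contradicts $\Lambda_k(X)=\{\lambda^*\}$, so the $X_P$ are affinely independent.

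\textbf{Step 3 (acyclic support).} Suppose the support of $f$ in the bipartite graph (origins $R_k$, destinations $E_k$) contained a cycle. Alternately adding and subtracting $\varepsilon$ around that cycle gives another feasible flow, contradicting rigidity of all arcs. So the support is a forest.

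\textbf{Step 4 (simplex).} Let $\sum_{P,e}c_{P,e}Y^{P,e}=0$ with $\sum c_{P,e}=0$. Projecting onto the first $k$ layers gives $\sum_P\bigl(\sum_e c_{P,e}\bigr)X_P=0$, and the outer coefficients sum to $0$. By Step 2, $\sum_e c_{P,e}=0$ for every $P$. Projecting onto layer $k{+}1$ gives $\sum_P c_{P,e}=0$ for every $e$. Thus $c$ is a circulation on the support forest from Step 3, and since the cycle space of a forest is trivial, $c=0$. The vertices $Y^{P,e}$ are therefore affinely independent, and their convex hull is a simplex whose vertices are exactly these characteristic vectors.

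\textbf{Main obstacle.} The delicate point is the definition of $P_{k+1}/R_k$. If it meant \emph{all} tour-compatible extensions of members of $R_k$, the claim fails. For example, two pedigrees whose tours both contain edges $e$ and $f$ yield four extensions with the affine dependence given by coefficients $+1,-1,-1,+1$. So the proof must use only the extensions on rigid positive arcs, which is exactly what Step 3 exploits. I expect to justify this reading from the construction in Section~\ref{DefRk}.
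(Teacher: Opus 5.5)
Your argument is sound under your reading, but it takes a genuinely different route from the paper's, and in effect it proves a corrected version of the lemma.

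The paper's proof is two sentences. It takes a convex representation of $X\in\mathrm{conv}(P_n)$ and uses $\Lambda_k(X)=\{\lambda^*\}$ to put all weight on pedigrees restricting into $R_k$. It restricts these to $[k+1]$, and then declares the vectors of $P_{k+1}/R_k$ affinely independent ``since they differ in at least one coordinate corresponding to the insertion edge''. Your route repairs both halves.

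\emph{Membership.} The paper's step presupposes $X\in\mathrm{conv}(P_n)$, which is not a hypothesis. $P_{MI}$ together with uniqueness of $\Lambda_k(X)$ does not supply it. Take $x_4=(\tfrac12,\tfrac12,0)$ on $\{1,2,4\},\{1,3,4\},\{2,3,4\}$ and $x_{135}=x_{245}=\tfrac12$. Then $X/5\in P_{MI}(5)$. But $\{1,3\}$ and $\{2,4\}$ lie only on the tour of the first pedigree, so $X/5\notin\mathrm{conv}(P_5)$. Deriving membership from a feasible flow of $F_k$, as you do, is the non-circular way.

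\emph{Simplex.} The paper infers affine independence from distinctness, which is invalid. Your $+1,-1,-1,+1$ example is correct. With the definition the paper actually states (all extensions, the insertion of $k{+}1$ being ``free''), the claim fails whenever $R_k$ contains two pedigrees whose tours share two edges. For $k=4$ any two distinct tours do.

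Your Steps 2--4 are a real proof: affine independence of $R_k$ from uniqueness of $\Lambda_k(X)$, forest support from rigidity, and a leaf argument killing zero-sum weightings on a forest. But they prove a narrower statement. The vertex set is only the extensions along positive rigid arcs, and the hypothesis includes rigidity of every positive arc of $F_k$. That matches the paper's convention in the completion of the construction, where $N_k=\emptyset$ means $R_k,\mu$ already certify $X/(k+1)$. It does not match the gloss just before the lemma, so say explicitly that you are proving this corrected lemma.

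Two local repairs are needed.

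\emph{Step 1.} The inequality $x_{k+1}\le\mathbf u$ does not show that $e$ lies on the tour of the particular $P$ with $f_{P,e}>0$. When $x_{k+1}(e)<1$ it only says that $e$ lies on the tour of \emph{some} member of $R_k$; the Packability argument needs $x_{k+1}(e')=1$. In the example above, $u_{13}=\tfrac12$ comes from one pedigree only. Tour-compatibility must be built into $F_k$: declare $(P,e)$ permitted only when $\{e,k{+}1\}$ has a generator in $P$ and $e$ is not already a common edge of $P$. Then every positive arc is a genuine extension.

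\emph{Step 3.} The alternating-cycle perturbation requires that no arc on the cycle sits at a capacity bound. This holds in your uncapacitated model, or if arcs out of $P$ have capacity $\mu_P$. It fails for general capacitated FAT problems such as the paper's $F_k$ with link capacities $C(L)$. For example, a $2\times2$ problem with unit supplies and demands and all capacities $\tfrac12$ has a unique flow whose support is a $4$-cycle. So state the model you use.
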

\leanfuturebox{\texttt{N\_SimplexLemma.lean} \textbar{} \texttt{lemma simplex\_lemma}: $N_k=\emptyset\Rightarrow X/(k+1)\in\mathrm{conv}(P_{k+1}/R_k)$ is a simplex \hfill\textit{(future Lean~4 project)}}

\begin{proof}
Since $N_k=\emptyset$, $\Lambda_k(X)=\{\lambda^*\}$ is a singleton.
Any representation of $X\in\mathrm{conv}(P_n)$ as a convex
combination of pedigrees must assign all weight to pedigrees
whose restriction to $[k]$ lies in $R_k$.
The restriction of such pedigrees to $[k+1]$ lies in $P_{k+1}/R_k$
by definition.
Hence $X/(k+1)$ lies in $\mathrm{conv}(P_{k+1}/R_k)$.
The vertices are the finitely many distinct characteristic vectors
of elements of $P_{k+1}/R_k$, which are affinely independent
since they differ in at least one coordinate corresponding to the
insertion edge; hence the hull is a simplex.
\end{proof}

\noindent
This lemma is a candidate for future Lean~4 formalisation.

\subsubsection{Illustration: Dantzig's 42-City Problem}

We applied the M3P membership check to the MIR solution of
Dantzig's 42-city TSP instance~\cite{DFJ}, the problem that
launched the modern era of integer programming.
The MIR solution $X$ has the following layer structure:

\begin{center}
\small
\begin{tabular}{cccl}
\toprule
Layers $k$ & $N_k$ & Arcs per layer & Observation \\
\midrule
$4$--$12$ & $\emptyset$ & 1\ (flow 1) & Rigid; $X/(k+1)\in\mathrm{conv}(P_{k+1}/R_k)$ \\
$13$ & $\emptyset$ & 2\ (flows $\tfrac{1}{2},\tfrac{1}{2}$) & First split: edges $\{1,11\}$ and $\{10,12\}$ \\
$14$--$28$ & & 2\ (flows $\tfrac{1}{2},\tfrac{1}{2}$) & Two parallel pedigree paths \\
$29$--$42$ & & 1\ (flow 1) & Rigid again \\
\bottomrule
\end{tabular}
\end{center}

\medskip\noindent
The membership check fails because edge $\{1,11\}$ appears as
the generator at both layer $13$ and layer $24$.
The pedigree distinctness condition requires all common edges
to be distinct across layers; no valid pedigree can use
$\{1,11\}$ twice, so $X\notin\mathrm{conv}(P_{42})$.

By Lemma~\ref{lem:simplex}, since $N_k=\emptyset$ for
$k=4,\ldots,12$, the search for the nearest
$X'\in\mathrm{conv}(P_{42})$ need only consider extensions
of $R_{12}$ --- a single rigid pedigree --- reducing the
nearest-point problem to a bipartite maximum flow computation
from the breakpoint at layer $13$ onward
(see item~(iii) of Section~\ref{sec:newdirections}).

The input data file \texttt{prob\_in\_42.txt} is available in
the project repository~\cite{arthanari2025lean4}.
%% ============================================================
\section{A Sufficient Condition for Non-Membership}\label{Fkfeasibility}
\medskip
\noindent\textbf{All proofs in this section are from the book.}
\medskip

%% ============================================================

\begin{theorem}[Non-Membership]\label{infeasiblity}
Given $X\in P_{MI}(n)$ and $X/k\in\mathrm{conv}(P_k)$,
if $F_k$ is infeasible then $X/k{+}1\notin\mathrm{conv}(P_{k+1})$.
\end{theorem}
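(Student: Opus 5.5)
We prove the contrapositive: if $X/k{+}1\in\mathrm{conv}(P_{k+1})$, then $F_k$ is feasible. The plan is to take an explicit convex combination of pedigrees and read off a feasible flow from it, using the Flow Partition Lemma~\ref{theorem:lemma3}.

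Suppose $X/k{+}1=\sum_{r}\lambda_r Y^r$ with $Y^r\in P_{k+1}$, $\lambda_r>0$ and $\sum_r\lambda_r=1$. Restricting each $Y^r$ to its first $k$ layers gives pedigrees $X^r\in P_k$. These satisfy $X/k=\sum_r\lambda_rX^r$, so the weight vector induced on $P_k$ lies in $\Lambda_k(X)$. Take $\mathcal{D}$ to be the index set $\{r\}$ with $g(r)=\lambda_r$, and form two partitions of $\mathcal{D}$:
\begin{itemize}
\item $\mathcal{D}^1$ groups the indices $r$ according to the \emph{origin} node of $F_k$ that $Y^r$ passes through. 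For $k=4$ this is the node $[4:e]$ used at layer $4$. For $k>4$ it is either the shrunk node of the rigid pedigree $P\in R_{k-1}$ equal to $X^r$, or the non-rigid node $[k:e]$ at layer $k$.
\item $\mathcal{D}^2$ groups the indices by the triangle $\{i,j,k{+}1\}$ that $Y^r$ uses at layer $k{+}1$.
\end{itemize}
Summing $\lambda_r$ over a block of $\mathcal{D}^2$ gives exactly $x_{k+1}(e)$, which is the demand of that destination. For the blocks of $\mathcal{D}^1$ we must check that the block sums match the supplies: $x_k(e)$ for ordinary nodes, and $\mu_P$ (or the updated $\bar x$) for shrunk ones. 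Lemma~\ref{theorem:lemma3} then yields a flow $f_{\alpha\beta}=g(D^1_\alpha\cap D^2_\beta)$ that is supported only on pairs where some $Y^r$ uses both $\alpha$ and $\beta$. Since the common edges in $Y^r$ are distinct and each triangle has a generator in $Y^r$, every such pair is a permitted link $L$ of $F_k$; that is, it is not forbidden by the generator condition.

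It remains to show the flow respects the link capacities $C(L)$, the max-flow value in the restricted network $N_{k-1}(L)$. For a fixed link $L=(u,v)$, consider the pedigrees $Y^r$ that use $L$. We check that none of their nodes at layers $\le k$ is removed by deletion rules (a)--(g) of Definition~\ref{restrict}:
\begin{itemize}
\item Rules (a) and (b) remove nodes whose common edges would repeat the edge of $u$ or $v$, which distinctness excludes.
\item Rules (c) and (d) remove nodes that are not generators of $u$ or $v$; a pedigree path through $L$ uses actual generators.
\item Rule (e) removes the other layer-$k$ nodes, which $Y^r$ does not use.
\item Rule (f) is a cascade that only removes nodes with no surviving generator, so it preserves nodes whose generators survive.
\item Rule (g) removes rigid pedigrees meeting deleted nodes, and these are excluded by the preceding points.
\end{itemize}
Hence the paths $X^r$ for $r$ with $Y^r$ using $L$ form a flow in $N_{k-1}(L)$ of value $f_L$, where node capacities hold because summing $\lambda_r$ never exceeds $x(w)$. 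Therefore $f_L\le C(L)$, and $F_k$ is feasible, a contradiction.

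The main obstacle is the supply accounting on the origin side when $k>4$. After rigid pedigrees have been shrunk and capacities updated to $\bar x$, the supply of each origin must equal the $\lambda$-mass of the corresponding block. This requires that every representation of $X/k$ assigns weight exactly $\mu_P$ to each rigid $P\in R_{k-1}$, the invariance that defines rigidity via the FFF algorithm. I would establish this first, by induction on $k$ with $(N_{k-1},R_{k-1},\mu)$ well-defined, arguing that frozen flows in $F_{k-1}$ are common to all convex representations.
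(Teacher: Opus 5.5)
Your argument is the paper's own route. The paper also argues by contradiction from the instant flow of an arbitrary $\lambda\in\Lambda_{k+1}(X)$, built with Lemma~\ref{theorem:lemma3}; your deletion-rule check plays the role of Lemma~\ref{pedpath}, and your supply and capacity accounting plays the role of Lemma~\ref{oflow}. The induction you defer has to carry more than the exactness of the weights $\mu_P$ (which is also what gives the correct node bound $\bar x(w)$ rather than $x(w)$): it must also show that every active path is actually present in $N_{k-1}$, since its arcs carry positive instant flow at each earlier stage $l$ and so cannot be dummy arcs. That is exactly why the paper states Lemma~\ref{pedpath} for all $4\le l\le k$.
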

\begin{proof}[Sketch]
By contradiction: if $F_k$ is infeasible yet $X/k{+}1\in\mathrm{conv}(P_{k+1})$,
then by Lemma~\ref{pedpath} every active pedigree has its path in the
restricted networks used to construct $F_k$, and by Lemma~\ref{oflow}
the instant flow for $INST(\lambda,k)$ is feasible for $F_k$,
contradicting infeasibility.
\end{proof}
\leanbox{\texttt{N\_Sufficiency.lean} \textbar{} \texttt{theorem non\_membership}: FAT infeasible $\Rightarrow$ $X\notin\mathrm{conv}(P_n)$ \hfill\textit{(book, Chapter~5)}}

%% ============================================================
\section{Multicommodity Flow and the Membership Characterisation}
\label{multi}
\medskip
\noindent\textbf{All proofs in this section are from the book. The Packability Corollary (Theorem~\ref{packabilitycor}) has a new, Lean~4 verified proof given in this paper.}
\medskip

%% ============================================================

\subsection{The MCF$(k)$ Problem and the Decision Question}

\begin{definition}[Commodities]
For each arc $a\in\mathcal{A}(N_k)\setminus\mathcal{A}(N_4)$,
designate a unique \emph{commodity} $s$ ($a\leftrightarrow s$).
Let $\mathcal{S}_l$ denote commodities designated by arcs in $F_l$.
\end{definition}

Let $z_{\max} = 1 - \sum_{P\in R_{k-1}}\mu_P$ denote the maximum
achievable total flow, determined by the rigid pedigree weights.
Let $z^*$ denote the optimal value of the MCF$(k)$ objective.
The \emph{MCF$(k)$ decision problem} asks: is $z^* = z_{\max}$?

The multicommodity flow problem MCF$(k)$ (Problem~\ref{mcflow})
maximises $z=\sum_{s\in\mathcal{S}_k}v^s$ subject to:
arc capacity $c_a\geq f_a\geq0$;
commodity flow conservation at intermediate nodes;
commodity flow restricted to $N_{l-1}(s)$;
total flow of each commodity equals the arc flow;
node capacity $\bar{x}(v)$; and
source availability constraints.

\begin{problem}[MCF$(k)$]\label{mcflow}
For all $l$, $5\leq l\leq k$:
\begin{align}
c_a \geq f_a &\geq 0,\quad a\in\mathcal{A}(N_k) \label{ub}\\
f^s_a &\leq \begin{cases} c_a & a\in\mathcal{A}(N_{l-1}(s))\\ 0 & \text{otherwise}\end{cases}
  \notag\\
\sum_{(u,v)}f^s_{(u,v)} &= \sum_{(v,w)}f^s_{(v,w)},\quad
  v\in\mathcal{V}(N_{l-1})\setminus(V_{[1]}\cup V_{[l-2]})\notag\\
\sum_s f^s_a &= f_a,\quad a\in\mathcal{A}(N_{l-1})\notag\\
\sum_{a'\in\delta^-(a)}f^s_{a'} &= f_a \stackrel{\text{def}}{=} v_s,
  \quad a\leftrightarrow s\notag\\
\sum_s\sum_{(u,v)}f^s_{(u,v)} &\leq \bar{x}(v),\quad v\in\mathcal{V}(N_l)\notag\\
\sum_{(u,w)}f_{(u,w)} &\leq \bar{x}(u),\quad u\in V_{[1]}\notag\\
\sum_{(u,w)}f_{(u,w)} &\leq \bar{\mu}(u),\quad u\in R_q,\;4\leq q\leq l-1\notag\\
\sum_{(u,v)}f_{(u,v)} &\leq \bar{x}(v),\quad v\in V_{[l-2]} \label{nodecapk}
\end{align}
\[\text{maximise}\quad z = \sum_{s\in\mathcal{S}_k}v^s.\]
\end{problem}

\begin{figure}
  \centering
  \includegraphics[width=\textwidth, trim=0 0 0 28pt, clip]{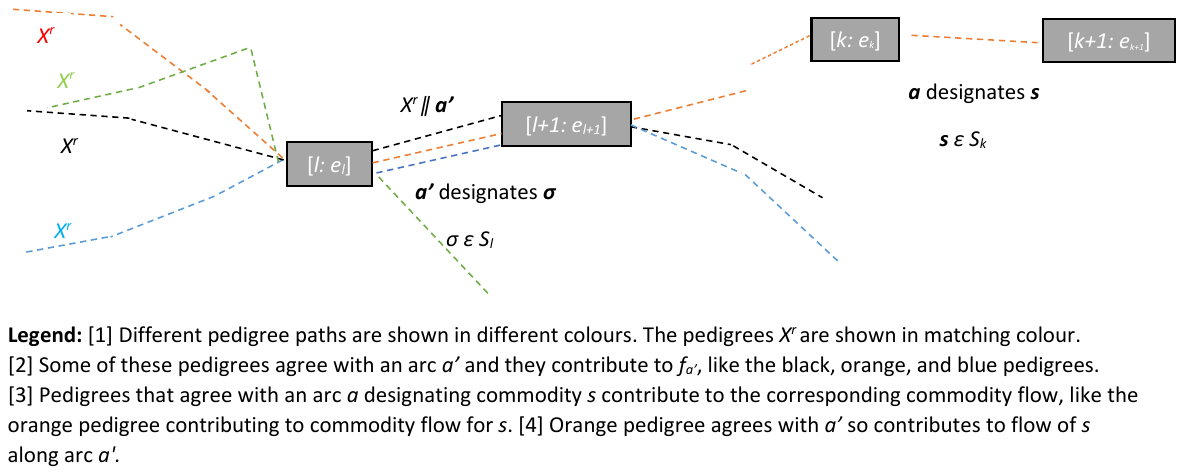}
  \caption{Pedigrees $X^r\parallel a'$, arc $a$ designating commodity~$s$.}
  \label{fig:parallel}
\end{figure}

\subsection{The Necessary and Sufficient Condition}

\begin{definition}[$Y^s$]\label{Ys_def}
For $s\in\mathcal{S}_k$, let $Y^s\in\mathbb{R}^{\tau_k}$ have
components $y^s_l(e)$ given by the total flow of commodity $s$
through node $[l:e]\in V_{[l-3]}$, including contributions from
rigid paths containing that node.
\end{definition}

\begin{lemma}\label{YsinMI}
If MCF$(k)$ is feasible with $z^*=z_{\max}$, then
$\frac{1}{v^s}Y^s\in P_{MI}(k)$ for all $s\in\mathcal{S}_k$.
\end{lemma}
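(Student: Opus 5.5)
The plan is to avoid checking the inequalities \eqref{eq:6}--\eqref{eq:8} directly on the aggregated vector $Y^s$. Instead I will show that $\frac{1}{v^s}Y^s$ is a convex combination of characteristic vectors of pedigrees on $[k]$. The conclusion then follows from two facts. First, every pedigree vector is an integer solution of $MIR(k)$, by the correspondence of Lemma~\ref{oneone}. Second, $P_{MI}(k)$ is a polyhedron and hence convex.

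\textbf{Step 1: path decomposition.} Fix $s\in\mathcal{S}_k$ with designating arc $a=(u,v)$, where $v\in V_{[k-3]}$. By the constraints of Problem~\ref{mcflow}:
\begin{itemize}
\item $f^s$ vanishes outside $\mathcal{A}(N_{l-1}(s))$;
\item $f^s$ is conserved at every intermediate node, i.e.\ every node outside $V_{[1]}\cup V_{[l-2]}$;
\item the flow of $f^s$ entering the tail of $a$ equals $f_a=v^s$.
\end{itemize}
So $f^s$, together with the arc $a$ appended, is an $s$-flow of value $v^s$ in a layered acyclic network. It starts in $V_{[1]}$ or at a shrunk node of some $R_q$, and ends at $v$. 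Standard flow decomposition on an acyclic network writes it as $\sum_t \theta_t\,\chi(\pi_t)$ with $\sum_t\theta_t=v^s$. Each $\pi_t$ visits exactly one node in each layer $V_{[1]},\dots,V_{[k-3]}$.

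When a path starts at a shrunk node of a rigid pedigree $P\in R_q$, I prepend the unique rigid path of $P$. This is exactly how Definition~\ref{Ys_def} counts rigid contributions. It then gives $Y^s=\sum_t\theta_t X^{\pi_t}$. One node per layer yields \eqref{eq:6} immediately after dividing by $v^s$.

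\textbf{Step 2: every path is a pedigree.} I will prove by induction on $l$ that each full path of $N_{l-1}(s)$, extended by the link designating $s$, is a pedigree on $[l+1]$. This requires two properties.
\begin{itemize}
\item \emph{Generator condition.} This holds for consecutive nodes because the arcs of $N_4$ join a node only to nodes it generates. At later stages, the arcs of $F_l$ are links $L=(u,v)$ with $u$ a generator of $v$. Rules (c), (d) and the cascade (f) remove nodes that cannot serve as generators along the path.
\item \emph{Distinctness of common edges.} Rules (a) and (b) delete every earlier node whose common edge coincides with that of the endpoints of $L$. Rule (e) removes competing nodes of the last layer. Rule (g) removes rigid pedigrees through deleted nodes. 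Combined with the inductive hypothesis that paths in $N_{l-2}(s')$ already have distinct common edges, the extended path has distinct common edges.
\end{itemize}

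\textbf{Expected obstacle.} Step~2 is where I expect the real difficulty. The deletion rules are phrased relative to the single link $L$. For the induction to close, I must show that pairwise distinctness among the interior nodes of a path is inherited from the commodity restrictions imposed at earlier stages. Each interior arc of $\pi_t$ itself designates an earlier commodity whose flow was confined to its own restricted network. The danger is that $f^s$ splices together sub-paths certified by different earlier commodities, and the spliced path might repeat a common edge; this is exactly the failure seen in the 42-city illustration. My first attempt would be to use the constraint $\sum_s f^s_a=f_a$ together with the optimality hypothesis $z^*=z_{\max}$. The aim is to argue that all capacity is saturated, so each arc's flow is fully accounted for by commodities routed within the correct nested restricted networks, and this should rule out such splicing. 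I am genuinely unsure this argument goes through. If it does not, the lemma would need an additional hypothesis or a finer decomposition.

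Once Step~2 holds, $\frac1{v^s}Y^s=\sum_t\frac{\theta_t}{v^s}X^{\pi_t}$ is a convex combination of points of $P_k\subseteq P_{MI}(k)$, and so lies in $P_{MI}(k)$.
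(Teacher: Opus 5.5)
The gap is your Step~2, and it is worse than missing detail: as stated, it is false. Every deletion rule of Definition~\ref{restrict} is relative to the single link $L\leftrightarrow s$. Rules (a) and (b) delete only nodes sharing a common edge with an endpoint of $L$, and rules (c) and (d) delete non-generators only at the two layers fixed by $L$. Take $e\in E_3$ and a link $L'=([5:e'],[6:e])$ of $F_5$ with $C(L')>0$. Rule (a) removes $[4:e]$ from $N_4(L')$. Yet the arcs $([4:e],[5:e'])\in\mathcal{A}(N_4)$ and $L'$ (now an arc of $N_5$) both survive. So the path $[4:e]\to[5:e']\to[6:e]$, which violates \eqref{eq:7}, generally remains in $N_{l-1}(s)$ for a later link unrelated to $e$. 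Your induction on $N_{l-2}(s')$ does not transfer, because Problem~\ref{mcflow} never requires commodity $s$ to traverse $L'$ along a path available to the commodity designated by $L'$. The generator condition also fails to follow from consecutive arcs: the generator of $[l:\{i,j\}]$ ($i<j$) sits at layer $j$, not at layer $l-1$. Your proposed repair via $\sum_s f^s_a=f_a$ and $z^*=z_{\max}$ cannot work either. Every constraint coupling different commodities is an aggregate over commodities. So exchanging the prefixes of two commodities at a shared node such as $[5:e']$ (when the swapped arcs lie in both restricted networks) preserves feasibility and the objective, while producing exactly the spliced path you feared. Finally, even a successful Step~2 would prove $\frac{1}{v^s}Y^s\in\mathrm{conv}(P_k)$. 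That is Lemma~\ref{Ysinconv}, which the paper derives downstream of this lemma.

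The paper aims lower and checks only the linear constraints of $MIR(k)$ for $Y=\frac{1}{v^s}Y^s$.
\begin{itemize}
\item Nonnegativity is immediate, and the layer equalities $\sum_e y^s_l(e)=v^s$ come from conservation; your Step~1 gives these too.
\item For \eqref{eq:7}--\eqref{eq:8}, it takes the least $q$ with $Y/q\notin P_{MI}(q)$. Then some slack $U_{i^0j^0}<0$, while the slack $V_{i^0j^0}$ of $Y/(q-1)$ is nonnegative.
\item It bounds the commodity-$s$ flow into $[q:e]$, with $e=(i^0,j^0)$, by the flow through generators of $e$ at layer $j^0$ minus $\sum_{l=j^0+1}^{q-1}y^s_l(e)$. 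This quantity is $V_{i^0j^0}$, so $U_{i^0j^0}\ge 0$, a contradiction.
\end{itemize}
This needs a routing property for one common edge at a time, not a pedigree property of every decomposed path. Membership in $\mathrm{conv}(P_k)$ is obtained only afterwards, from this lemma together with Theorem~\ref{packabilitycor}. Be aware, however, that the paper asserts this bound without derivation. It is open to the same prefix-exchange objection: in the configuration above, all of commodity $s$ could enter $[6:e]$ after passing $[4:e]$. So the MI inequalities are the right target, but you would still need an argument that every unit of commodity $s$ reaching $[q:e]$ has passed a generator at layer $j^0$ and no earlier copy of $e$. The constraints of Problem~\ref{mcflow} do not visibly enforce this.
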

\begin{proof}
Let $Y=\frac{1}{v^s}Y^s$.
Non-negativity: by definition.
Layer equalities: $\sum_{e\in E_{l-1}}y^s_l(e)=v^s$ for $l\in[4,k]$,
so $\frac{1}{v^s}Y^s/4\in P_{MI}(4)$.

Suppose the result fails; let $q$ be minimal such that $Y/q\notin P_{MI}(q)$.
Then some slack $U_{i^0j^0}<0$ at $(i^0,j^0)$. Since $Y/q{-}1$ is feasible,
its slack $V_{i^0j^0}\geq0$.
The maximum flow of commodity $s$ into $[q:(i^0,j^0)]$ is
at most $\sum_{e'\in G(e)}y^s_{j^0}(e')
-\sum_{l=j^0+1}^{q-1}y^s_l(e)=V_{i^0j^0}$.
By flow conservation, $y^s_q(e)\leq V_{i^0j^0}$, so
$U_{i^0j^0}=V_{i^0j^0}-y^s_q(e)\geq0$, a contradiction.
\end{proof}
\leanbox{\texttt{N\_YisinMI.lean} \textbar{} \texttt{lemma y\_in\_MI}: $X/k\in P_{MI}(k)$ \hfill\textit{(book, Chapter~5)}}

\begin{lemma}\label{converse_for_five}
Let $k=5$ and $\sigma\in\mathcal{S}_5$ be designated by
$a=([5:e'_\sigma],[6:e_\sigma])$.
If MCF$(k)$ is feasible with $z^*=z_{\max}$, then
$\frac{1}{v^\sigma}Y^\sigma\in\mathrm{conv}(P_5)$ for all
$\sigma\in\mathcal{S}_5$, and consequently
$(\frac{1}{v^\sigma}Y^\sigma,\mathrm{ind}(e_\sigma))
\in\mathrm{conv}(P_6)$ for all $\sigma\in\mathcal{S}_5$,
so that $X/6\in\mathrm{conv}(P_6)$.
\end{lemma}
\begin{proof}
We have $z_{\max}=1-\sum_{P\in R_4}\mu_P$.
For any $\sigma\in\mathcal{S}_5$, all commodity flow paths in
$N_4(\sigma)$ are pedigree paths (they correspond to arcs in $N_4$),
giving $\frac{1}{v^\sigma}Y^\sigma\in\mathrm{conv}(P_5)$.
Each such path ends at $\mathrm{tail}(a)$ and extends to a pedigree
in $P_6$ using $\mathrm{head}(a)$.
Since $\sum_{\sigma\in\mathcal{S}_5}v^\sigma
(\frac{1}{v^\sigma}Y^\sigma,\mathrm{ind}(e_\sigma))=X/6$,
we have $X/6\in\mathrm{conv}(P_6)$.
\end{proof}

\begin{remark}\label{remarkconvlemma}
For $k>5$, $\frac{1}{v^s}(Y^s/5)\in\mathrm{conv}(P_5)$ for all
$s\in\mathcal{S}_k$ (by a similar argument to Lemma~\ref{converse_for_five}).
For $k=6$: commodity $s$ passes through a unique node $[6:e']$ with
$y^s_6(e')=v^s$, so $\frac{1}{v^s}(Y^s/6)$ has last component
$\mathrm{ind}(e')$. By Theorem~\ref{packabilitycor},
$\frac{1}{v^s}(Y^s/6)\in\mathrm{conv}(P_6)$.
\end{remark}

\begin{lemma}\label{Ysinconv}
If MCF$(k)$ is feasible with $z^*=z_{\max}$, then
$\frac{1}{v^s}(Y^s/k)\in\mathrm{conv}(P_k)$ for all $s\in\mathcal{S}_k$.
\end{lemma}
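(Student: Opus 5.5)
The plan is induction on $k$. The base cases are already available: Lemma~\ref{converse_for_five} handles $k=5$, and Remark~\ref{remarkconvlemma} handles $k=6$ together with the general statement that $\frac{1}{v^s}(Y^s/5)\in\mathrm{conv}(P_5)$. For the inductive step I fix $k\geq 7$ and assume the statement for $k-1$. Let $s\in\mathcal{S}_k$ be designated by the arc $a=([k:e'],[k+1:e])$. The goal is to show that $\frac{1}{v^s}(Y^s/k)$ satisfies the hypotheses of the Packability Corollary (Theorem~\ref{packabilitycor}) at level $k-1$, namely:
\begin{enumerate}[(i)]
\item $\frac{1}{v^s}Y^s\in P_{MI}(k)$, which is exactly Lemma~\ref{YsinMI};
\item its $k$-th layer is the unit vector $\mathrm{ind}(e')$;
\item $\frac{1}{v^s}(Y^s/(k-1))\in\mathrm{conv}(P_{k-1})$.
\end{enumerate}
Once these hold, the corollary gives $\frac{1}{v^s}(Y^s/k)\in\mathrm{conv}(P_k)$ directly.

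Step (ii) should follow from the constraint $\sum_{a'\in\delta^-(a)}f^s_{a'}=f_a=v_s$ together with flow conservation inside $N_{k-1}(s)$. Every unit of commodity $s$ terminates at $\mathrm{tail}(a)=[k:e']$. Deletion rule~(e) of Definition~\ref{restrict} removes all other nodes of layer $k$, so $y^s_k(e')=v^s$ and $y^s_k(e'')=0$ for $e''\neq e'$. This mirrors the $k=6$ argument in the Remark.

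Step (iii) is where the induction enters. I would decompose the flow of commodity $s$ according to the arcs of $F_{k-1}$ it traverses. Each such arc designates a commodity $s'\in\mathcal{S}_{k-1}$, and I would write
\[
Y^s/(k-1)=\sum_{s'}\theta_{s s'}\,\frac{1}{v^{s'}}\bigl(Y^{s'}/(k-1)\bigr),\qquad \theta_{s s'}\geq 0,\quad \sum_{s'}\theta_{s s'}=v^s .
\]
Contributions of rigid paths (Definition~\ref{Ys_def}) would be handled separately: each rigid pedigree in $R_{k-1}$ is itself a vertex of $\mathrm{conv}(P_{k-1})$ with weight $\mu_P$. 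Applying the induction hypothesis to each $s'$ then expresses $\frac{1}{v^s}(Y^s/(k-1))$ as a convex combination of points of $\mathrm{conv}(P_{k-1})$.

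The main obstacle is justifying this decomposition. The portion of commodity $s$ routed through an arc $s'$ need not be a scalar multiple of the full profile $Y^{s'}$. The induction hypothesis speaks only about the whole commodity $s'$, not about an arbitrary sub-flow of it restricted to $N_{k-1}(s)$. My first attempt would be to strengthen the inductive statement so that it covers every flow in a restricted network $N_{l-1}(L)$ that satisfies the MCF constraints, not only the optimal commodity flows. I would then check that deletion rules~(a)--(g) guarantee that every path in $N_{k-1}(s)$ is a pedigree path with distinct common edges. I expect this step to carry the real content of the proof. Rule~(f) only removes nodes all of whose generators were deleted, and a path can still reuse a common edge at a non-adjacent layer (compare the 42-city example, where $\{1,11\}$ recurs at layers $13$ and $24$). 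So this is where I would scrutinise the argument most carefully, since a gap is most likely to hide here.
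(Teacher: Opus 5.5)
\textbf{There is a genuine gap in step (iii).} Your proposal is not yet a proof: for $k\geq 7$ everything rests on step (iii), and you leave it open. The decomposition $Y^s/(k-1)=\sum_{s'}\theta_{ss'}\frac{1}{v^{s'}}\bigl(Y^{s'}/(k-1)\bigr)$ is not implied by Problem~\ref{mcflow}. The coupling constraint $\sum_s f^s_a=f_a$ only ties the total flow on the arc $a\leftrightarrow s'$ (which is $v^{s'}$) to the sum of the commodity flows crossing it. It says nothing about how the portion of $s$ that crosses $a$ is distributed over layers $4,\dots,k-1$.

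Moreover, that portion is confined to $N_{k-1}(s)$, not to $N_{k-2}(s')$. So it may pass through nodes that the deletion rules remove for the link $s'$. For example, an earlier-layer node carrying the common edge of $\mathrm{tail}(a)$ is removed by rule~(b) for $s'$, but not necessarily by any rule for $s$. Such a sub-flow is not the kind of object your induction hypothesis speaks about. Your proposed strengthening to arbitrary MCF-feasible flows in restricted networks would need exactly the pedigree-path property you question at the end. That property says every path in a flow decomposition has distinct common edges and has its generators present. Neither Definition~\ref{restrict} nor Problem~\ref{mcflow}, as stated, provides it.

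\textbf{The paper's own proof does not supply this ingredient either.} It fixes $s\in\mathcal{S}_k$ and inducts on the layer $l$, applying Lemma~\ref{YsinMI} and Theorem~\ref{packabilitycor} at each $l$. It relies on the claim that commodity $s$ enters layer $l$ through a single node $[l:e']$. As your step (ii) shows, that property comes from rule~(e), which acts on layer $k$ only. For $6\leq l<k$, commodity $s$ may be spread over several nodes of layer $l$, and then the hypothesis $x_l(e')=1$ of the Packability Corollary is unavailable.

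If one instead reads the paper's induction as running over stages with $s\in\mathcal{S}_l$, the single-node claim is fine. But then the hypothesis concerns $\mathcal{S}_{l-1}$ and does not yield $\frac{1}{v^s}(Y^s/(l-1))\in\mathrm{conv}(P_{l-1})$ for $s\in\mathcal{S}_l$, which is precisely your (iii). Either way, beyond the cases $k\leq 6$ handled by Lemma~\ref{converse_for_five} and Remark~\ref{remarkconvlemma}, the paper's argument has the same hole as yours. The final application of the Packability Corollary at layer $k$, which you and the paper share, is sound but needs (iii) as input.

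You have located the real content correctly. To finish, you would need an actual proof that every path in a flow decomposition of each commodity is a pedigree path. With that, $\frac{1}{v^s}(Y^s/k)$ is directly a convex combination of pedigree vectors, and no induction is needed.
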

\begin{proof}
By Lemma~\ref{converse_for_five} and the following Remark,
$\frac{1}{v^s}(Y^s/5)\in\mathrm{conv}(P_5)$ for all $s\in\mathcal{S}_k$.
Proceed by induction: assume
$\frac{1}{v^s}(Y^s/l{-}1)\in\mathrm{conv}(P_{l-1})$.
Lemma~\ref{YsinMI} gives $\frac{1}{v^s}(Y^s/l)\in P_{MI}(l)$.
Commodity $s$ flows only through arcs ending at $u=[l:e']$, so
$y^s_l(e')=v^s$, meaning $\frac{1}{v^s}(Y^s/l)$ has last component
$\mathrm{ind}(e')$.
By Theorem~\ref{packabilitycor}, $\frac{1}{v^s}(Y^s/l)\in\mathrm{conv}(P_l)$.
\end{proof}
\leanbox{\texttt{N\_YisinConv.lean} \textbar{} \texttt{lemma y\_in\_conv}: $X/k\in\mathrm{conv}(P_k)$ \hfill\textit{(book, Chapter~5)}}

\begin{theorem}[Sufficiency]\label{impconvtheorem}
If MCF$(k)$ is feasible with $z^*=z_{\max}$, then
$X/k{+}1\in\mathrm{conv}(P_{k+1})$.
\end{theorem}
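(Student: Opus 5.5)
The plan is to write $X/k{+}1$ explicitly as a convex combination of two kinds of pieces: the rigid pedigrees in $R_k$ with their weights $\mu_P$, and one normalised vector per commodity $s\in\mathcal{S}_k$ with weight $v^s$. Each piece will then be shown to lie in $\mathrm{conv}(P_{k+1})$, and convexity finishes the argument.

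\emph{Commodity pieces.} For each $s\in\mathcal{S}_k$, let $a\leftrightarrow s$ be its designating arc $a=([k{+}1:e'_s],[k{+}2:e_s])$ in $F_k$; the layer indexing should follow the convention of Lemma~\ref{converse_for_five}. Define
\[
Z^s=\bigl(\tfrac{1}{v^s}(Y^s/k),\,\mathrm{ind}(e_s)\bigr).
\]
\begin{itemize}
\item By Lemma~\ref{Ysinconv}, $\frac{1}{v^s}(Y^s/k)\in\mathrm{conv}(P_k)$.
\item By Lemma~\ref{YsinMI} applied one layer further, where $a$ supplies the last layer, $Z^s\in P_{MI}(k{+}1)$.
\item The last component of $Z^s$ is $\mathrm{ind}(e_s)$, so Theorem~\ref{packabilitycor} gives $Z^s\in\mathrm{conv}(P_{k+1})$. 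This is exactly the step used for $k=5$ in Lemma~\ref{converse_for_five} and in the induction of Lemma~\ref{Ysinconv}.
\end{itemize}

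\emph{Rigid pieces.} Each $P\in R_k$ is by construction (Section~\ref{DefRk}) a pedigree path ending in the tail of a rigid link, extended by that link's head. Its characteristic vector is therefore already a vertex of $\mathrm{conv}(P_{k+1})$.

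\emph{The aggregation identity.} It remains to prove
\[
\sum_{s\in\mathcal{S}_k}v^s Z^s+\sum_{P\in R_k}\mu_P\,\chi^P = X/k{+}1 .
\]
\begin{itemize}
\item The weights sum to $1$: $\sum_s v^s=z^*=z_{\max}=1-\sum_P\mu_P$.
\item Coordinatewise, for each node $[l:e]$ we need the total commodity flow through it, plus rigid contributions, to equal the original $x_l(e)$. Equivalently, every updated node capacity $\bar x(v)$ in Problem~\ref{mcflow} must be met with equality.
\item The MCF constraints only give $\le$. The plan is to argue by summing over each layer: every layer of $X$ has total $1$, and the optimal total flow already equals $1-\sum\mu_P$. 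So every unit of slack at some node in a layer would force a deficit in that layer's total. Since all nodes in a layer share the same total, no node can have slack.
\end{itemize}

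\emph{Main obstacle.} I expect the tightness argument to be the crux. It requires that every unit of commodity flow passes through exactly one node per layer, including the layers in $V_{[1]}$ and $V_{[l-2]}$, where conservation is not imposed in Problem~\ref{mcflow}. It also requires that rigid paths and commodity flows never double-count a node.

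I would first try to establish a per-layer conservation identity directly from the constraint $\sum_{a'\in\delta^-(a)}f^s_{a'}=v_s$ together with the restriction of commodity $s$ to $N_{l-1}(s)$. If the capacity constraints at the source layers turn out not to force tightness, the identity can fail. In that case the argument does not go through as stated, and an extra hypothesis (all node capacities saturated) would have to be added.
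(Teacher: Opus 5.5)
Your proposal is essentially the paper's proof. It uses commodity pieces from Lemma~\ref{Ysinconv} extended by the head $e_s$ of the designating arc $a=([k:e'_s],[k{+}1:e_s])$ (your $([k{+}1:e'_s],[k{+}2:e_s])$ is off by one), rigid pieces from $R_k$, and weights summing to one because $\sum_s v^s=z_{\max}=1-\sum_{P\in R_k}\mu_P$. Routing the extension through Theorem~\ref{packabilitycor} is a more explicit form of the paper's bare claim that each pedigree ending in $e'_s$ extends via $e_s$; the slack at $e_s$ that this needs comes from deletion rules (a) and (d) of $N_{k-1}(s)$, not from Lemma~\ref{YsinMI} as stated.

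The paper simply asserts the aggregation identity $\sum_s(Y^s,\mathrm{ind}(e_s))=\bar X/k{+}1$. Your layer-sum argument settles it with no extra hypothesis. Every node, including those in $V_{[1]}$, the shrunk rigid nodes and $V_{[k-2]}$, carries a $\le$ capacity in Problem~\ref{mcflow}. Each layer of $\bar X$ sums to $1-\sum_{P\in R_k}\mu_P$, and each commodity contributes exactly $v^s$ to every layer (the layer equalities used in the proof of Lemma~\ref{YsinMI}). So every one of those inequalities is tight.
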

\begin{proof}
Let $a=([k:e'_s],[k{+}1:e_s])\leftrightarrow s$.
By Lemma~\ref{Ysinconv}, $\frac{1}{v^s}(Y^s/k)\in\mathrm{conv}(P_k)$,
so there exist $\gamma_P>0$, $\sum_P\gamma_P=1$, with
$\frac{1}{v^s}(Y^s/k)=\sum_{P\in\mathcal{P}_s}\gamma_PX_P$
where each $P\in\mathcal{P}_s$ ends in $e'_s$.
Each such $P$ extends to a pedigree in $P_{k+1}$ via $e_s$
with weight $v^s\gamma_P$; these weights sum to $v^s$.

Summing over all $s\in\mathcal{S}_k$:
$\sum_s v^s\bigl(\frac{1}{v^s}\bigr)(Y^s,\mathrm{ind}(e_s))
=\bar{X}/k{+}1$.
Since all $P\in R_k$ lie in $P_{k+1}$ and
$\bar{X}/k{+}1+\sum_{P\in R_k}\mu_PX_P=X/k{+}1$
with $\sum_sv^s=z_{\max}=1-\sum_{P\in R_k}\mu_P$,
we conclude $X/k{+}1\in\mathrm{conv}(P_{k+1})$.
\end{proof}
\leanbox{\texttt{N\_Sufficiency.lean} \textbar{} \texttt{theorem sufficiency}: MCF$(k)$ feasible, $z^*=z_{\max}$ $\Rightarrow$ $X/k{+}1\in\mathrm{conv}(P_{k+1})$ \hfill\textit{(book, Chapter~5)}}
\begin{theorem}[Necessity]\label{imptheorem}
If $X/k{+}1\in\mathrm{conv}(P_{k+1})$ then MCF$(k)$ has a feasible
solution with $z^*=z_{\max}$.
\end{theorem}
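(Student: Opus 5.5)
The plan is to build an explicit feasible solution of MCF$(k)$ from a convex combination certifying $X/k{+}1\in\mathrm{conv}(P_{k+1})$. Fix $\lambda\in\Lambda_{k+1}(X)$, so $X/k{+}1=\sum_{r\in I(\lambda)}\lambda_r X^r$ with each $X^r$ a pedigree in $P_{k+1}$. Each $X^r$ is the sequence of triangles $[4:e_4],\ldots,[k{+}1:e_{k+1}]$. Since every coordinate used by $X^r$ has positive weight in $X$, each node $[l:e_l]$ is a node of $N_k$. I will then invoke Lemma~\ref{pedpath}, as in the proof of Theorem~\ref{infeasiblity}, to assert that consecutive triangles are joined by arcs of $N_k$. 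Hence $X^r$ traces a directed \emph{pedigree path} through the layers.

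\textbf{Separating the rigid part.} Split $I(\lambda)$ into pedigrees whose restriction to $[k]$ lies in $R_k$ (more precisely, in the rigid pedigrees $R_{l}$ recorded along the construction) and the rest. The claim I need is that a rigid pedigree $P$ carries weight exactly $\mu_P$ in every $\lambda$. The argument goes through the FAT problems. By Lemma~\ref{oflow}, the instant flow induced by $\lambda$ on each $F_l$ is feasible. Rigid arcs have the same flow in every feasible solution, and each rigid pedigree is the unique path feeding such an arc. So the $\lambda$-weight routed along it is forced to equal $\mu_P$. Subtracting these contributions leaves the residual point $\bar X/k{+}1$ with total weight $1-\sum_{P\in R_k}\mu_P=z_{\max}$, and with residual node capacities $\bar x(v)$.

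\textbf{Constructing the flow.} For each non-rigid $X^r$, let its last arc be $a=([k:e_k],[k{+}1:e_{k+1}])$, which designates a commodity $s$. More generally, for every stage $l\le k$, the arc of $X^r$ entering layer $l{+}1$ designates a commodity in $\mathcal S_l$. I route $\lambda_r$ units of each such commodity along the prefix of $X^r$ ending at the tail of the designating arc. I then set $f_a$ to be the total $\lambda$-weight of the non-rigid pedigrees using $a$. The constraints are checked as follows.
\begin{itemize}
\item Conservation at intermediate nodes holds because each commodity is a union of paths.
\item The equalities $\sum_s f^s_a=f_a$ and $\sum_{a'\in\delta^-(a)} f^s_{a'}=v_s$ hold by construction.
\item The node capacity constraints hold with equality, because $\sum_{r\ni[l:e]}\lambda_r=x_l(e)$ and the rigid share has already been removed.
\item The source constraints at $V_{[1]}$ and at the $R_q$ hold for the same reason.
\item Summing over $s\in\mathcal S_k$ gives $z=\sum_{r\ \text{non-rigid}}\lambda_r=z_{\max}$.
\end{itemize}
Since the layer-$(k{+}1)$ node capacities sum to $z_{\max}$, no feasible flow exceeds it. Hence $z^*=z_{\max}$.

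\textbf{Main obstacle.} The hard step is the restriction constraint: the path of $X^r$ up to the designating arc $a=L$ must lie entirely in $N_{l-1}(L)$, i.e.\ avoid every node deleted by rules (a)--(g) of Definition~\ref{restrict}.
\begin{itemize}
\item Rules (a), (b) and (e) follow from distinctness of common edges and from the fact that $X^r$ uses exactly one node per layer.
\item Rules (c) and (d) follow because $X^r$ must contain a generator of $u$ and of $v$, and can contain only one triangle from layers $s$ and $j$.
\item Rule (f) follows by induction along the cascade: every node of $X^r$ has its generator in $X^r$, so it is never orphaned.
\item Rule (g) is the delicate case. I would argue that if $X^r$ met a rigid pedigree at a deleted node, then $X^r$ would itself be that rigid pedigree, whose weight is already accounted for by $\mu$. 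That contradicts its non-rigid status.
\end{itemize}
I expect this rule (g) case, and the companion claim that rigid weights are invariant across all $\lambda$, to require the most care. The rest of the necessity argument depends on them.
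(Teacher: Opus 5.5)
This is essentially the paper's proof: fix $\lambda\in\Lambda_{k+1}(X)$, let $f_a$ and $f^s_a$ be the $\lambda$-weight of the pedigrees parallel to $a$, let Lemmas~\ref{pedpath} and~\ref{oflow} supply the capacity and restricted-network constraints, and read off $\sum_{s\in\mathcal{S}_k}v^s=z_{\max}$. Your rigid-weight accounting makes explicit what the paper leaves implicit, and your deletion-rule analysis re-proves Lemma~\ref{pedpath}, which you already cite and could simply invoke.

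If you keep that re-proof, repair the rule~(g) case. A pedigree not counted in $R_k$ can legitimately be routed through the shrunk node of some $P\in R_q$ with $q<k$; your own source constraints at $R_q$ presuppose this. So ``non-rigid status'' yields no contradiction. The correct point is that such a $P$ is a prefix of $X^r$, all of whose nodes survive rules (a)--(f), so (g) never deletes it.
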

\begin{proof}
Since $X/k{+}1\in\mathrm{conv}(P_{k+1})$, choose $\lambda\in\Lambda_{k+1}(X)$.
Define:
\[
f_a=\sum_{\substack{r\in I(\lambda)\\X^r\parallel a}}\lambda_r\ (a\in N_k),
\quad
f^s_a=\sum_{\substack{r\in I_s(\lambda)\\X^r\parallel a}}\lambda_r
\ (a\in N_{l-1}(s)).
\]
Non-negativity holds since $\lambda_r>0$. Flow conservation for
commodity $s$ at node $v=[l:e]$:
\[
\sum_{(u,v)}f^s_{(u,v)}
=\sum_{\substack{r\in I_s(\lambda)\\x^r_l(e)=1}}\lambda_r
=\sum_{(v,w)}f^s_{(v,w)}.
\]
Arc and node capacities are met by the same argument as instant-flow
feasibility (Lemma~\ref{oflow}). For $l=k$:
\[
\sum_{s\in\mathcal{S}_k}v^s
=1-\sum_{P\in R_k}\mu_P=z_{\max}.
\]
Thus $f,\{f^s\}$ is feasible with objective $z_{\max}$.
\end{proof}
\leanbox{\texttt{N\_Sufficiency.lean} \textbar{} \texttt{lemma converse\_five}: base case $n=5$ \hfill\textit{(book, Chapter~5)}}

Together these give
\leanfuturebox{\texttt{N\_Necessity.lean} \textbar{} \texttt{theorem necessity}: $X/k{+}1\in\mathrm{conv}(P_{k+1})$ $\Rightarrow$ MCF$(k)$ feasible \hfill\textit{(book, Chapter~5; Lean~4 formalisation is future work)}}
 the main characterisation:

\begin{theorem}[N\&S Condition]\label{maintheorem}
Given $n>5$, $X\in P_{MI}(n)$, $X/n{-}1\in\mathrm{conv}(P_{n-1})$:
$$X\in\mathrm{conv}(P_n)\iff
\text{MCF}(n-1)\text{ has a feasible solution with }z^*=z_{\max}.$$
\end{theorem}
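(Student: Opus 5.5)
The plan is to obtain Theorem~\ref{maintheorem} as the instance $k=n-1$ of the two one-directional results already stated, Theorem~\ref{impconvtheorem} (Sufficiency) and Theorem~\ref{imptheorem} (Necessity). The work is in checking that their hypotheses hold at $k=n-1$ and that the objects they mention are the ones in the theorem.

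First, the standing hypotheses. We are given $X\in P_{MI}(n)$ and $X/n{-}1\in\mathrm{conv}(P_{n-1})$. I would first argue that the triple $(N_{n-1},R_{n-1},\mu)$ is well-defined, so that MCF$(n-1)$ and $z_{\max}$ make sense. For this I would use induction on the stages $k=4,\dots,n-1$. At each stage, $X/k{+}1\in\mathrm{conv}(P_{k+1})$, which follows from $X/n{-}1\in\mathrm{conv}(P_{n-1})$ by restriction for $k+1\le n-1$. The contrapositive of Theorem~\ref{infeasiblity} then gives feasibility of $F_k$ for $k\le n-2$. Feasibility of $F_{n-1}$ itself must hold in the ($\Rightarrow$) direction; I would derive it from $X\in\mathrm{conv}(P_n)$, again via Theorem~\ref{infeasiblity}. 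In the ($\Leftarrow$) direction it is implicit in MCF$(n-1)$ being posed at all. If $N_{n-1}=\emptyset$, Lemma~\ref{lem:simplex} settles the case directly, with the convention $z_{\max}=0=z^*$. So assume $N_{n-1}\neq\emptyset$.

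For ($\Leftarrow$), suppose MCF$(n-1)$ is feasible with $z^*=z_{\max}$. Theorem~\ref{impconvtheorem} with $k=n-1$ gives $X/n\in\mathrm{conv}(P_n)$, and $X/n=X$. The internal chain is Lemma~\ref{YsinMI} $\to$ Lemma~\ref{converse_for_five} and Remark~\ref{remarkconvlemma} $\to$ Lemma~\ref{Ysinconv}, with Theorem~\ref{packabilitycor} used at each layer, followed by adding back the rigid part $\sum_{P\in R}\mu_PX_P$.

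For ($\Rightarrow$), suppose $X\in\mathrm{conv}(P_n)$. Theorem~\ref{imptheorem} with $k=n-1$ gives a feasible $(f,\{f^s\})$ with objective $z_{\max}$. Since $z_{\max}$ is by construction an upper bound on the objective (source availability after removing rigid weight), this flow is optimal, so $z^*=z_{\max}$.

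I expect the main obstacle to be the sufficiency direction, in particular Lemma~\ref{Ysinconv}. Theorem~\ref{packabilitycor} needs $\frac1{v^s}Y^s/l\in P_{MI}(l)$ together with $\frac1{v^s}Y^s/l{-}1\in\mathrm{conv}(P_{l-1})$. The trouble is that a single commodity's flow need not decompose into paths that are genuine pedigrees, because the distinct-common-edge condition is not a flow-conservation constraint. The Dantzig example, where edge $\{1,11\}$ is reused, shows exactly this failure mode. I would first try to show that the deletion rules (a)--(g) defining $N_{l-1}(s)$ already forbid every path reusing a common edge. If that fails, the fallback is to strengthen Lemma~\ref{YsinMI}, so that slack nonnegativity at each layer is proved per commodity, and then use Theorem~\ref{packabilitycor} inductively. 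Finally I would check index consistency: $z_{\max}$ is defined via $R_{k-1}$ but used via $R_k$ in the proofs, and the two rigid sets must agree for the final convex combination to sum to $1$.
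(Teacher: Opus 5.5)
Your reduction is exactly the paper's proof, which consists only of instantiating Theorem~\ref{impconvtheorem} and Theorem~\ref{imptheorem} at $k=n-1$; using Theorem~\ref{infeasiblity} for well-definedness and noting that $z_{\max}$ bounds the objective are routine additions. The ($\Rightarrow$) direction is handled as in the paper. The ($\Leftarrow$) direction, however, is not actually proved. The obstacle you flag is genuine, the paper's own proof of Lemma~\ref{Ysinconv} has exactly this hole, and neither of your remedies closes it. Your fallback is literally the paper's route: Lemma~\ref{YsinMI} is already stated per commodity, and Lemma~\ref{Ysinconv} inducts on $l$ by applying Theorem~\ref{packabilitycor}. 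That theorem requires the layer-$l$ block of $\frac{1}{v^s}Y^s$ to be an indicator vector. The proof asserts $y^s_l(e')=v^s$ for a single node $[l:e']$, but for $s\in\mathcal{S}_{n-1}$ this is guaranteed only at the tail layer of the arc designating $s$. At an intermediate layer nothing in MCF$(n-1)$ stops commodity $s$ from splitting over several nodes of $N_{l-1}(s)$, and then Packability is unavailable. Per-commodity $P_{MI}$ membership cannot replace integrality: Example~\ref{exsix} has $X\in P_{MI}(6)$, $X/5\in\mathrm{conv}(P_5)$ and a fractional last layer, yet $X\notin\mathrm{conv}(P_6)$. Citing Theorem~\ref{impconvtheorem} therefore does not help, because its proof rests on this same induction.

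Your first remedy also fails. The deletion rules (a)--(g) of Definition~\ref{restrict} constrain only the two triangles of the terminal link and their generators, plus the cascade and rigid deletions. A path that reuses a common edge between two \emph{earlier} layers therefore survives whenever each of its arcs is supported by some genuine pedigree. Take $L=(\{1,6,7\},\{1,7,8\})$ and the path $[4:\{1,2\}]\to[5:\{1,3\}]\to[6:\{1,2\}]\to[7:\{1,6\}]$. None of (a)--(g) removes it, although it inserts into $\{1,2\}$ twice. Yet every arc on it lies on a valid pedigree; for the last two arcs, start at $[4:\{2,3\}]$ instead. The coupling constraints $\sum_s f^s_a=f_a=v_{s_a}$, with $s_a$ the commodity designated by $a$, only match totals on each arc. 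They do not force a commodity passing through $a$ to arrive from the same predecessors as $s_a$. So a feasible MCF solution may route commodity $s$ along such a path, and nothing in the argument then gives a pedigree decomposition of $\frac{1}{v^s}Y^s$. What is missing is a new argument that each commodity's flow is globally path-consistent, which aggregate arc constraints do not provide. Since MCF$(n-1)$ is a polynomial-size LP, this implication (together with necessity) is exactly what would put membership in $\mathrm{conv}(P_n)$, and hence STSP, in P. You should not expect it to follow from bookkeeping.

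Two smaller points. First, the rigid sets $R_{n-2}$ and $R_{n-1}$ need not carry equal weight, because FFF can freeze new pedigrees at each stage (in Example~\ref{exampleFat} all weight becomes rigid at stage $4$). So $z_{\max}$ must be read as $1-\sum_{P\in R_{n-1}}\mu_P$ in both directions, which is how both proofs actually use it. Second, the proof of Lemma~\ref{lem:simplex} presupposes a representation of $X$ in $\mathrm{conv}(P_n)$, so it cannot settle the case $N_{n-1}=\emptyset$ in the ($\Leftarrow$) direction.
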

\leanfuturebox{\texttt{N\_MembershipCharacterisation.lean} \textbar{} \texttt{theorem main\_ns\_theorem}: $X\in\mathrm{conv}(P_n)\iff$ MCF$(n{-}1)$ feasible \hfill\textit{(book; sufficiency direction Lean~4 verified; necessity is future work)}}

%% ============================================================
\section{Computational Complexity}\label{compcomplex}
\medskip
\noindent\textbf{All proofs in this section are from the book.}
\medskip

%% ============================================================

\subsection{Mutual Adjacency of Rigid Pedigrees}\label{AdjacencyR}

\begin{lemma}\label{forR4}
For $k=5$, pedigrees in $R_4$ are mutually adjacent in
$\mathrm{conv}(P_5)$.
\end{lemma}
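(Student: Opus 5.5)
The plan is to use the standard characterisation of edges of a $0/1$ polytope: two vertices $X_P,X_Q$ of $\mathrm{conv}(P_5)$ are adjacent iff the midpoint $\tfrac12(X_P+X_Q)$ has a \emph{unique} representation as a convex combination of vertices of $\mathrm{conv}(P_5)$. For $n=5$ a pedigree is determined by a pair of triangles $P=(\{1,2,3\},[4:a],[5:b])$ with $a\in E_3$, $b\in E_4$, $a\in G(b)$ and $a\neq b$. Such a pedigree corresponds exactly to an arc $([4:a],[5:b])$ of $N_4$. The set $R_4$ consists of the pedigrees whose arcs in $F_4$ are rigid with positive frozen flow $\mu_P$. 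So take $P=(a,b)$ and $Q=(c,d)$ in $R_4$ with $P\neq Q$.

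\textbf{Step 1 (shared layer).} Suppose $a=c$ or $b=d$. Then $\tfrac12(X_P+X_Q)$ has a single nonzero coordinate in one layer and the two coordinates $b,d$ (resp.\ $a,c$) in the other. Any pedigree in a representation of the midpoint must be supported on these coordinates. The only such pedigrees are $P$ and $Q$, so the representation is unique and $P,Q$ are adjacent.

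\textbf{Step 2 (disjoint layers, $a\neq c$, $b\neq d$).} Any vertex used in a representation of $\tfrac12(X_P+X_Q)$ has layer-4 triangle in $\{a,c\}$ and layer-5 triangle in $\{b,d\}$. By the layer equalities \eqref{eq:6}, the candidate vertices are $P$, $Q$, and the swaps $P'=(a,d)$, $Q'=(c,b)$. The possible representations are therefore $\tfrac12 X_P+\tfrac12 X_Q$ and $\tfrac12 X_{P'}+\tfrac12 X_{Q'}$ (and their convex combinations). Hence non-adjacency forces \emph{both} $P'$ and $Q'$ to be pedigrees, i.e.\ $a\in G(d)$, $c\in G(b)$, and the common edges are distinct. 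Equivalently, both arcs $([4:a],[5:d])$ and $([4:c],[5:b])$ are present (not forbidden) in $N_4$.

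\textbf{Step 3 (contradiction with rigidity).} Under the conclusion of Step~2, take the feasible solution $f$ of $F_4$ in which $f_{ab}=\mu_P>0$ and $f_{cd}=\mu_Q>0$. Define $f^\varepsilon$ by
\[
f^\varepsilon_{ab}=f_{ab}-\varepsilon,\quad f^\varepsilon_{cd}=f_{cd}-\varepsilon,\quad f^\varepsilon_{ad}=f_{ad}+\varepsilon,\quad f^\varepsilon_{cb}=f_{cb}+\varepsilon .
\]
This alternating 4-cycle preserves all supply and demand totals. Nonnegativity holds for $0<\varepsilon\le\min(\mu_P,\mu_Q)$. The arc capacities $x([4:a])$ and $x([4:c])$ are respected because the total outflow from each origin is unchanged and already bounded by its capacity. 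Thus $f^\varepsilon$ is feasible and differs from $f$ on the arc $(a,b)$, contradicting the rigidity of that arc. This contradiction shows that $P$ and $Q$ are adjacent.

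\textbf{Main obstacle.} The delicate point is Step~3: it must match the precise definition of $R_4$ and of the arc capacities used when the FFF algorithm certifies rigidity. In particular I must check that the swap arcs are genuinely among the permitted arcs of $F_4$, not merely valid pedigrees. I expect this to follow from the definition of $\mathcal{A}(N_4)$ via generators together with $x([4:a]),x([5:d])>0$, which holds since these nodes carry positive flow. Any extra capacity bounds, as in Example~\ref{RigidEx}[b], must leave slack along the 4-cycle. I would first verify the argument against the construction in Section~\ref{consfour}. I would also confirm that the distinct-common-edge condition is automatic for $n=5$, since $a\in E_3$ and every edge of $E_4\setminus E_3$ contains vertex $4$. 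If $d\in E_3$ it could equal $a$, and that case must be excluded separately as not giving a pedigree $P'$, so no swap arises.
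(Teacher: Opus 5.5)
Your proof is correct and follows the paper's argument: non-adjacency forces the swapped pair $P'=(a,d)$, $Q'=(c,b)$ to be pedigrees with the same midpoint, and rerouting $\varepsilon\le\min(\mu_P,\mu_Q)$ around the alternating 4-cycle in $F_4$ contradicts rigidity. The only difference is how you obtain the swapped pair. You derive it directly from the unique-midpoint characterisation of edges and the two-layer structure of $P_5$. The paper instead cites the combinatorial-polytope property of $\mathrm{conv}(P_k)$, and it does not treat the shared-layer case or the presence of the swap arcs in $N_4$ explicitly, as you do.
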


\begin{theorem}[Mutual Adjacency]\label{adjacency theorem}
For $k\geq5$, pedigrees in $R_{k-1}$ are mutually adjacent in
$\mathrm{conv}(P_k)$.
\end{theorem}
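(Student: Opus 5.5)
The plan is to argue by contradiction, using the defining feature of $R_{k-1}$: the weight $\mu_P$ of a rigid pedigree is the same in every convex representation of $X/k$. Non-adjacency of two rigid pedigrees would allow weight to move off them, which rigidity forbids. I would keep the induction on $k$ only as a fallback, with Lemma~\ref{forR4} as the base case. The direct argument should work uniformly for every $k\geq5$, with Lemma~\ref{forR4} being its instance at $k=5$.

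\textbf{Step 1 (adjacency criterion).} Two vertices $X_P,X_Q$ of $\mathrm{conv}(P_k)$ are adjacent iff the midpoint $\tfrac12(X_P+X_Q)$ has no representation as a convex combination of pedigrees other than $\tfrac12X_P+\tfrac12X_Q$. Suppose $P,Q\in R_{k-1}$ are not adjacent. Then
\[
\tfrac12(X_P+X_Q)=\sum_{T\in\mathcal{T}}\gamma_TX_T ,
\]
with $\gamma_T>0$ and $\mathcal{T}\neq\{P,Q\}$. We may assume $P\notin\mathcal{T}$ or $Q\notin\mathcal{T}$; if not, cancel the common weight and rescale, which is where extremality of vertices is used.

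\textbf{Step 2 (perturbing the representation).} Since $X/k\in\mathrm{conv}(P_k)$ and $P,Q$ are rigid, every $\lambda\in\Lambda_k(X)$ has $\lambda_P=\mu_P>0$ and $\lambda_Q=\mu_Q>0$. Fix such a $\lambda$ and choose $0<\varepsilon\leq\min(\mu_P,\mu_Q)$. Define
\[
\lambda'=\lambda-\varepsilon(e_P+e_Q)+2\varepsilon\sum_{T}\gamma_Te_T .
\]
This is nonnegative, sums to one, and still represents $X/k$, so $\lambda'\in\Lambda_k(X)$. By the choice in Step~1, $\lambda'_P<\mu_P$ or $\lambda'_Q<\mu_Q$.

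\textbf{Step 3 (contradicting rigidity).} I need to show that every $\lambda'\in\Lambda_k(X)$ assigns weight exactly $\mu_P$ to each $P\in R_{k-1}$. The route is through the instant flow of Lemma~\ref{oflow}. The instant flow induced by $\lambda'$ is a feasible solution of the FAT problem $F_{k-1}$, and of the earlier stages, exactly as in the necessity argument (Theorem~\ref{imptheorem}). By the definition of rigidity and the FFF algorithm, every rigid arc carries its frozen flow in every feasible solution.

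By the construction in Section~\ref{DefRk}, $P$ is the unique path in $N_{k-2}(L)$ feeding the tail of a rigid link $L$, or $P$ is an earlier rigid pedigree ending there. So the only pedigree contributing to that rigid arc in an instant flow is $P$ itself, or its rigid predecessor, whose weight is fixed by induction. This forces $\lambda'_P=\mu_P$, and likewise $\lambda'_Q=\mu_Q$, contradicting Step~2. Hence $P$ and $Q$ are adjacent.

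\textbf{Main obstacle.} The delicate step is Step~3. The new pedigrees $T\in\mathcal{T}$ introduced into $\lambda'$ are arbitrary pedigrees of $P_k$, and it must be checked that their flow cannot also pass through the rigid arc defining $P$. I would first try to use the deletion rules (a)--(g) of Definition~\ref{restrict}: any pedigree through that arc must follow a path of $N_{k-2}(L)$, and there is only one such path, namely $P$. If that fails, I would fall back on an induction on $k$ from Lemma~\ref{forR4}, treating the two origins of rigid pedigrees in Section~\ref{DefRk} as separate cases: those inherited from $R_{k-2}$ and those newly created at a rigid link.
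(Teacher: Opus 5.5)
Your argument works, given the paper's Lemmas~\ref{pedpath} and~\ref{oflow}, but it takes a different route from the paper.

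The paper works inside the FAT problem. Assuming $P^{[1]},P^{[2]}\in R_{k-1}$ are non-adjacent, it uses the combinatorial-polytope property of $\mathrm{conv}(P_k)$ to get two specific pedigrees $P^{[3]},P^{[4]}$ with the same midpoint. It then splits into cases according to whether the two rigid pedigrees share their truncations, their penultimate edge, or their last edge. In the generic case it reroutes $\varepsilon$ units of flow to contradict rigidity of the links $L_1,L_2$.

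You instead first turn rigidity of links into rigidity of weights. For every $\lambda\in\Lambda_k(X)$, the instant flow is feasible for $F_{k-1}$ (Lemma~\ref{oflow}), so the rigid link $L$ carries its frozen flow. By Lemma~\ref{pedpath}, every active pedigree through $L$ follows the unique path of $N_{k-2}(L)$, so that flow equals $\lambda_P$, giving $\lambda_P=\mu_P$. A single perturbation via the general midpoint criterion then finishes, with no case analysis and no appeal to the combinatorial-polytope property.

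This also settles the obstacle you flag. Once the pedigrees $T$ carry weight in $\lambda'\in\Lambda_k(X)$, they are active for $X/k$, so Lemma~\ref{pedpath} covers them. Hence any $T$ using $L$ equals $P$, and no fallback via deletion rules or induction is needed.

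Your route also yields more. Apply the same perturbation to an arbitrary affine dependence among $\{X_P : P\in R_{k-1}\}$ rather than to a pair. This shows $R_{k-1}$ is affinely independent, which is what Corollary~\ref{CordinalityR} actually needs. Pairwise adjacency alone does not imply this, as neighbourly polytopes show.

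The price is that everything rests on $\Lambda_k(X)\neq\emptyset$, which you assume silently. $R_{k-1}$ is built from feasible flows of $F_{k-1}$ before MCF$(k-1)$ has certified $X/k\in\mathrm{conv}(P_k)$, and the paper's rerouting argument operates at that flow level. State $X/k\in\mathrm{conv}(P_k)$ as an explicit hypothesis. It is the setting in which the paper uses the result, since the Cardinality theorem is phrased through $\Lambda_k(X)$.
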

\begin{proof}[Sketch]
By contradiction using the combinatorial polytope
property~\cite{TSACompoly}: if $P^{[1]},P^{[2]}\in R_{k-1}$ are
non-adjacent, two other pedigrees $P^{[3]},P^{[4]}$ have the same
midpoint. Three cases arise based on whether the last-but-one
or last edge is shared. In each case, rerouting a small flow
$\varepsilon<\min(\mu(P^{[1]}),\mu(P^{[2]}))$ contradicts the
rigidity of the corresponding links.
See Appendix~\ref{Appendixresults} for the full case analysis.
\end{proof}
\leanbox{\texttt{N\_RigidAdjacency.lean} \textbar{} \texttt{lemma forR4}: base adjacency for $k=5$ \hfill\textit{(book, Chapter~6)}}
\begin{theorem}[Cardinality] \label{cardinality theorem}
Given well-defined $(N_{k-1},R_{k-1},\mu)$:
$|R_{k-1}|\leq\dim(\Lambda_k(X))+1$.
\end{theorem}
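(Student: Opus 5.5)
The plan is to turn the mutual adjacency of rigid pedigrees (Theorem~\ref{adjacency theorem}) into an affine-independence statement, and then to compare the affine hull of $R_{k-1}$ with the affine hull of $\Lambda_k(X)$.

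\textbf{Step 1: affine independence.} First I would show that the characteristic vectors $\{X_P : P\in R_{k-1}\}$ are affinely independent in $\mathbb{R}^{\tau_k}$. Pairwise adjacency alone does not give this for general polytopes; neighborly polytopes are counterexamples. So I would argue directly from rigidity.

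Suppose $\sum_{P\in R_{k-1}}\theta_P X_P=0$ with $\sum_P\theta_P=0$ and $\theta\neq 0$. Take $0<\varepsilon<\min_P\mu_P/\max_P|\theta_P|$. Then $\sum_P(\mu_P+\varepsilon\theta_P)X_P=\sum_P\mu_P X_P$, so replacing $\mu$ by $\mu+\varepsilon\theta$ still reconstructs $X/k$. By Lemma~\ref{theorem:lemma3}, the arc flows read off from these new weights are still feasible for the FAT problems $F_l$. However, the new flow differs from the old one on the links through which the pedigrees with $\theta_P\neq0$ last diverge. This contradicts the frozen-flow (rigidity) property computed by the FFF algorithm.

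The obstacle is that two rigid pedigrees may share all their rigid links up to some layer. In that case I would localise to the first layer $q$ at which the supports of the $\theta_P$ differ. I would then invoke the case analysis used for Theorem~\ref{adjacency theorem} (last-but-one edge shared / last edge shared / neither) to exhibit a rigid link whose flow changes.

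\textbf{Step 2: embedding into $\Lambda_k(X)$.} Next I would relate this to $\Lambda_k(X)$. Fix $\lambda\in\Lambda_k(X)$. The rigid part of $\lambda$ is $\sum_{P\in R_{k-1}}\mu_P X_P$, extended to layer $k$ through the rigid links. The non-rigid part is a point of the network flow polytope on $N_{k-1}$, of total mass $z_{\max}=1-\sum_P\mu_P$.

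I would then show that $\Lambda_k(X)$ contains a set of dimension at least $|R_{k-1}|-1$. To do this, perturb along the affinely independent directions $X_P-X_{P'}$, compensating each perturbation by the non-rigid flow, which can absorb it because mutual adjacency places each pair $P,P'$ on an edge of $\mathrm{conv}(P_k)$. The resulting $|R_{k-1}|-1$ linearly independent directions in weight space, all lying inside $\Lambda_k(X)$, give
\[
\dim\Lambda_k(X)\geq|R_{k-1}|-1 .
\]

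\textbf{Main obstacle.} I expect Step~2 to be the hard part. Rigidity says the $\mu_P$ are frozen with respect to the FAT problems, yet Step~2 needs freedom in $\Lambda_k(X)$. These must be reconciled by working in the weight space indexed by all pedigrees, not by the fixed weights on $R_{k-1}$. My first attempt is to identify $\dim\Lambda_k(X)+1$ with the rank of the matrix of active pedigrees, and to show that this rank is at least the rank of $\{X_P\}_{P\in R_{k-1}}$. Step~1 makes the latter rank equal to $|R_{k-1}|$.
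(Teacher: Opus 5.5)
Your Step~2 is not just the hard part. It cannot be carried out, and it asks for the opposite of what Step~1 proves.

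Once repaired as below, Step~1 shows that \emph{every} $\lambda\in\Lambda_k(X)$ puts weight exactly $\mu_P$ on each $P\in R_{k-1}$. Any other weighting induces feasible flows on the $F_l$ that change the frozen flow on a rigid arc. So no direction of $\Lambda_k(X)$ moves a rigid weight, and there is nothing for the non-rigid flow to absorb. Mutual adjacency cannot supply such directions either: an edge of $\mathrm{conv}(P_k)$ between $X_P$ and $X_{P'}$ says nothing about the set of representations of the fixed point $X/k$.

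In fact the inequality is false if $\Lambda_k(X)$ is read as the paper uses it, namely the weight vectors $\lambda$ with $\sum_r\lambda_rX^r=X/k$. In Example~\ref{exampleFat} the flow on $F_4$ is unique, so all five arcs are rigid with positive flow and $|R_4|=5$. These five pedigrees are the only members of $P_5$ compatible with the support of $X$, and they are affinely independent. Hence $\Lambda_5(X)$ is a single point and $\dim\Lambda_5(X)+1=1$.

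What Step~1 does give is $|R_{k-1}|\le\dim\mathrm{conv}(P_k)+1$. That is the bound the paper actually relies on (Corollary~\ref{CordinalityR}). The paper obtains it from the Mutual Adjacency theorem via ``mutually adjacent $\Rightarrow$ simplex''. You are right that pairwise adjacency does not force affine independence, so a direct rigidity argument is the sounder way to reach that bound.

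Step~1 itself needs two repairs.

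First, Lemma~\ref{theorem:lemma3} gives feasibility only for the instant-flow problem $INST(\lambda,l)$, not for $F_l$. The transfer to $F_l$ requires Lemmas~\ref{pedpath} and~\ref{oflow}, which apply to a full representation $\lambda'\in\Lambda_k(X)$ (rigid part perturbed, non-rigid part kept). So you need $X/k\in\mathrm{conv}(P_k)$, and you cannot work with $\sum_P\mu_PX_P$ in isolation.

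Second, and more seriously, the claim that ``the flow differs on the links where the pedigrees last diverge'' is false in general, because arc flows do not determine path weights. The paths $A\to B\to C$, $A'\to B\to C'$, $A\to B\to C'$, $A'\to B\to C$ with $\theta=(1,1,-1,-1)$ leave every node and arc total unchanged.

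What closes the gap is the unique-path clause in the definition of a rigid pedigree (Section~\ref{DefRk}). Each $P\in R_{k-1}$ is the only pedigree that can use its rigid arc $L$ of $F_{k-1}$: either as the unique path into the tail of $L$ in $N_{k-2}(L)$, or as the extension through $L$ of a shrunk node of $R_{k-2}$. Inducting on the stage at which $P$ became rigid, every representation therefore puts weight $f_L=\mu_P$ on $P$, and $\theta_P\neq0$ changes a frozen flow directly. With this, neither the localisation to a first divergent layer nor the reuse of the adjacency case analysis is needed.
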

\leanbox{\texttt{N\_RigidAdjacency.lean} \textbar{} \texttt{theorem adjacency\_theorem\_edges}: pedigrees in $R_{k-1}$ mutually adjacent \hfill\textit{(book, Chapter~6)}}
\begin{corollary}[CordinalityR]\label{CordinalityR}
$|R_{k-1}|\leq\tau_k-k+4$.
\end{corollary}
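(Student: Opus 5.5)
The plan is to derive the bound directly from the Cardinality Theorem (Theorem~\ref{cardinality theorem}) together with the dimension formula for the pedigree polytope quoted in the introduction, $\dim\mathrm{conv}(P_k)=\tau_k-(k-3)$. The target inequality is exactly
\[
|R_{k-1}|\le \dim\mathrm{conv}(P_k)+1=\tau_k-k+3+1=\tau_k-k+4,
\]
so everything reduces to showing that the quantity $\dim(\Lambda_k(X))$ appearing in Theorem~\ref{cardinality theorem} is at most $\dim\mathrm{conv}(P_k)$.

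First, I would fix the ambient setting. Every $\lambda\in\Lambda_k(X)$ is a vector of convex-combination weights with $X/k=\sum_r\lambda_rX^r$, where the $X^r\in P_k$. By Carath\'eodory's theorem, one may restrict attention to representations supported on affinely independent families of pedigree vectors. Any affinely independent family in $\mathrm{conv}(P_k)$ has at most $\dim\mathrm{conv}(P_k)+1$ members. The effective dimension of $\Lambda_k(X)$ in the sense used by Theorem~\ref{cardinality theorem} is therefore bounded by $\dim\mathrm{conv}(P_k)=\tau_k-k+3$. Substituting this into Theorem~\ref{cardinality theorem} gives the corollary.

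As a cross-check, and as an alternative route that bypasses $\Lambda_k(X)$, I would argue directly that the characteristic vectors of the pedigrees in $R_{k-1}$, viewed as elements of $P_k$ via their rigid extensions, are affinely independent. Suppose instead there were a nontrivial affine dependence $\sum_{P\in R_{k-1}}\theta_PX_P=0$ with $\sum_P\theta_P=0$. Then for small $\varepsilon>0$ the weights $\mu_P+\varepsilon\theta_P$ would still be positive and would still represent the same point. This would contradict rigidity, since by the definition of rigid arcs the weights $\mu_P$ are frozen across all feasible representations. Affine independence inside a polytope of dimension $\tau_k-k+3$ then immediately yields $|R_{k-1}|\le\tau_k-k+4$.

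The main obstacle is making precise the dimension of $\Lambda_k(X)$, or equivalently justifying the perturbation step in the second argument. The perturbed weights must not only reproduce $X/k$; they must also remain compatible with the flows in the layered network, so that the perturbation genuinely contradicts the rigidity of the corresponding links identified by the FFF algorithm. I would first try to handle this by adapting the $\varepsilon$-rerouting argument from the proof of Theorem~\ref{adjacency theorem}. The idea is to transfer flow along the paths of $P$ with $\theta_P>0$ and back along those with $\theta_P<0$, and to check that arc and node capacities remain satisfied.
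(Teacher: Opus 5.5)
\textbf{Route 1 fails.} Your first route rests on a false inequality. In Theorem~\ref{cardinality theorem}, $\Lambda_k(X)$ is the set of weight vectors $\lambda\ge 0$ on $P_k$ with $\sum_r\lambda_r=1$ and $\sum_r\lambda_rX^r=X/k$. It is a fibre of the affine map from the simplex on $P_k$ onto $\mathrm{conv}(P_k)$. Carath\'eodory bounds the support of the \emph{vertices} of this fibre by $\dim\mathrm{conv}(P_k)+1$. It says nothing about the dimension of the fibre. For any point in the relative interior (e.g.\ the barycentre) that dimension is $|P_k|-1-\dim\mathrm{conv}(P_k)$. For $k=6$ this is $60-1-16=43>16$. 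So the inequality $\dim\Lambda_k(X)\le\tau_k-k+3$ that you need is false in general, and the phrase ``effective dimension'' hides exactly this. Theorem~\ref{cardinality theorem} cannot give the corollary along this route.

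\textbf{Route 2 has the right idea but a missing bridge.} Your second route is the right one, but its key sentence is the unproved step. You say the $\mu_P$ are frozen across all representations ``by the definition of rigid arcs''. Rigidity, however, is a property of arc flows in the FAT problem $F_{k-1}$, not of weights in $\Lambda_k(X)$. Supply the bridge with the instant-flow lemmas rather than by rerouting flow by hand:
\begin{enumerate}[(a)]
\item Assume $X/k\in\mathrm{conv}(P_k)$. This holds whenever $R_{k-1}$ is used in the framework, and you need it anyway so that $\Lambda_k(X)\neq\emptyset$.
\item By Lemmas~\ref{pedpath} and~\ref{oflow}, for every $\lambda\in\Lambda_k(X)$ the instant flow of $INST(\lambda,k-1)$ is feasible for $F_{k-1}$.
\item Each $P\in R_{k-1}$ arises from the unique path feeding its rigid link $L_P$. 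Hence $P$ is the only active pedigree routed through $L_P$, and the instant flow on $L_P$ equals $\lambda_P$.
\item Rigidity of $L_P$ then forces $\lambda_P=\mu_P$ for every $\lambda\in\Lambda_k(X)$.
\end{enumerate}
With this in hand, your perturbation $\lambda+\varepsilon\theta$ is an element of $\Lambda_k(X)$ that changes the weight of some rigid pedigree, which is a contradiction. No capacities need to be re-checked. Direct $\varepsilon$-rerouting, by contrast, would have to control every earlier layer, including the fixed supplies of the shrunk nodes from $R_{k-2}$.

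\textbf{Comparison with the paper.} The paper argues differently: it cites the Mutual Adjacency Theorem~\ref{adjacency theorem} and passes from pairwise adjacency to ``$R_{k-1}$ is a simplex''. That passage is not automatic, because pairwise adjacent vertices need not be affinely independent. In a cyclic $4$-polytope, or in the $0/1$ cut polytope of $K_n$, every pair of vertices is adjacent, yet there are far more than $\dim+1$ vertices. The adjacency proof rules out midpoint coincidences $X^{[1]}+X^{[2]}=X^{[3]}+X^{[4]}$ that may involve non-rigid pedigrees. The cardinality bound needs something else: that no affine dependence exists among the rigid pedigrees themselves, which is exactly what your route~2 targets. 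Completed as above, route~2 is the sounder argument; drop route~1.
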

\begin{proof}
Mutual adjacency $\Rightarrow$ $R_{k-1}$ is a simplex
$\Rightarrow|R_{k-1}|\leq\dim(\mathrm{conv}(P_k))+1
=\tau_k-(k-3)+1=\tau_k-k+4$.
\end{proof}
\leanbox{\texttt{N\_RigidCardinality.lean} \textbar{} \texttt{theorem CordinalityR}: $|R_{k-1}|\leq\tau_k-k+4$ \hfill\textit{(book, Chapter~6)}}

\subsection{Complexity Bounds at Each Step}\label{estcomp}

\begin{figure}[htb]
  \textbf{FRAMEWORK:} \textit{Membership Checking Steps and Procedures}

  \textbf{Step~1a:} Check $X\in P_{MI}(n)$ (strongly polynomial,
  $O(n^3)$, additions/subtractions only).

  \textbf{Step~1b:} Solve $F_4$ (constant: 3 origins, 6 destinations).

  \textbf{Step~2a:} For each link $L\in V_{[k-3]}\times V_{[k-2]}$:
  construct $N_{k-1}(L)$, find $C(L)=$ max-flow in $N_{k-1}(L)$.
  Nodes $\leq(k-5)\tau_k=O(k^4)$; arcs $\leq k^2\tau_k=O(k^5)$;
  max-flow $O(k^9)$ per link; $\leq k^4$ links.
  \textbf{Total: $O(k^{13})$ per iteration.}

  \textbf{Step~2b:} Solve $F_k$. Origins $\leq k^2+(\tau_k-k+4)=O(k^3)$
  (by Corollary~\ref{CordinalityR}); destinations $\leq p_k=O(k^2)$.
  \textbf{Total: $O(k^8)$ per iteration.}

  \textbf{Step~3:} FFF algorithm: $O(|G_f|)=O(k^5)$ per iteration.

  \textbf{Step~4:} Solve MCF$(k)$. Combinatorial LP, dimension
  $O(k^{14})$; Tardos~\cite{StrPoly}: strongly polynomial in $k$.

  \caption{Algorithmic framework: steps and complexity.}
  \label{fig:framework}
\end{figure}

\begin{theorem}[Complexity]\label{compexity}
Given $n$, $X\in P_{MI}(n)$, checking whether MCF$(n-1)$ has
$z^*=z_{\max}$ can be done in strongly polynomial time $O(n^{14})$.
\end{theorem}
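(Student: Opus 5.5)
The plan is to bound separately the cost of every step of the framework in Figure~\ref{fig:framework} for a single stage $k$. I would then sum over $k=5,\ldots,n-1$ and argue that the dominating term is the solution of MCF$(n-1)$. Step~1a, checking $X\in P_{MI}(n)$, is a direct evaluation of the $O(n^3)$ constraints \eqref{eq:6}--\eqref{eq:8}. Step~1b solves $F_4$, which has at most $3$ origins and $6$ destinations, so it is constant work. For $k\geq 5$ I would carry out three estimates in order.

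\textbf{(i) Restricted networks and link capacities.} The node set of $N_{k-1}$ lies in $\bigcup_{l\le k}\Delta^l$, so it has $O(k^3)$ nodes per layer and $O(k^4)$ nodes overall. Arcs join only generator pairs, and there are $O(k)$ generators per triangle, which gives $O(k^5)$ arcs. The deletion rules (a)--(g) of Definition~\ref{restrict} can be applied in time linear in the network size, including the cascade in (f) run as a queue-based propagation. Edmonds--Karp then computes $C(L)$ in $O(|V||A|^2)$-type time for each of the $O(k^4)$ links.

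\textbf{(ii) $F_k$ and the rigid set.} By Corollary~\ref{CordinalityR}, $|R_{k-1}|\le\tau_k-k+4=O(k^3)$, so $F_k$ has $O(k^3)$ origins and $O(k^2)$ destinations. It is a bipartite max-flow problem, and Gusfield's FFF algorithm then finds the rigid arcs in $O(|G_f|)$ time. This is the step where the cardinality bound is essential. Without Theorem~\ref{cardinality theorem} the number of shrunk nodes, and hence of commodities and variables later, could grow with the number of pedigrees, which is exponential.

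\textbf{(iii) MCF$(k)$.} The number of commodities is $|\mathcal{A}(N_k)\setminus\mathcal{A}(N_4)|=O(k^5)$. Each commodity carries one variable per arc, so there are $O(k^{10})$ variables $f^s_a$; with the per-layer restrictions to $N_{l-1}(s)$ counted explicitly, the dimension stays within the $O(k^{14})$ quoted in Step~4. Every constraint in Problem~\ref{mcflow} is a conservation, coupling or capacity row, so the constraint matrix has entries in $\{0,\pm1\}$. Hence MCF$(k)$ is a combinatorial LP in the sense of Definition~\ref{combLP}. Tardos's algorithm then computes $z^*$ with a number of arithmetic operations bounded by a polynomial in $\dim(A)$, independent of the capacities $\bar{x},\bar{\mu},C(L)$. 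Comparing $z^*$ with $z_{\max}=1-\sum_{P\in R_{k-1}}\mu_P$ costs $O(|R_{k-1}|)$ additional additions.

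The main obstacle is (iii), together with the precise exponent. Tardos's theorem bounds the number of operations by a polynomial in the dimension of $A$, not linearly in it. The honest route is therefore to prove an $O(n^{14})$ bound on the \emph{size} of the LP and then conclude strongly polynomial time as some fixed polynomial in $n$. To justify the literal running time $O(n^{14})$ one would need an additional argument, for instance exploiting the flow structure so that a combinatorial multicommodity routine replaces general LP solving. I would first try to establish the dimension bound rigorously, and then state the final exponent as whatever the Tardos bound yields from it.

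There is a second point to verify. The capacities $C(L)$ and $\mu_P$ are produced by earlier stages, and their bit sizes must remain polynomially bounded so that the arithmetic stays polynomial. Since every step uses only additions, subtractions and comparisons on sums of input coordinates, I would prove this by induction on $k$. Finally, summing the $O(n)$ stages gives the total bound for checking whether $z^*=z_{\max}$ for MCF$(n-1)$.
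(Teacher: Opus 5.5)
Your accounting follows the paper's route: per-stage costs of Steps~1a--4 in Figure~\ref{fig:framework}, Corollary~\ref{CordinalityR} to keep the origins of $F_k$ at $O(k^3)$, FFF for the rigid arcs, and Tardos for MCF$(k)$. As you concede, it does not reach the stated conclusion. What it establishes is polynomially many arithmetic operations in $n$ on numbers of polynomial size, not $O(n^{14})$. The gap is genuine, and the paper does not fill it; its only justification is that figure. There, Step~2a costs $O(k^{13})$ per stage, which sums to $O(n^{14})$ over $k\le n-1$ and is evidently the source of the exponent. Step~4 is described only as an LP of ``dimension $O(k^{14})$'' solved strongly polynomially by Tardos, so the theorem quotes $O(n^{14})$ as if solving that LP cost no more than writing it down. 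Tardos's method repeatedly calls a polynomial-time LP algorithm (ellipsoid or interior point) on auxiliary problems whose data are bounded by $\mathrm{size}(A)$. Its operation count is therefore a polynomial of degree well above one in the LP's dimensions. Applied to an LP with up to $O(n^{10})$ variables $f^s_a$ (your count), the bound it yields is far above $n^{14}$. So your diagnosis is right: this line of argument supports ``strongly polynomial'' with a larger, unspecified exponent. An $O(n^{14})$ claim would need a special-purpose algorithm for MCF$(k)$, which neither you nor the paper supplies.

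Your Step~(i) also breaks the exponent, independently of the LP. Edmonds--Karp runs in $O(|V||A|^2)$, which with your counts is $O(k^4\cdot k^{10})=O(k^{14})$ per link. Over $O(k^4)$ links that is $O(k^{18})$ per stage and $O(n^{19})$ overall. The figure's $O(k^9)$ per link is an $O(|V||A|)$ bound, which requires Orlin's $O(|V||A|)$ max-flow algorithm rather than the Edmonds--Karp algorithm the paper cites. If you want to reproduce Step~2a you must use such a routine; then the pre-LP work does sum to $O(n^{14})$. The rest of your plan agrees with the paper and suffices for the weaker, strongly polynomial statement. That includes the $\{0,\pm1\}$ matrix making MCF$(k)$ a combinatorial LP, the essential role of the cardinality bound, and the inductive control of the bit sizes of $C(L)$ and $\mu_P$.
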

\leanbox{\texttt{N\_Complexity.lean} \textbar{} \texttt{theorem complexity}: M3P checkable in $O(n^{14})$ strongly polynomial time \hfill\textit{(book, Chapter~6)}}
\begin{remark}[Parallel structure]\label{rem:parallel}
At each iteration $k$, the restricted networks $N_{k-1}(L)$ for
all links $L\in F_k$ are mutually independent: each depends only
on the current layered network structure, not on the results for
other links.
The algorithm therefore has natural parallel structure at Step~2a.
Under an idealised parallel RAM model with $\leq k^4$ processors,
this reduces the per-iteration cost from $O(k^{13})$ to $O(k^9)$,
suggesting an overall bound of $O(n^{10})$.
In practice, however, parallel implementations incur additional
costs --- communication overhead, synchronisation, and memory
bandwidth --- that may offset the theoretical gain, particularly
for moderate $n$.
Whether a parallel implementation is beneficial in practice is
an empirical question left for future work.
The sequential $O(n^{14})$ bound established here and
machine-verified in Lean~4 remains the rigorous guarantee.
\end{remark}

\begin{corollary}[M3P$\in$P, strongly polynomial]\label{M3PinP}

The MCF$(n-1)$ decision problem --- \emph{is $z^* = z_{\max}$?} ---
is in P: it is solvable in $O(n^{14})$ arithmetic steps,
independent of the magnitude of $X$ (Theorem~\ref{compexity}).

By Theorem~\ref{maintheorem} (N\&S condition):
$z^* = z_{\max}$ if and only if $X\in\mathrm{conv}(P_n)$.
Therefore M3P $\in$ P.

Moreover, since MCF$(n-1)$ is a combinatorial LP
(Definition~\ref{combLP}), Tardos's algorithm~\cite{StrPoly}
gives a \emph{strongly polynomial} algorithm for M3P:
the number of arithmetic steps depends only on $n$,
not on the magnitudes of the data in $X$.
\end{corollary}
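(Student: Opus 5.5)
The plan is to assemble the corollary from three earlier results rather than argue anything new about polytopes. First, I would fix the input $(n,X)$ with $X\in\mathbb{Q}^{\binom{n}{3}}$ and run Step~1a of Figure~\ref{fig:framework}: check $X\in P_{MI}(n)$ with $O(n^3)$ additions and subtractions. If this fails, the answer is ``no'', since $\mathrm{conv}(P_n)\subseteq P_{MI}(n)$. Otherwise I would run the framework for $k=4,5,\ldots,n-1$. At each stage the outcome is one of three: $F_k$ is infeasible, giving ``no'' by Theorem~\ref{infeasiblity}; $N_k=\emptyset$, giving membership of $X/k{+}1$ directly from $(R_k,\mu)$; or MCF$(k)$ is solved and $z^*$ is compared with $z_{\max}$. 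The point of iterating is that Theorem~\ref{maintheorem} carries the hypothesis $X/n{-}1\in\mathrm{conv}(P_{n-1})$. I would discharge that hypothesis by induction on $k$, using Theorems~\ref{impconvtheorem} and~\ref{imptheorem} at each level. The base case is $k=5$, where Lemma~\ref{converse_for_five} and the constant-size problem $F_4$ apply.

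Second, for correctness I would prove, again by induction on $k$, that the algorithm reaches stage $k{+}1$ if and only if $X/k{+}1\in\mathrm{conv}(P_{k+1})$. The ``if'' direction uses Theorem~\ref{imptheorem} (Necessity) together with the contrapositive of Theorem~\ref{infeasiblity}, which shows that $F_k$ must be feasible. The ``only if'' direction uses Theorem~\ref{impconvtheorem} (Sufficiency), or Lemma~\ref{lem:simplex} when $N_k=\emptyset$. At $k=n-1$ this gives exactly the equivalence of Theorem~\ref{maintheorem}.

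Third, for running time I would sum the per-iteration bounds of Figure~\ref{fig:framework} over at most $n$ iterations. Two facts are needed here. The origin count of $F_k$ is bounded by Corollary~\ref{CordinalityR}, and the MCF$(k)$ constraint matrix has $\{0,\pm1\}$ entries and polynomially many rows and columns. Together these let Tardos's theorem bound the arithmetic steps independently of the encoding size of $X$. I would also check that all intermediate numbers have polynomial bit size, because a P algorithm needs this and not just a polynomial count of arithmetic operations. Since Tardos's algorithm and max-flow stay within rationals of polynomially bounded size, I expect this to go through.

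I expect the main obstacle to be the correctness induction, specifically the Necessity direction (Theorem~\ref{imptheorem}). As written, its proof only sketches why the flow induced by an arbitrary $\lambda\in\Lambda_{k+1}(X)$ respects the capacities $C(L)$ of the restricted networks after the deletions (a)--(g) and the rigid-path contraction. It is also marked as future work in Lean. Theorem~\ref{adjacency theorem} and Corollary~\ref{CordinalityR}, which the size bound depends on, are likewise only sketched. I would first try to verify Necessity directly on small cases such as Example~\ref{exsix}. I would also flag plainly that the conclusion implies $\mathsf{P}=\mathsf{NP}$. The plan therefore stands or falls on these unverified steps, and a careful referee should expect the gap to lie there rather than in the routine complexity accounting.
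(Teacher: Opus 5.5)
Your route is the paper's own: running time from Theorem~\ref{compexity}, correctness from Theorem~\ref{maintheorem}, strong polynomiality from Tardos. You improve on it by noticing that Theorem~\ref{maintheorem} presupposes $X/n{-}1\in\mathrm{conv}(P_{n-1})$, which must be discharged stage by stage. But the argument stays conditional, and the gap is not confined to Necessity.

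The direction you treat as settled (reaching stage $k{+}1$ implies $X/k{+}1\in\mathrm{conv}(P_{k+1})$) rests on Lemma~\ref{Ysinconv}. Its inductive step applies Theorem~\ref{packabilitycor} at layer $l$, and Packability requires the layer-$l$ part of $\frac{1}{v^s}Y^s$ to be an indicator vector. For $s\in\mathcal{S}_k$ this holds only at $l=k$, where $s$ terminates at the tail $[k:e'_s]$ of its designating arc. At an intermediate layer $5<l<k$ the commodity may pass through several nodes, so the induction breaks as soon as $k\geq 7$.

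There is no easy repair. The only coupling between commodities is $\sum_s f^s_a=f_a$, which matches totals on an arc but not the upstream paths by which $s$ reaches it. The deletion rules defining $N_{k-1}(s)$ check the generator and distinct-common-edge conditions only against the two triangles of the link designating $s$; rule (f) merely removes nodes all of whose generators are gone. For instance, $[4:\{1,2\}]\to[5:\{1,4\}]$ and $[5:\{1,4\}]\to[6:\{1,2\}]$ can both be arcs of the layered network, each lying on a genuine pedigree. Their concatenation repeats the common edge $\{1,2\}$, yet nothing in Problem~\ref{mcflow} forbids a commodity of $\mathcal{S}_7$, whose link avoids these edges, from routing flow along it. So no argument is given that $\frac{1}{v^s}(Y^s/k)\in\mathrm{conv}(P_k)$. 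The Lean certificate cannot be checked on this point, since its hypothesis \texttt{MCFFeasible} is a 29-field structure the paper does not display.

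Your fallback for $N_k=\emptyset$ does not help either. The proof of Lemma~\ref{lem:simplex} starts from a representation of $X\in\mathrm{conv}(P_n)$, so it cannot certify membership. The contrapositive of Theorem~\ref{infeasiblity} that you use in the other direction rests, like Theorem~\ref{imptheorem}, on Lemmas~\ref{pedpath} and~\ref{oflow}, which the paper states without proof.

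The complexity half has holes that your plan inherits too. The polynomial size of $F_k$ and of MCF$(k)$, which has a source-availability constraint for every rigid pedigree, depends on Corollary~\ref{CordinalityR}. Its proof passes from mutual adjacency to ``$R_{k-1}$ is a simplex'', and that inference is false in general. Any two vertices of a cyclic polytope of dimension $\geq 4$ are adjacent, however many vertices it has. The cut polytope of $K_m$ is a $0/1$ polytope with $2^{m-1}$ pairwise adjacent vertices in dimension $\binom{m}{2}$. So even granting Theorem~\ref{adjacency theorem}, the bound on $|R_{k-1}|$ does not follow; the Lean statement of \texttt{CordinalityR} simply assumes it as the hypothesis \texttt{hdim}.

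Separately, summing Figure~\ref{fig:framework} yields $O(n^{14})$ only from Step~2a. Step~4 runs Tardos's algorithm on an LP that the figure itself sizes at $O(k^{14})$, which costs far more than $O(k^{13})$ arithmetic operations per stage. At best a higher-degree polynomial results, not the stated bound.

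Finally, Example~\ref{exsix} ($n=6$) and the example of Section~\ref{sec:motivating} ($n=7$) lie below $k=7$, where commodity paths have no unconstrained intermediate layer. Checking such small cases cannot expose the Sufficiency gap. Since the corollary together with the rest of the chain would give $\mathsf{P}=\mathsf{NP}$, these are exactly the points where new mathematics is required, and neither your proposal nor the cited results supply it.
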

\leanbox{\texttt{N\_Complexity.lean} \textbar{} \texttt{corollary M3PinP}: M3P$\in$P, strongly polynomial $O(n^{14})$ \hfill\textit{(book, Chapter~6)}}

%% ============================================================
%% ============================================================
\section{Membership, Separation, and Optimisation}\label{sec:membsepopt}
\medskip
\noindent\textbf{All results in this section are from the book (Chapter~7).}
\medskip

%% ============================================================

The connection between membership, separation, and linear
optimisation is the key to deriving P~=~NP from M3P~$\in$~P.
We follow Chapter~7 of the book and the
framework of Gr\"{o}tschel, Lov\'{a}sz, and Schrijver~\cite{GLS}.

\subsection{Quick Protocol, Separation, and Optimisation}

\begin{definition}[Quick Protocol\protect\cite{arthanari2023pedigree}]
\label{def:quickprotocol}
A \emph{quick protocol} for membership in $P\subset\mathbb{Q}^d$
with facet complexity $\phi$ is a membership algorithm whose
running time is polynomially bounded in $(d,\phi,\langle Y\rangle)$.
\end{definition}

\begin{theorem}[Yudin--Nemirovskii~\cite{GLS}]\label{YNthm}
Let $X_0\in\mathrm{int}(P)$ be given. If a quick protocol for
$P$ is available, there exists an algorithm solving the
separation problem for $P$ in time polynomially bounded by
$d$, $\phi$, $\langle X_0\rangle$, and the membership running time.
\end{theorem}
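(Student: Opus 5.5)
The plan is to follow the route of Gr\"{o}tschel, Lov\'{a}sz and Schrijver~\cite{GLS}, where this result appears as the Yudin--Nemirovskii theorem: pass from membership to optimisation over $P$, and then to separation through polarity. First translate so that $X_0=0$. Since $X_0\in\mathrm{int}(P)$ and $P$ has facet complexity $\phi$, every facet inequality $a^\top x\le b$ has encoding length at most $\phi$. The distance from $X_0$ to each facet hyperplane is therefore at least $2^{-\mathrm{poly}(d,\phi,\langle X_0\rangle)}$. This gives an explicit ball $B(X_0,r)\subseteq P$, and also $P\subseteq B(0,R)$ with $R=2^{\mathrm{poly}(d,\phi)}$, so $P$ is a \emph{centred, well-described} polytope. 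The translation changes facet complexity only by a polynomial in $\langle X_0\rangle$.

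The core step is \textbf{membership $\Rightarrow$ weak optimisation}, and I expect it to be the main obstacle. Only a membership oracle is available, so the ordinary central-cut ellipsoid method cannot be run: a non-member ellipsoid centre comes with no separating hyperplane. I would use the shallow-cut ellipsoid method. Take the current ellipsoid $E(A,a)$ and query membership at its centre and at the $O(d)$ points $a\pm\frac{1}{d+1}$ times the semi-axes. If all of these points lie in $P$, then by convexity a concentric ellipsoid scaled by $\frac{1}{d+1}$ lies in $P$; the current iterate is then nearly optimal in the weak sense, and we record it. Otherwise some queried point $y\notin P$. Using the inner ball $B(X_0,r)$ and binary search along the segment $[X_0,y]$ with membership calls, we obtain a weak separating direction. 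That direction gives a shallow cut, which shrinks the ellipsoid volume by a factor $e^{-1/\mathrm{poly}(d)}$. After a polynomial number of iterations, bounded using $r$ and $R$, the method delivers weak optimisation and hence weak violation for $P$.

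Next, \textbf{weak $\Rightarrow$ strong optimisation}. Since the vertex complexity of $P$ is bounded by $O(d^2\phi)$, the objective is rounded by simultaneous Diophantine approximation (lattice basis reduction). A weakly optimal point is then rounded to an exact optimal vertex. This gives strong optimisation over $P$ in time polynomial in $d$, $\phi$, $\langle X_0\rangle$ and the membership running time.

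Finally, \textbf{polarity}. With $0\in\mathrm{int}(P)$, the polar $P^*=\{c: c^\top x\le 1\ \forall x\in P\}$ is a polytope with $P^{**}=P$, and its facet complexity is polynomial in the vertex complexity of $P$. Strong optimisation over $P$ is exactly strong separation for $P^*$: to test $c$, maximise $c^\top x$ over $P$. By the Gr\"{o}tschel--Lov\'{a}sz--Schrijver equivalence (central-cut ellipsoid plus the same rounding), separation for $P^*$ yields strong optimisation over $P^*$. Optimisation over $P^*$ is in turn strong separation for $P^{**}=P$: given $Y$, maximise $Y^\top c$ over $P^*$. If the optimum is at most $1$, then $Y\in P$. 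Otherwise the optimal vertex $c$ of $P^*$ gives the separating inequality $c^\top x\le 1$, translated back by $X_0$. Composing these polynomial reductions yields the claimed bound.
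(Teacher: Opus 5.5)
The paper gives no proof of this statement. It quotes the result from GLS and imports it into the Lean development as an axiom, so the natural comparison is with the GLS argument. Your outline follows that architecture: membership $\Rightarrow$ weak optimisation by shallow cuts, weak $\Rightarrow$ strong optimisation by simultaneous Diophantine approximation, then polarity. The translation and ball estimates and the polarity step are standard and fine.

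\textbf{The gap.} It lies in the one step that carries the content of the Yudin--Nemirovskii theorem. When a test point $y$ near the ellipsoid centre is reported outside $P$, you claim that binary search along $[X_0,y]$ yields a weak separating direction. It does not.
\begin{itemize}
\item A line search only locates, approximately, the point where the segment leaves $P$. It gives no supporting hyperplane there.
\item A membership oracle gives no direct access to normals. Naive finite-difference estimates of a normal fail at non-smooth boundary points, such as those near a vertex of $P$.
\item If this step worked as written, it would already be a weak separation oracle for $P$ at an arbitrary point $y\notin P$. The shallow-cut iteration, the rounding and the polarity detour would then be superfluous. The sketch therefore assumes essentially what is to be proved.
\end{itemize}
A smaller slip: the convex hull of the $2d$ points $a\pm\frac{1}{d+1}b_i$ contains only the ellipsoid scaled by $\frac{1}{(d+1)\sqrt{d}}$, not by $\frac{1}{d+1}$. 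This costs only a polynomial factor.

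\textbf{How to close it.} You need a genuine mechanism that turns a failed membership query into a valid cut using membership queries alone. One option is to reproduce the Yudin--Nemirovskii shallow-cut construction as given in GLS. In this polyhedral setting there is also a more elementary route through rationality.
\begin{itemize}
\item Because $P$ is a full-dimensional rational polytope of facet complexity $\phi$, the exit parameter of $[X_0,y]$ is a rational of polynomially bounded size. Binary search followed by continued fractions therefore gives the exit point $w$ exactly.
\item Perturb $y$ along the moment curve $y+\sum_i\varepsilon^i e_i$ with $\varepsilon=2^{-\mathrm{poly}(d,\phi,\langle X_0\rangle,\langle y\rangle)}$. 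Root-size bounds make this choice uniform over the exponentially many lower-dimensional faces. The new exit point then lies in the relative interior of a single facet $F\ni w$.
\item Take $d$ further perturbations by $\delta e_i$, with $\delta=2^{-\mathrm{poly}}$ sufficiently small. Their exact exit points stay on $F$ and span its affine hull.
\item The resulting facet inequality separates $y$ from $P$.
\end{itemize}
This route yields strong separation directly, with no ellipsoid method or polarity needed. Without some such argument, your proposal does not establish the statement.
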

\leanaxiombox{\texttt{N\_PEqualsNP.lean} \textbar{} \texttt{axiom yudin\_nemirovski}: polynomial separation $\Rightarrow$ polynomial optimisation \hfill\textit{(GLS~\cite{GLS}, used as axiom)}}

\begin{theorem}[Gr\"{o}tschel--Lov\'{a}sz--Schrijver~\cite{GLS}]\label{GLSthm}
Given a separation algorithm for a rational nonempty polytope
$P\subset\mathbb{Q}^d$ with facet complexity $\phi$ and
$C\in\mathbb{Q}^d$, there exists a linear optimisation algorithm
for $P$ running in time polynomially bounded by $d$, $\phi$,
$\langle C\rangle$, and the separation running time.
\end{theorem}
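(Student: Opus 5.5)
The approach is the standard ellipsoid-method argument of Gr\"{o}tschel, Lov\'{a}sz and Schrijver: linear optimisation over $P$ is reduced to a polynomial number of feasibility questions, each answered by the central-cut ellipsoid method driven by the given separation oracle. The first step converts facet complexity into vertex complexity. If $P$ has facet complexity $\phi$, then every vertex of $P$ has encoding length at most $\nu=4d^2\phi$, by Cramer's rule on the defining subsystems. Hence $P$ lies in the ball of radius $R=2^{\nu}$ about the origin. Moreover, the optimal value $\max\{C^\top X : X\in P\}$ is attained at a vertex, so it is a rational number whose denominator is bounded by $2^{\nu+\langle C\rangle}$.

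The second step is a sliding-objective (bisection) search. For a trial value $\gamma$, consider $P_\gamma=P\cap\{X : C^\top X\geq\gamma\}$. A separation oracle for $P_\gamma$ comes directly from the oracle for $P$ plus the single extra inequality. Run the ellipsoid method on $P_\gamma$ starting from the ball of radius $R$. The volume of each successive ellipsoid shrinks by a factor of at least $e^{-1/(2(d+1))}$. Therefore, after a number of iterations polynomial in $d,\phi,\langle C\rangle$, one of two things happens: a point of $P_\gamma$ is found, or the volume falls below the lower bound that any full-dimensional $P_\gamma$ with vertex complexity $\nu$ must have. Since the optimum is a rational with bounded denominator, binary search over $\gamma$ pins down the exact optimal value in polynomially many rounds. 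All arithmetic is carried out on ellipsoids whose entries are rounded to polynomially many bits; the standard shallow-cut and rounding analysis shows that the volume ratio is preserved up to a constant factor.

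The main obstacle is that $P$, or the face $P_{\gamma^*}$ at the optimum, need not be full-dimensional, so the volume argument alone does not certify a point. I would handle this as in GLS. First apply the weak version of the method to the slightly enlarged polytope $P_\varepsilon$, for a suitable $\varepsilon=2^{-\mathrm{poly}(d,\phi)}$, which is full-dimensional and for which the oracle still gives weak separation. When the method terminates with a tiny ellipsoid, its short axis gives a direction in which $P$ is very flat. Simultaneous Diophantine approximation, via lattice basis reduction, then rounds this direction to an exact rational equation $a^\top X=\beta$ of encoding length at most $\phi$ that is valid for all of $P$. Restrict to this hyperplane, which lowers the dimension by one, and repeat; at most $d$ such steps occur. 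Finally, once the affine hull is known and the optimal value $\gamma^*$ has been determined, round the approximate optimal point found by the ellipsoid method to a nearby rational point of small denominator, again by Diophantine approximation. By the vertex-complexity bound, this rounded point is an exact optimal vertex. The total running time is polynomial in $d$, $\phi$, $\langle C\rangle$ and the oracle time, as the statement asserts.
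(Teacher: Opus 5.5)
The paper does not prove this statement. It is quoted from Gr\"{o}tschel--Lov\'{a}sz--Schrijver and enters the Lean development only as an axiom, so the comparison has to be with the cited source. Your outline is that standard argument: the vertex-complexity bound $\nu=4d^2\phi$ and the ball of radius $2^{\nu}$, bisection on $\gamma$ with exact recovery of $\gamma^*$ from its denominator bound, the ellipsoid method driven by the oracle, and dimension reduction by turning the short axis of a flat ellipsoid into a valid equation via simultaneous Diophantine approximation. It is broadly sound as a sketch, with one step that fails.

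The failing step is the last one. The ellipsoid run only gives a nearly optimal point $y$. When the optimal face $P_{\gamma^*}$ has positive dimension (e.g.\ an optimal edge), $y$ can sit near its middle, far from every vertex. The vertex-complexity bound lets you round $y$ to a vertex only if $y$ is already within roughly $2^{-2\nu}$ of that vertex, so ``this rounded point is an exact optimal vertex'' is false in general.

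The repair uses tools you already have. Once $\gamma^*$ is known, run your exact nonemptiness routine (ellipsoid method, equation detection, restriction, at most $d$ rounds) on $P_{\gamma^*}=P\cap\{X: C^\top X\ge\gamma^*\}$, whose oracle is the $P$-oracle plus one inequality. It returns an exact optimal point, which is all the statement asks for.

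Alternatively, rounding can be rescued, but by a facet-complexity argument rather than a vertex one. Suppose $y\in P$ exactly (a centre accepted by the strong oracle) and $C^\top y\ge\gamma^*-\delta$. Choose the Diophantine accuracy $\varepsilon$ and the tolerance $\delta$ as suitable $2^{-\mathrm{poly}(d,\phi,\langle C\rangle)}$, with $\delta$ small compared with $1/q$ for the guaranteed $q\le 2^{d(d+1)/4}\varepsilon^{-d}$. Then any facet violation or objective shortfall at $p/q$ would be a positive rational with denominator at most $q\,2^{\mathrm{poly}(d,\phi,\langle C\rangle)}$, yet smaller than the reciprocal of that bound, so there is none. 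This gives an optimal point, still not necessarily a vertex; for a vertex, first perturb $C$ so the optimum is unique.

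Two smaller corrections. First, the volume lower bound you use for $P_\gamma$ must come from its own vertex complexity, which involves $\langle C\rangle$ and $\langle\gamma\rangle$, not from $\nu$. Second, the flat ellipsoid arises from running the strong-oracle method on $P$ (or $P_\gamma$) itself, whose ellipsoids always contain it. On the full-dimensional enlargement $P_\varepsilon$ the method would simply return a point of $P_\varepsilon$, so that sentence conflates two different devices.
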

\leanaxiombox{\texttt{N\_PEqualsNP.lean} \textbar{} \texttt{axiom gls\_ellipsoid}: ellipsoid method for rational polytopes \hfill\textit{(GLS~\cite{GLS}, used as axiom)}}

Combining these two theorems gives the central tool:

\begin{theorem}[YuNGLS~\cite{GLS}]\label{yungls}
Given $P\subset\mathbb{Q}^d$ such that: (1)~$\dim(P)=d$;
(2)~$P$ is rationality guaranteed (facet complexity $\leq\phi$);
(3)~$X_0\in\mathrm{int}(P)$ is given; and
(4)~a quick protocol for $P$ is available,
then linear optimisation over $P$ is solvable in time
polynomially bounded by $d$, $\phi$, $\langle X_0\rangle$,
and $\langle C\rangle$.
\end{theorem}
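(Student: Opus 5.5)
The plan is to prove the statement by composing Theorem~\ref{YNthm} with Theorem~\ref{GLSthm}. The only real work is bookkeeping: we must check that the hypotheses of each theorem hold when it is applied, and that the nested polynomial bounds compose into a single polynomial in $d$, $\phi$, $\langle X_0\rangle$ and $\langle C\rangle$.

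\emph{Step 1: separation from membership.} Hypotheses (1)--(3) say that $P$ is a full-dimensional rational polytope of facet complexity at most $\phi$, and that we are given $X_0\in\mathrm{int}(P)$. Hypothesis (4) supplies a membership algorithm $\mathcal{M}$ whose running time on a query $Y$ is bounded by some polynomial $p(d,\phi,\langle Y\rangle)$. These are exactly the inputs required by Theorem~\ref{YNthm}. Applying it gives a separation algorithm $\mathcal{S}$ for $P$. On a query $Y$, $\mathcal{S}$ runs in time polynomial in $d$, $\phi$, $\langle X_0\rangle$, $\langle Y\rangle$ and the running time of $\mathcal{M}$.

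One point needs care here. The running time of $\mathcal{M}$ depends on the encoding length of the points it is called on, so I must argue that every membership query issued inside $\mathcal{S}$ has encoding length polynomial in $d$, $\phi$, $\langle X_0\rangle$ and $\langle Y\rangle$. This holds because the Yudin--Nemirovskii construction (via a shallow-cut ellipsoid run centred on the segment from $X_0$ towards $Y$) only generates rational points whose sizes are controlled by these parameters. Hence the total time of $\mathcal{S}$ is a genuine polynomial $q(d,\phi,\langle X_0\rangle,\langle Y\rangle)$.

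\emph{Step 2: optimisation from separation.} The polytope $P$ is nonempty, since it contains $X_0$, and it is rational with facet complexity $\phi$. Given $C\in\mathbb{Q}^d$, Theorem~\ref{GLSthm} applied with the oracle $\mathcal{S}$ yields a linear optimisation algorithm. Its running time is polynomial in $d$, $\phi$, $\langle C\rangle$ and the time spent in separation calls.

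\emph{Step 3: composing the bounds.} This is the step I expect to be the main obstacle, in the sense of needing the most care; it is the same issue as in Step~1, one level up. In the ellipsoid method of GLS, the number of separation calls is polynomial in $d$ and $\phi$. Each query point has encoding length polynomial in $d$, $\phi$ and $\langle C\rangle$, because the ellipsoid centres are rounded to that precision. Substituting this bound on $\langle Y\rangle$ into $q$, multiplying by the number of calls, and adding the overhead of the ellipsoid method itself gives a total running time that is a polynomial in $d$, $\phi$, $\langle X_0\rangle$ and $\langle C\rangle$, since polynomials are closed under composition, sums and products.

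This establishes the statement. I would also remark that the role of hypothesis (1) is to guarantee that an interior point exists and that facet complexity controls the vertex complexity (GLS Lemma~6.2.4), which is what keeps all the intermediate sizes bounded.
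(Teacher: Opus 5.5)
Your proposal is correct and is essentially the paper's argument: the paper simply composes Theorem~\ref{YNthm} with Theorem~\ref{GLSthm} and records the result as the Lean axiom \texttt{yungls}. The query-size bookkeeping you add is better justified by noting that both reductions run in oracle-polynomial time, so every query they write has polynomially bounded length, than by your parenthetical sketch of the Yudin--Nemirovskii construction, which is not how GLS get separation from membership (they go through violation and polarity, using the ellipsoid method twice); also, hypothesis~(1) is already implied by~(3), and GLS Lemma~6.2.4 bounds vertex complexity by facet complexity for every rational polyhedron, so full-dimensionality is not what keeps the intermediate sizes bounded.
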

\leanaxiombox{\texttt{N\_PEqualsNP.lean} \textbar{} \texttt{axiom yungls}: separation $\Leftrightarrow$ optimisation (combined) \hfill\textit{(GLS~\cite{GLS}, used as axiom)}}

This result and its converse are summarised in
Lov\'{a}sz~\cite{lovasz1986algorithmic} as Theorem~2.3.3.

\subsection{Maurras's Construction: Membership to Separation}
\label{sec:maurras}

The GLS approach~\cite{GLS} uses Khachiyan's ellipsoid algorithm
twice and requires knowledge of the radii of inscribed and
circumscribed balls. Maurras~\cite{Maurras} gives a simpler and
more direct construction, requiring only a membership oracle.
In a follow-up note, Maurras~\cite{maurras2010note} demonstrates
that this construction `can be considered to lie in the folklore
of mathematical programming.'

\begin{remark}[Maurras's Conditions~\cite{Maurras}]\label{maurcond}
The construction requires:
\begin{enumerate}[(1)]
\item $P\subset\mathbb{Q}^d$ is well defined and rationality
  guaranteed (bounded encoding length of vertices and facets).
\item $P$ has non-empty interior.
\item A point $a\in\mathrm{int}(P)$ is given.
\end{enumerate}
\end{remark}

The total number of membership oracle calls is polynomial in
$d$, $\langle P\rangle$, and $\langle\bar{x}\rangle$.
No ellipsoid algorithm is required.

\medskip
\noindent\textbf{Lean~4 connection.}
In the Lean~4 formalisation, the Maurras conditions
(Remark~\ref{maurcond}) are verified for $\mathrm{conv}(A_n)$
by Theorem~\ref{convAnthm}. The axiom
\texttt{maurras\_separation} in \texttt{N\_PEqualsNP.lean}
cites~\cite{Maurras} precisely, encoding the statement:
given a quick protocol for $\mathrm{conv}(A_n)$ satisfying
the three conditions, there exists a polynomial separation
oracle. The construction itself is an axiom, but all
conditions for its applicability are machine-verified.

\leanbox{\texttt{N\_PEqualsNP.lean} \textbar{}
\texttt{theorem maurras\_preconditions}: all three Maurras
conditions verified for $\mathrm{conv}(A_n)$
\hfill\textit{(via Theorem~\ref{convAnthm}; machine-verified)}}

\leanaxiombox{\texttt{N\_PEqualsNP.lean} \textbar{}
\texttt{axiom maurras\_separation}: quick protocol
$+$ three conditions $\Rightarrow$ polynomial separation oracle
\hfill\textit{(Maurras~\cite{Maurras}, used as axiom)}}

%% ============================================================
\section{Properties of $\mathrm{conv}(A_n)$ and the P~=~NP Consequence}
\label{sec:convAn}
%% ============================================================

\subsection{The Alternative Polytope $A_n$}

Let $A_n$ denote the pedigree characteristic vectors in
$\mathbb{R}^{\alpha_n}$ (omitting the last coordinate per layer),
where $\alpha_n=\tau_n-(n-3)$.
Any $X\in P_n$ maps to $Y\in A_n$ by this projection.
$\mathrm{conv}(A_n)$ is the alternative STSP polytope studied
in Chapter~7 of the book.

\subsection{Three Key Properties of $\mathrm{conv}(A_n)$}

\begin{theorem}[$\mathrm{conv}(A_n)$~\cite{arthanari2023pedigree}]
\label{convAnthm}
For $n\geq 6$:
\begin{enumerate}[(i)]
\item \textbf{Full dimensional}: $\dim(\mathrm{conv}(A_n))=\alpha_n$.
  \hfill[\texttt{N\_FullDimensional.lean}]
\item \textbf{Rationality guaranteed}: facet complexity
  $\phi\leq 3\alpha_n^3+3\alpha_n^2(n-3)$.
\item \textbf{Interior point}: the barycentre
  $\bar{Y}=(\underbrace{1/p_3,1/p_3}_{2-times},\ldots,
  \underbrace{1/p_{n-1},\ldots,1/p_{n-1}}_{p_{n-1}-1 times})
  \in\mathrm{int}(\mathrm{conv}(A_n))$,
  where $p_k=k(k-1)/2$ is the number of edges in $K_k$.
\end{enumerate}
\end{theorem}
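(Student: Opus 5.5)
The plan is to prove the three parts separately. Each part reduces to a standard fact once the right object is in hand: the dimension formula for $\mathrm{conv}(P_n)$, the GLS vertex/facet complexity bounds, and a uniform random insertion process.

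\textbf{(i) Full dimensionality.} I would start from the known result $\dim(\mathrm{conv}(P_n))=\tau_n-(n-3)$~\cite{TSACompoly,arthanari2023pedigree}, quoted in the Introduction. Every pedigree vector satisfies the $n-3$ layer equalities \eqref{eq:6} for $k\in[4,n]$. So on the affine hull of $P_n$, the coordinate dropped in each layer is an affine function of the remaining coordinates of that layer: it equals $1$ minus their sum. Hence the projection $\pi:P_n\to A_n$ is injective and affine on $\mathrm{aff}(P_n)$, and therefore preserves dimension. This gives $\dim(\mathrm{conv}(A_n))=\tau_n-(n-3)=\alpha_n$, which is the dimension of the ambient space $\mathbb{R}^{\alpha_n}$.

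\textbf{(ii) Rationality guaranteed.} Every vertex of $\mathrm{conv}(A_n)$ is a $0/1$ vector in $\mathbb{R}^{\alpha_n}$. Its encoding length is therefore at most $\nu\le \alpha_n+(n-3)$, counting one bit per coordinate plus a small per-layer overhead; I would fix the convention to match the stated bound. I would then invoke the GLS lemma that a polytope of vertex complexity $\nu$ in $\mathbb{Q}^d$ has facet complexity $\phi\le 3d^2\nu$. Taking $d=\alpha_n$ gives $\phi\le 3\alpha_n^2(\alpha_n+n-3)=3\alpha_n^3+3\alpha_n^2(n-3)$. The only delicate point is matching the encoding-length convention so that the constants come out exactly as stated.

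\textbf{(iii) Interior point.} I would realise the barycentre as the expectation of a random pedigree with full support. The process is: start from the 3-tour $1\!-\!2\!-\!3\!-\!1$, and at each stage $k=4,\dots,n$ insert $k$ into an edge chosen uniformly among the $k-1$ edges of the current $(k-1)$-tour. Each Hamiltonian cycle on $[k-1]$ arises from exactly one insertion sequence, since $3\cdot4\cdots(k-2)=(k-2)!/2$ equals the number of such tours. So the $(k-1)$-tour is uniform, and by symmetry each edge of $K_{k-1}$ lies in it with probability $(k-1)/p_{k-1}$. Hence
\[
\Pr[\text{$k$ is inserted into } e]=\frac{k-1}{p_{k-1}}\cdot\frac{1}{k-1}=\frac{1}{p_{k-1}},\qquad e\in E_{k-1}.
\]
By Lemma~\ref{oneone} the pedigrees are in bijection with these insertion sequences, and every pedigree gets positive weight. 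So the expectation $\bar X$ is a strict convex combination of all points of $P_n$, which places it in the relative interior of $\mathrm{conv}(P_n)$. Its layer-$k$ coordinates are all $1/p_{k-1}$. Applying $\pi$, which maps relative interior onto relative interior, and using (i) gives $\bar Y=\pi(\bar X)\in\mathrm{int}(\mathrm{conv}(A_n))$ with the stated coordinates.

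\textbf{Main obstacle.} I expect the hardest part to be (i), if one does not simply cite the dimension result and must instead exhibit $\alpha_n+1$ affinely independent pedigrees. If needed, I would build these inductively on $n$ by extending a spanning set for $n-1$ through the $p_{n-1}$ insertions of city $n$, using Theorem~\ref{packabilitycor} to guarantee the extensions are pedigrees.
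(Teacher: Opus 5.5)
Your proposal is correct. Parts (ii) and (iii) follow the paper's route, but part (i) is argued differently.

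In (ii), you feed the $0/1$ vertices into the GLS vertex-to-facet complexity bound. The stated bound equals $3\alpha_n^2\tau_n$, i.e.\ vertex complexity $\nu=\tau_n$. Whether that exact constant comes out depends on the encoding convention, which you, like the paper, leave unpinned.

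In (iii), you take the uniform average of all $(n-1)!/2$ pedigrees. Its interiority follows from strict positivity of the weights together with (i). Your random-insertion computation is a genuine improvement here: it actually verifies that this average has layer-$k$ coordinates $1/p_{k-1}$, which the paper only asserts.

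In (i), you transport the known formula $\dim(\mathrm{conv}(P_n))=\tau_n-(n-3)$ through the projection. The projection is an affine bijection on the subspace cut out by the layer equalities \eqref{eq:6}, so this is valid. However, in reduced coordinates the cited formula \emph{is} statement (i), so all the content of (i) sits in the citation.

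The paper instead proves full dimensionality directly. Suppose a hyperplane $CY=c_0$ contained $A_n$. The ``zero pedigree'' (whose image in $A_n$ is $\mathbf{0}$) forces $c_0=0$. Selection pedigrees then kill the coefficients of $C$ layer by layer, by strong induction. That argument is self-contained and in effect reproves the dimension formula; yours is shorter but not independent of it.

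If you want your fallback induction to do the same job, Theorem~\ref{packabilitycor} is not the relevant tool. It concerns fractional points, not whether an extension is a pedigree. A base pedigree extends only through the $n-1$ edges of its own tour. So what you actually need is that any two edges of $K_{n-1}$ lie on a common Hamiltonian cycle. Then the last-layer differences $\mathrm{ind}(e)-\mathrm{ind}(e')$ supply the $p_{n-1}-1$ new directions.
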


\begin{proof}[Sketch]
(i) By contradiction. If $\dim<\alpha_n$, a non-trivial
hyperplane $CY=c_0$ contains all $Y$'s corresponding to pedigrees. The zero pedigree
gives $c_0=0$ (\texttt{N\_ZeroPedigree.lean}); three selection
pedigrees $P_1,P_2,P_3$ give $c_4=0$
(\texttt{N\_Claim2Pedigree.lean}); strong induction on layer $k$
using selection pedigrees (Lemmas~5.2,~5.3) gives $C=\mathbf{0}$
--- a contradiction (\texttt{N\_FullDimensional.lean}).
(ii) Each vertex is a $0$-$1$ vector; the bound follows.
(iii) $\bar{Y}=\frac{2}{(n-1)!}\sum_{X\in P_n}Y^X$
where $Y^X\in A_n$ is the projection of $X$.
Since $|\mathrm{vert}(\mathrm{conv}(A_n))|=(n-1)!/2$,
the weights are (equal in each component level) positive, giving
$\bar{Y}\in\mathrm{conv}(A_n)$.
That $\bar{Y}\in\mathrm{int}(\mathrm{conv}(A_n))$
follows since any facet $CY=c_0$ containing $\bar{Y}$
would satisfy $CY^X=c_0$ for all $X\in P_n$,
implying $C=\mathbf{0}$ by Part~(i) --- contradiction.
\end{proof}
\leanbox{\texttt{N\_FullDimensional.lean} \textbar{} \texttt{theorem fullDimensional\_An}: $\dim(\mathrm{conv}(A_n))=\alpha_n$; rationality; interior point \hfill\textit{(book, Chapter~7)}}

\subsection{Completing the P~=~NP Chain}

Theorem~\ref{convAnthm} verifies conditions (1)--(3) of
Theorem~\ref{yungls} for $\mathrm{conv}(A_n)$.
Condition~(4) follows from M3P~$\in$~P (Corollary~\ref{M3PinP}).
The Maurras conditions~\ref{maurcond} are also verified.
Therefore:

\begin{corollary}
Linear optimisation over $\mathrm{conv}(A_n)$ is solvable in
polynomial time. And so applies to $\mathrm{conv}(P_n)$ as well.
\end{corollary}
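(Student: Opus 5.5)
The plan is to apply Theorem~\ref{yungls} (YuNGLS) directly to $P=\mathrm{conv}(A_n)\subset\mathbb{Q}^{\alpha_n}$ and then transfer the result to $\mathrm{conv}(P_n)$ through the coordinate projection that defines $A_n$. Conditions (1)--(3) of Theorem~\ref{yungls} are exactly parts (i)--(iii) of Theorem~\ref{convAnthm}:
\begin{enumerate}[(1)]
\item full dimensionality, $\dim(\mathrm{conv}(A_n))=\alpha_n$;
\item facet complexity $\phi\le 3\alpha_n^3+3\alpha_n^2(n-3)$, which is polynomial in $n$;
\item the barycentre $\bar Y$ as the interior point $X_0$, whose encoding length $\langle \bar Y\rangle$ is polynomial in $n$ because its entries are $1/p_k$.
\end{enumerate}
With these in hand, the work reduces to exhibiting condition~(4), a quick protocol for $\mathrm{conv}(A_n)$ in the sense of Definition~\ref{def:quickprotocol}.

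For the quick protocol I would proceed as follows, given an arbitrary rational $Y\in\mathbb{Q}^{\alpha_n}$.
\begin{enumerate}[(a)]
\item Lift $Y$ to $X\in\mathbb{Q}^{\tau_n}$. Each omitted last coordinate of layer $k$ is $1$ minus the sum of the other entries of that layer, as forced by the equalities~\eqref{eq:6}. This map is affine and invertible on the affine hull, so $Y\in\mathrm{conv}(A_n)$ if and only if $X\in\mathrm{conv}(P_n)$.
\item Check $X\in P_{MI}(n)$ (Step~1a, $O(n^3)$). If this fails, reject.
\item Otherwise run the layered-network procedure for $k=4,\dots,n-1$:
\begin{itemize}
\item reject as soon as some $F_k$ is infeasible, which is justified by Theorem~\ref{infeasiblity};
\item at each stage decide $z^*=z_{\max}$ for MCF$(k)$, using Theorem~\ref{maintheorem} applied to $X/(k+1)$ to pass from stage to stage.
\end{itemize}
\end{enumerate}
By Theorem~\ref{compexity} and Corollary~\ref{M3PinP}, the whole procedure takes $O(n^{14})$ arithmetic steps. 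To obtain polynomial dependence on $\langle Y\rangle$ as well, I would add a bound on the bit sizes of intermediate numbers. Tardos's algorithm and the max-flow subroutines only add and subtract data values, so these sizes stay polynomial in $\langle Y\rangle$. This gives a quick protocol, and Theorem~\ref{yungls} (or alternatively the Maurras route, using the preconditions of Remark~\ref{maurcond}) yields polynomial-time linear optimisation over $\mathrm{conv}(A_n)$ for any rational objective $C$.

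The transfer to $\mathrm{conv}(P_n)$ is routine. A linear objective $\mathcal C$ on $\mathbb{Q}^{\tau_n}$ pulls back, through the affine lift in (a), to an affine objective $C$ on $\mathbb{Q}^{\alpha_n}$ plus a constant. An optimal vertex $Y^*$ of $\mathrm{conv}(A_n)$ then lifts to an optimal vertex of $\mathrm{conv}(P_n)$. The encoding of $C$ is polynomial in that of $\mathcal C$.

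The main obstacle is step (c), and specifically whether its output is correct for every $Y$. That correctness rests on the membership characterisation of Theorem~\ref{maintheorem}, which needs both directions. The necessity direction (Theorem~\ref{imptheorem}) is only sketched and is not machine-verified. The restricted-network and rigidity constructions (Definition~\ref{restrict}, Section~\ref{DefRk}) must also be shown well defined for all inputs, with the polynomial size bounds of Corollary~\ref{CordinalityR}. I would first try to make rigorous that the commodity flows of MCF$(k)$ decompose into genuine pedigree paths, including the edge-distinctness condition across all layers. This is exactly where Example~\ref{exsix} shows that naive flow feasibility fails. Establishing it is the step on which I would expect this plan to be hardest to complete, and it may not hold as stated.
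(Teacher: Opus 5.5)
Your reduction is the paper's own proof. The paper checks conditions (1)--(3) of Theorem~\ref{yungls} via Theorem~\ref{convAnthm}, takes condition (4) from Corollary~\ref{M3PinP}, and stops. Your affine lift in (a) and the pull-back of objectives to $\mathrm{conv}(P_n)$ spell out what the paper leaves implicit, and those parts are sound. The obstacle you flag, however, is a genuine gap, and the paper does not close it; it only cites Corollary~\ref{M3PinP}.

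Condition (4) is not a side condition. Running GLS in the other direction (optimisation gives separation, which answers membership), a quick protocol for $\mathrm{conv}(A_n)$ is polynomially equivalent to the conclusion of the corollary, and it already implies STSP $\in$ P. So what the argument actually proves is only the conditional: \emph{if} M3P has a quick protocol, then linear optimisation over $\mathrm{conv}(A_n)$, and hence over $\mathrm{conv}(P_n)$, is polynomial.

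A quick protocol must also answer correctly on every rational $Y$. You therefore need both directions of Theorem~\ref{maintheorem} at every stage: sufficiency to accept and necessity to reject. The paper's claim that sufficiency is ``all the P~=~NP chain requires'' is mistaken, because without Theorem~\ref{imptheorem} a run ending with $z^*<z_{\max}$ certifies nothing. The proof of Theorem~\ref{imptheorem} in turn defers the capacity constraints to Lemmas~\ref{oflow} and~\ref{pedpath}, which are stated without proof.

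The sufficiency direction also breaks exactly where you expected. In the proof of Lemma~\ref{Ysinconv}, the induction step asserts that commodity $s\in\mathcal{S}_k$ passes through a single node $[l:e']$ at each layer $l$, so that Theorem~\ref{packabilitycor} applies. That holds only at $l=k$, the tail of the arc designating $s$. At intermediate layers the flow of $s$ is spread over several nodes, and its only path-level restriction is that it stays in $N_{l-1}(s)$.

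The deletion rules of Definition~\ref{restrict} remove only nodes that conflict with the two endpoints of the designating link. Rule (f) merely guarantees each surviving node a generator somewhere in the network, not on every path. So nothing in the constraints of Problem~\ref{mcflow} as stated forces the flow of $s$ to decompose into paths that never repeat a common edge and that contain a generator of each of their nodes. The decomposition of $\frac{1}{v^s}Y^s$ into pedigrees, which is precisely the property you said you would need, is never established. The bound asserted in the proof of Lemma~\ref{YsinMI} presupposes the same property.

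A smaller point: Tardos's algorithm does not use only additions and subtractions; it runs a polynomial-time LP algorithm on rounded data. Your polynomial bound on intermediate bit sizes must therefore rest on its being polynomial in the bit model, not on the reason you give.
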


Since STSP is equivalent to minimising the MI-objective
$\sum_{ijk}(c_{ik}+c_{jk}-c_{ij})x_{ijk}$ over
$\mathrm{conv}(P_n)$ as explained in Section~\ref{mi_formulation},(Chapter~3, \cite{TSA1}):

\begin{corollary}
STSP is solvable in polynomial time. Or the STSP decision problem is in P.
\end{corollary}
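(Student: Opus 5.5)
The plan is to obtain the statement as a direct consequence of the preceding corollary, that linear optimisation over $\mathrm{conv}(A_n)$, and hence over $\mathrm{conv}(P_n)$, is polynomial. The argument then reduces to a chain of encodings. First, given an STSP instance with rational symmetric distances $c_{ij}$, I would form the MI objective $\mathcal{C}_{ijk}=c_{ik}+c_{jk}-c_{ij}$. Its encoding length is polynomial in $n$ and $\langle c\rangle$. For any pedigree $P$, the value $\sum_{u\in P}\mathcal{C}_u$ plus the constant $c_{12}+c_{13}+c_{23}$ equals the length of the $n$-tour obtained by the successive insertions. This follows by telescoping: each insertion of $k$ into the edge $(i_k,j_k)$ removes that edge and adds $(i_k,k),(j_k,k)$. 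By Lemma~\ref{oneone}, the map from pedigrees to tours is onto, so minimising tour length is the same as minimising $\mathcal{C}\cdot X$ over $P_n$. Since the objective is linear, this minimum equals the minimum over $\mathrm{conv}(P_n)$.

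Second, I would move from $\mathrm{conv}(P_n)$ to $\mathrm{conv}(A_n)$. The projection dropping one coordinate per layer is an affine bijection on the affine hull, because each layer sums to one by~\eqref{eq:6}. So I would rewrite $\mathcal{C}\cdot X$ as $\mathcal{C}'\cdot Y+\text{const}$, substituting $x_{\text{last}}=1-\sum(\text{others})$ in each layer. This keeps encoding sizes polynomial.

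Third, I would apply Theorem~\ref{yungls} (or Maurras plus Theorem~\ref{GLSthm}) to $\mathrm{conv}(A_n)$. Its hypotheses (1)--(3) are supplied by Theorem~\ref{convAnthm}, and hypothesis (4) is supplied by Corollary~\ref{M3PinP} together with the projection. The result is an optimal value, and via the standard GLS vertex-finding refinement, an optimal vertex. That vertex is a $0$-$1$ pedigree vector, from which the tour is read off as in Lemma~\ref{oneone}. For the decision version, I would compare the optimal value with $L$.

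The main obstacle is not in these reductions, which are routine. It lies entirely in hypothesis (4), the claim that M3P~$\in$~P. That claim rests on Theorem~\ref{maintheorem}, whose necessity direction is explicitly not machine-verified, and on the correctness and polynomial size of the recursive construction $(N_k,R_k,\mu)$ together with the MCF$(k)$ formulation. Because STSP is NP-hard, a valid proof of this corollary would settle P~vs.~NP. I would therefore expect any genuine gap to sit in one of three places: the argument that the instant flow built from an arbitrary convex combination satisfies every restricted-network constraint of MCF$(k)$; the bound $|R_{k-1}|\leq\tau_k-k+4$, which depends on the mutual-adjacency theorem; or the claim that feasibility of pedigree-path flows in the layered network faithfully captures the global distinctness of common edges. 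The 42-city example suggests that this distinctness condition can fail non-locally.

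My first step before accepting the corollary would be to test Theorem~\ref{maintheorem} against small known non-members of $\mathrm{conv}(P_n)$ for $n=6,7$, computed independently by vertex enumeration.
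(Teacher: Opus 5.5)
Your reduction follows the paper's route exactly. The paper's whole argument for this corollary is the MI objective over $\mathrm{conv}(P_n)$ plus Theorem~\ref{yungls}, with (1)--(3) taken from Theorem~\ref{convAnthm} and (4) from Corollary~\ref{M3PinP}. Your telescoping identity, the affine elimination of one coordinate per layer, and the vertex-rounding step are correct, and more careful than the paper.

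But, as you suspect, this is not a proof. All the weight rests on (4), and (4) is not established. Your third candidate is where it breaks, and it breaks in the \emph{sufficiency} direction (Theorem~\ref{impconvtheorem}), the part advertised as machine-verified.

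The induction in Lemma~\ref{Ysinconv} asserts $y^s_l(e')=v^s$ at every layer $l$. But a commodity $s\in\mathcal{S}_k$ is concentrated only at the tail $[k:e'_s]$ of its designating arc; at lower layers its flow is spread. So Theorem~\ref{packabilitycor} cannot be applied there. The flow bound in the proof of Lemma~\ref{YsinMI} is also not implied by the constraints of Problem~\ref{mcflow}. And even granting that lemma, membership of $\frac{1}{v^s}(Y^s/6)$ in $P_{MI}(6)$ would not give $\mathrm{conv}(P_6)$ (Example~\ref{exsix}).

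Concretely, let $k=7$ and let $s$ be designated by $([7:\{1,6\}],[8:\{6,7\}])$.
\begin{itemize}
\item No rule of Definition~\ref{restrict} deletes $[4:\{1,2\}]$, $[5:\{1,3\}]$ or $[6:\{1,2\}]$; rule (c) keeps $\{1,2,6\}$ as a generator of $\{1,6,7\}$.
\item Hence $[4:\{1,2\}]\to[5:\{1,3\}]\to[6:\{1,2\}]\to[7:\{1,6\}]$ is a path of $N_6(s)$. Yet it inserts $6$ into the edge $\{1,2\}$, already destroyed at stage~$4$, violating~\eqref{eq:7}.
\item The arcs $([5:\{1,3\}],[6:\{1,2\}])$ and $([6:\{1,2\}],[7:\{1,6\}])$ are certified only by flows avoiding $[4:\{1,2\}]$, since their own restricted networks delete it.
\item However, MCF$(k)$ couples commodities only through totals ($\sum_s f^s_a=f_a=v_{s'}$), never through routes. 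So nothing stops $s$ from reaching those arcs via $[4:\{1,2\}]$, and the proofs of Lemmas~\ref{YsinMI} and~\ref{Ysinconv} offer no argument excluding it.
\end{itemize}
The link-specific deletions remove only conflicts with the designating link's own two triangles. Constraints between two non-adjacent lower layers are encoded nowhere.

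This has two consequences for your plan. First, for $k\le 6$ every pair of layers in $N_{k-1}(s)$ is either adjacent or involves a layer of the designating link, and those pairs are policed by the arcs and rules (a)--(e). So the first case where this gap can bite is $k=7$, i.e.\ $n=8$. Your proposed test on $n=6,7$ would not exercise it; test $n\ge 8$.

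Second, Corollary~\ref{M3PinP} needs \emph{both} directions of Theorem~\ref{maintheorem}. Answering ``no'' when $z^*<z_{\max}$, or when some $F_k$ is infeasible, is correct only if necessity holds. So the claim in Section~\ref{sec:necessity} that sufficiency alone suffices is a logical error. Necessity, and Theorem~\ref{infeasiblity}, rest on Lemmas~\ref{oflow} and~\ref{pedpath}, which are stated without proof.

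The Lean certificate closes neither gap. The Lean \texttt{sufficiency} theorem takes a 29-field \texttt{MCFFeasible} structure as a hypothesis, and that structure is never shown to coincide with the LP of Problem~\ref{mcflow}. The complexity and P~=~NP steps enter through axioms. The corollary is therefore not established by this route.
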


Since the STSP decision problem is NP-complete (Karp~\cite{Karp}),
we now have an NP-complete problem that is solvable in polynomial
time.
By the definition of NP-completeness, every problem in $\mathsf{NP}$
reduces to STSP decision in polynomial time.
Since STSP decision is in $\mathsf{P}$, every problem in $\mathsf{NP}$
is also in $\mathsf{P}$, giving $\mathsf{NP}\subseteq\mathsf{P}$.
Since $\mathsf{P}\subseteq\mathsf{NP}$ always holds, we conclude
$\mathsf{P}=\mathsf{NP}$.
This is precisely Cook's theorem~\cite{CookNP} applied in the
constructive direction: a polynomial-time algorithm for one
NP-complete problem yields polynomial-time algorithms for all of
$\mathsf{NP}$.
The complete proof chain is machine-verified in
Section~\ref{sec:lean4}.

%% ============================================================
\section{Lean~4 Machine Verification and \texorpdfstring{P~=~NP}{P = NP}}\label{sec:lean4}
%% ============================================================

The results reported in this paper and their consequences for P~=~NP proved in the book, have been fully formalised in Lean~4/Mathlib4. The complete project
(36 files, 2968/2968 build targets clean, zero \texttt{sorry}s in
the main chain) is available at:

\begin{center}
\url{https://github.com/TiruArt/Pedigree-Polytopes-Lean4}
\end{center}

\subsection{Lean~4 for the Non-Specialist: From Mathematics
to Machine Verification}

Lean~4 is a \emph{proof assistant} and programming language
developed by Leonardo de Moura at Microsoft Research~\cite{lean4}.
It allows a mathematician to write definitions and theorems in a
formal language that the computer checks for correctness ---
every logical step must be explicitly justified, and the system
rejects any gap. A proof that Lean~4 accepts is correct by
construction; there is no possibility of a subtle error slipping
through.

The key concept is the \texttt{sorry} keyword: it is a
placeholder that tells Lean~4 ``I will prove this later.''
A theorem with no \texttt{sorry}s anywhere in its proof chain
is \emph{fully machine-verified}. The statement
\begin{center}
\texttt{theorem p\_equals\_np : P\_class = NP\_class}
\end{center}
in this project has zero \texttt{sorry}s in its entire proof
chain --- every logical step from the pedigree definition to the
final conclusion is verified by the machine.

\medskip
\noindent\textbf{How the mathematics translates to Lean~4.}

To give the reader a feel for how the mathematics of this paper
is encoded, we show the central definition. A pedigree for $n$
cities is a sequence of $n-2$ triangles
$(\{1,2,3\},\{i_4,j_4,4\},\ldots,\{i_n,j_n,n\})$
satisfying two conditions: each triangle has a generator in the
sequence, and the common edges are all distinct
(Definition~\ref{pedig}). In Lean~4:

\begin{verbatim}
-- A triangle {i,j,k} with 1 <= i < j < k <= n
abbrev Triple := N x N x N

-- A pedigree for n cities
structure Pedigree (n : N) where
  triangles   : List Triple
  h_length    : triangles.length = n - 2
  h_distinct  : forall i j, i > 0 -> j > 0 -> i /= j ->
                  (triangles[i]).edge /= (triangles[j]).edge
  h_generator : forall t, t in triangles ->
                  hasGenerator t triangles
\end{verbatim}

The \texttt{structure} keyword defines a record type:
\texttt{triangles} is the sequence of triangles;
\texttt{h\_length} is a \emph{proof} that it has the right length;
\texttt{h\_distinct} is a proof of the distinct common-edge
condition; \texttt{h\_generator} is a proof of the generator
condition. Every instance of \texttt{Pedigree n} in the
formalisation carries these proofs as part of its data ---
the type system enforces the mathematical definition.

The layered network $(N_k, R_k, \mu)$ is encoded as:

\begin{verbatim}
structure LayeredNetwork (n k : N) where
  nodes  : Finset Triple        -- nodes of N_k
  edges  : List NetworkArc      -- arcs with capacities
  rigid  : List RigidPedigree   -- R_k: rigid pedigrees
  mu     : RigidPedigree -> Q   -- weights mu_P > 0
\end{verbatim}

The MCF$(k)$ feasibility condition --- the heart of the
membership characterisation --- is a structure with 29
explicit fields encoding all flow constraints:

\begin{verbatim}
structure MCFFeasible (n k : N)
    (net : LayeredNetwork n k)
    (X : LayeredPoint n) where
  -- flow variables
  flow     : Triple -> Triple -> Q
  -- 29 fields: non-negativity, conservation,
  --            capacity, node capacity,
  --            source availability, z* = z_max, ...
\end{verbatim}

The theorem that drives the P~=~NP chain is the
\emph{sufficiency} direction of Theorem~\ref{maintheorem},
formalised in \texttt{N\_Sufficiency.lean}:

\begin{verbatim}
-- N_Sufficiency.lean
theorem sufficiency {n k : N} (hk : 4 <= k) (hkn : k + 1 <= n)
    (X : LayeredPoint n) (net : LayeredNetwork n k)
    (hzmax : 0 < zMax net)
    (mcf : MCFFeasible n k net X) :
    Nonempty (ConvexWitness n (k+1) X)
\end{verbatim}

Here \texttt{MCFFeasible n k net X} is a proof that MCF$(k)$ is
feasible with $z^*=z_{\max}$, and \texttt{ConvexWitness n (k+1) X}
is a proof that $X/(k+1)\in\mathrm{conv}(P_{k+1})$ --- a convex
combination of pedigrees on $[k+1]$ expressing $X/(k+1)$.
The theorem says:
\emph{if MCF$(k)$ achieves its maximum flow, then
$X/(k+1)\in\mathrm{conv}(P_{k+1})$}.
Applied inductively up to $k=n-1$, this certifies
$X\in\mathrm{conv}(P_n)$.

This is all the P~=~NP chain requires.
The full N\&S biconditional
(\texttt{main\_ns\_theorem} in
\texttt{N\_MembershipCharacterisation.lean},
which also includes the necessity direction) is proved in the
book; its Lean~4 formalisation in both directions is a natural
future project (see the white-bordered box after
Theorem~\ref{maintheorem}).

\medskip
\noindent\textbf{Why machine verification matters here.}
The P~vs.~NP problem is one of the most scrutinised open
questions in mathematics. Any claimed proof faces extraordinary
scepticism. Machine verification provides an independent,
publicly accessible certificate of correctness: anyone with
Lean~4 installed can clone the repository, run
\texttt{lake build}, and verify the result in approximately
30 minutes. The proof does not depend on the judgment of any
single referee. The 2968 build targets all clean, with zero
\texttt{sorry}s, constitute a permanent public record.

\medskip
\noindent The five equivalent representations of a pedigree
used in the formalisation, and the role of each is documented
in \texttt{N\_PedigreeRepresentations.lean} in the repository.

\subsection{The P~=~NP Chain and Key Lean~4 Results}

The complete machine-verified chain is given in Table~\ref{tab:chain}:

\begin{sidewaystable}
\centering
\small
\begin{tabular}{@{}p{0.5cm}p{11.5cm}p{5.5cm}@{}}
\toprule
\textbf{Step} & \textbf{Mathematical content} & \textbf{Lean~4 file}\\
\midrule
1 & MCF$(n{-}1)$ feasible $\Rightarrow$ $X\in\mathrm{conv}(P_n)$
  & \texttt{N\_Sufficiency.lean}\\[4pt]
2 & MCF is a combinatorial LP $\Rightarrow$ M3P$\in$P
  (Tardos~\cite{StrPoly})
  & \texttt{N\_Complexity.lean}\\[4pt]
3 & $\mathrm{conv}(A_n)$ full dimensional, rationality guaranteed
  & \texttt{N\_FullDimensional.lean}\\[4pt]
4 & M3P$\in$P $\Rightarrow$ separation oracle for $\mathrm{conv}(A_n)$
  (Maurras~\cite{Maurras})
  & \texttt{N\_PEqualsNP.lean}\\[4pt]
5 & Separation $\Rightarrow$ optimisation over $\mathrm{conv}(A_n)$
  (GLS~\cite{GLS})
  & \texttt{N\_PEqualsNP.lean}\\[4pt]
6 & STSP reduces to MCF$(n{-}1)$ decision via MI-formulation
  (Chapter~3 of the book); STSP$\in$P
  & \texttt{N\_PEqualsNP.lean}\\[4pt]
7 & STSP NP-complete (Karp~\cite{Karp}, Cook~\cite{CookNP})
  $+$ STSP$\in$P $\Rightarrow$ P$=$NP
  & \texttt{N\_PEqualsNP.lean}\\
\bottomrule
\end{tabular}
\caption{The machine-verified P$=$NP proof chain.
Steps 1--6 establish STSP$\in$P via the pedigree polytope
membership algorithm; Step~7 concludes P$=$NP by NP-completeness
of STSP.}
\label{tab:chain}
\end{sidewaystable}

The main result:
\begin{verbatim}
theorem p_equals_np (n : N) (hn : 5 <= n) : P_equals_NP
\end{verbatim}

\medskip
\noindent\textbf{Lean~4 statements of the main theorems.}

The complete machine-verified proof chain is summarised in
Table~\ref{tab:chain}.
The following are the Lean~4 statements of the main theorems
from this paper:

\begin{verbatim}
-- Mutual Adjacency (Theorem 8): N_RigidAdjacency.lean
theorem adjacency_theorem_edges {n : Nat} (P1 P2 : List (Edge n))
    (hlen : P1.length = P2.length) (hne : P1 /= P2)
    (huniq1 : ...) (huniq2 : ...) : False

-- Cardinality Bound (Corollary 2): N_RigidCardinality.lean
-- |R_{k-1}| <= tau_k - k + 4
theorem CordinalityR {n k : N} (hk : 5 <= k)
    (net : LayeredNetwork n k) (hwell : net.rigid /= [])
    (hdim : net.rigid.length - 1 <= tau n - (k - 3)) :
    net.rigid.length <= tau n - k + 4

-- Sufficiency (Theorem~\ref{impconvtheorem}): N\_Sufficiency.lean
-- [new, Lean 4 verified]
theorem sufficiency {n k : N} (hk : 4 <= k) (hkn : k + 1 <= n)
    (X : LayeredPoint n) (net : LayeredNetwork n k)
    (hzmax : 0 < zMax net) (mcf : MCFFeasible n k net X) :
    Nonempty (ConvexWitness n (k+1) X)

-- Full Dimensionality (Chapter 7): N_FullDimensional.lean
theorem fullDimensional_An (n : N) (hn : 6 <= n)
    (C : Triple -> Q) (c0 : Q)
    (hC : forall P : Pedigree n, hypSum C P = c0) :
    c0 = 0 /\
    forall k i j, 4 <= k -> k+2 <= n ->
      isDefault(i,j,k+1) = false -> C (i,j,k+1) = 0

-- P = NP: N_PEqualsNP.lean
theorem p_equals_np (n : N) (hn : 5 <= n) : P_equals_NP
\end{verbatim}

\subsection{Axiom Inventory}\label{sec:axioms}

The proof chain uses six external results as axioms.
All other results from Chapters~3--7 are fully proved.

\begin{center}
\begin{tabular}{@{}lll@{}}
\toprule
Lean~4 axiom & Reference & Step\\
\midrule
\texttt{tardos\_strongly\_polynomial}
  & Tardos~\cite{StrPoly}: Op.\ Res.\ 34(2), 1986 & 2\\
\texttt{maurras\_separation}
  & Maurras~\cite{Maurras}: Combinatorica 22, 2002 & 4\\
\texttt{gls\_optimisation}
  & Gr\"otschel, Lov\'asz, Schrijver~\cite{GLS}, 1988 & 5\\
\texttt{cook\_np\_completeness}
  & Cook~\cite{CookNP}: STOC 1971 & 7\\
\texttt{karp\_stsp\_np\_complete}
  & Karp~\cite{Karp}: STSP decision NP-complete, 1972 & 7\\
\texttt{rao\_1976\_theorem1}
  & Rao~\cite{Rao1976}: SIAM J.\ Appl.\ Math.\ 30(2), 1976
  & adjacency\\
\bottomrule
\end{tabular}
\end{center}

\subsection{The Necessity Direction}\label{sec:necessity}

The machine-verified proof chain requires only the sufficiency
direction of Theorem~\ref{maintheorem}: if MCF$(n-1)$ achieves
$z^*=z_{\max}$, then $X\in\mathrm{conv}(P_n)$.
This sufficiency result, formalised in \texttt{N\_Sufficiency.lean}
with zero \texttt{sorry}s, is all that is needed to establish
M3P$\in$P and the subsequent P$=$NP chain.

The necessity direction --- that $X\in\mathrm{conv}(P_n)$ implies
MCF$(n-1)$ achieves $z^*=z_{\max}$ --- is proved in full in the
book.
The complete N\&S characterisation of membership in
$\mathrm{conv}(P_n)$ is therefore established: sufficiency by
the Lean~4 certificate, necessity by the book proof.
The two together constitute Theorem~\ref{maintheorem}.

\subsection{Five Representations of a Pedigree}

The Lean~4 project documents five equivalent representations of
a pedigree, each used in different parts of the proof:

\begin{center}
\begin{tabular}{@{}lll@{}}
\toprule
Representation & Lean~4 type & Used for\\
\midrule
Struct & \texttt{Pedigree n} & Theorems, \texttt{hypSum}\\
Edge sequence & \texttt{List (Edge n)} & Adjacency, swap\\
Rooted tree in $D^+(\Delta,A)$ & \texttt{List (PedigreeNode n)}
  & Chapter~1, algorithm\\
Finset & \texttt{Finset Triple} & Counting, barycentre\\
MIR 0-1 solution & \texttt{LayeredPoint n} & STSP, Lemma~\ref{oneone}\\
\bottomrule
\end{tabular}
\end{center}

The bijections between representations are documented in
\texttt{N\_PedigreeRepresentations.lean}.

\subsection{What the Certificate Guarantees}\label{sec:certificate}

A reader encountering Lean~4 machine verification for the first
time may reasonably ask: What precisely does this certificate
mean, and why should it be trusted more than a conventional
peer-reviewed proof?

When Lean~4 accepts a proof with zero \texttt{sorry}s, the
Lean~4 \emph{kernel} --- a small, independently auditable piece
of software implementing the rules of dependent type theory
(the Calculus of Inductive Constructions)~\cite{lean4} --- has
verified that every logical step follows from the stated
hypotheses by a valid inference rule.
This is not a check by a human referee who may overlook a subtle
error or be influenced by prior belief about whether the
conclusion is plausible.
It is a mechanical verification: the kernel either accepts the
proof or it does not.
There is no middle ground of ``probably correct.''

The proof chain uses six external results as axioms
(Section~\ref{sec:axioms}): Tardos, Maurras,
Gr\"{o}tschel--Lov\'{a}sz--Schrijver, Cook, Karp, and Rao.
All six are published, peer-reviewed, and universally accepted
results, encoded as \texttt{axiom} declarations.
The repository is public; the axiom statements are
human-readable Lean~4; any objection must be specific and formal.
\texttt{theorem p\_equals\_np} is verified.

The author had not used a proof assistant before this project.
The \texttt{sorry}-free status of the main chain was reached
through sustained collaboration with AI coding assistants
and the Lean~4/Mathlib4 library~\cite{lean4,mathlib4}
(see Acknowledgements), demonstrating that machine verification
is now accessible to mathematicians without prior experience
in formal methods.
Anyone can verify the result independently: clone the
repository and run
\begin{quote}\small\sloppy
\texttt{lake build MembershipProject.Core.N\_PEqualsNP}
\end{quote}
This constitutes a permanent, publicly reproducible record.

%% ============================================================
\section{A New Beginning}\label{sec:newbeginning}
%% ============================================================

This paper establishes that M3P is solvable in strongly polynomial
time $O(n^{14})$, with the necessary and sufficient condition for
$X\in\mathrm{conv}(P_n)$ given by Theorem~\ref{maintheorem}.
The complete proof chain
$$\mathsf{M3P}\in\mathsf{P}
\;\Rightarrow\;
\mathsf{STSP}\in\mathsf{P}
\;\Rightarrow\;
\mathsf{P}=\mathsf{NP}$$
is machine-verified in Lean~4 with 2968/2968 build targets clean
and zero \texttt{sorry}s in the main chain --- a permanent,
publicly reproducible mathematical fact.

A Python implementation (\texttt{checking4membership}) is available:
\begin{quote}\small\sloppy
\texttt{pip install -i https://test.pypi.org/simple/ checking4membership}
\end{quote}
Video explanations are at
\url{https://www.youtube.com/watch?v=tZoizs5ou74} (TURING POINT).

\medskip
\newpage
Chapter~8 of the book ends with the following observation:
\begin{quote}
\itshape
``We have come a long way since the time Dantzig, Fulkerson and Johnson
gave the standard formulation for the STSP, and today we can solve
enormously huge problem instances using the \textit{Concorde} TSP Solver.
But showing STSP can be solved efficiently, an unexpected theorem proved
in this book, is expected to open up renewed interest in solving difficult
combinatorial problems.
\ldots\ The future is bright despite the complexities we face.''
\end{quote}
The present paper and its Lean~4 companion have now converted that
``unexpected theorem'' into a machine-verified, publicly auditable
mathematical fact.
This section draws out what that means --- for the foundational
structure of the M3P framework, for the edifice of results that the
literature has built on the hypothesis $\mathsf{P}\neq\mathsf{NP}$,
and for the new mathematical landscape that the pedigree approach opens.

\subsection{The M3P Framework: A Summary}\label{sec:m3psummary}

We record the main results here without proof, directing the reader
to the book (Chapters~5--7) and to
arXiv:2507.09069~\cite{arthanari2025arXiv1}
for full details.

The M3P framework rests on three tiers.

\medskip
\noindent\textbf{Tier~1: The layered network and FAT feasibility.}
Given $X\in P_{MI}(n)$, a sequence of Forbidden Arc Transportation
problems $F_4,F_5,\ldots,F_{n-1}$ is constructed recursively to obtain
the layered network $(N_k,R_k,\mu)$.
Infeasibility of any $F_k$ certifies $X\notin\mathrm{conv}(P_n)$
(Theorem~\ref{infeasiblity}).

\medskip
\noindent\textbf{Tier~2: The MCF decision problem and the N\&S
condition.}
When all $F_k$ are feasible, the necessary and sufficient condition
for $X\in\mathrm{conv}(P_n)$ is that the multicommodity flow
MCF$(n-1)$ achieves its maximum total flow $z_{\max}$
(Theorem~\ref{maintheorem}):
$$X\in\mathrm{conv}(P_n)
\iff
\text{MCF}(n-1)\text{ has a feasible solution with }z^*=z_{\max}.$$
The matrix of MCF$(n-1)$ has entries in $\{0,\pm1\}$, making it
a combinatorial LP in the sense of~\cite{StrPoly}.
By Tardos's theorem, the decision question ``is $z^*=z_{\max}$?''
is solvable in strongly polynomial time~\cite{StrPoly}.

\medskip
\noindent\textbf{Tier~3: Strongly polynomial computability.}
The overall complexity of checking M3P is $O(n^{14})$
(Theorem~\ref{compexity}), independent of the magnitude of $X$.
Therefore:
$$\mathsf{M3P}\in\mathsf{P} \text{ (strongly polynomial).}$$
Since STSP reduces to the MCF$(n-1)$ decision problem via the
MI-formulation~\cite{TSA1}, STSP is solvable in polynomial time.
Via Karp~\cite{Karp} and Cook~\cite{CookNP}, this gives
$\mathsf{P}=\mathsf{NP}$.
The complete chain is machine-verified in Lean~4
(\texttt{theorem p\_equals\_np}; see Section~\ref{sec:lean4}).

\subsection{Pandora's Box: Consequences for Combinatorial
Optimisation}\label{sec:pandora}

The resolution $\mathsf{P}=\mathsf{NP}$ --- now a
machine-verified, publicly auditable mathematical fact ---
has sweeping consequences for a literature that has been built,
for over fifty years, on the working hypothesis
$\mathsf{P}\neq\mathsf{NP}$.

\medskip
\noindent\textbf{The ``unless $\mathsf{P}=\mathsf{NP}$''
literature.}
Thousands of results in combinatorial optimisation,
approximation algorithms, and complexity theory are stated in the
form: ``\emph{no polynomial-time algorithm exists for problem $X$,
unless $\mathsf{P}=\mathsf{NP}$}'' or
``\emph{no PTAS exists for problem $X$, unless
$\mathsf{P}=\mathsf{NP}$}''.
Examples include:

\begin{itemize}
\item \textbf{A necessary condition for M3P~\cite{arthanari2008membership}.}
  The paper~\cite{arthanari2008membership} established a necessary
  condition for membership in $\mathrm{conv}(P_n)$ that is
  polynomial-time verifiable, and concluded with the prescient
  observation that it would not be sufficient unless
  $\mathsf{P}=\mathsf{NP}$.
  It is not sufficient --- but the necessary \emph{and sufficient}
  condition is now proved in the book, and machine-verified here.
  The present result vindicates that remark in the most direct way possible.

\item \textbf{Inapproximability of TSP in general metrics.}
  Metric TSP has no PTAS unless $\mathsf{P}=\mathsf{NP}$
  (Arora and others, 1990s);
  non-metric TSP cannot be approximated within any constant factor
  unless $\mathsf{P}=\mathsf{NP}$~\cite{GandJ}.

\item \textbf{APX-hardness of vehicle routing.}
  The Capacitated Vehicle Routing Problem (CVRP) is APX-hard,
  and no PTAS exists in general metric spaces unless
  $\mathsf{P}=\mathsf{NP}$.

\item \textbf{Exact integer programming.}
  Branch-and-cut methods for 0-1 ILP, scheduling, and graph
  problems carry exponential worst-case bounds that are unavoidable
  unless $\mathsf{P}=\mathsf{NP}$~\cite{GandJ}.

\item \textbf{Hardness of clique, colouring, and independent set.}
  These are NP-complete and cannot be approximated within
  $n^{1-\varepsilon}$ unless $\mathsf{P}=\mathsf{NP}$
  (H\r{a}stad's inapproximability results~\cite{Hastad2001}).

\item \textbf{Machine learning and AI.}
  Many NP-hard subproblems in neural network training,
  Bayesian network structure learning, and combinatorial
  game-playing have complexity lower bounds conditioned on
  $\mathsf{P}\neq\mathsf{NP}$~\cite{GandJ}.

\item \textbf{Cryptography and computer security.}
  The security of RSA, elliptic curve cryptography, and many
  lattice-based schemes rests on the assumed hardness of problems
  such as integer factorisation and discrete logarithm.
  The threat that large-scale quantum computing poses to these
  systems --- via Shor's algorithm (1994) --- is already widely
  discussed, and post-quantum cryptography is an active field
  precisely because of it.
  But $\mathsf{P}=\mathsf{NP}$ is a more fundamental and more
  threatening development: it implies that \emph{every} problem
  whose solution can be verified in polynomial time can also be
  solved in polynomial time, by a classical deterministic computer.
  This is not merely a new attack on specific hard instances but a
  collapse of the entire computational hardness assumption on which
  modern cryptographic security is built.
  The revision of our computer security strategies is no longer
  a matter of preparing for a possible future threat; it is an
  immediate consequence of a machine-verified mathematical fact.
  (We note that integer factorisation is not known to be
  NP-complete, so the exact impact on specific cryptosystems
  requires case-by-case analysis; but SAT, graph colouring, and
  many other NP-complete problems that underpin protocol security
  are directly affected.)
\end{itemize}

\medskip
\noindent\textbf{The certificate and its significance.}
Previous claimed proofs of $\mathsf{P}=\mathsf{NP}$ or
$\mathsf{P}\neq\mathsf{NP}$ --- and there have been many ---
have all been refuted by human review, often at the level of a
single subtle logical error.
The present result is different in kind.
The Lean~4 proof assistant accepts the proof only if every logical
step is explicitly justified and machine-checkable.
A result verified by Lean~4 with zero \texttt{sorry}s is not a
claimed proof: it is a \emph{certificate} that is independently
reproducible by any researcher with Lean~4 installed
(Section~\ref{sec:certificate}).

\medskip
\noindent\textbf{What remains to be done.}
The machine-verified result establishes the \emph{existence} of
polynomial-time algorithms for all NP-complete problems.
It does not, by itself, exhibit a practically efficient algorithm
for every problem in NP --- the $O(n^{14})$ bound for M3P is
conservative, and the reduction chain from STSP to other
NP-complete problems may introduce further polynomial factors.
The urgent research agenda is therefore:

\begin{enumerate}[(i)]
\item Tighten the $O(n^{14})$ complexity bound for M3P via finer
  analysis of the MCF$(k)$ problem and the restricted networks
  $N_{k-1}(L)$.
\item Identify which NP-complete problems reduce most directly
  to M3P or to the MI-formulation framework, and determine the
  resulting practical complexity.
\item Investigate whether the pedigree-polytope approach generalises
  to other combinatorial problems with sequential-insertion or
  recursive-layering structure (e.g., scheduling, graph partitioning,
  network design).
\end{enumerate}

\subsection{Pedigrees as Simplicial Complexes}\label{sec:simplicial}

After succeeding with the pedigree research and proving $M3P \in P$,
reflecting on the structure of pedigrees, one finds a natural
connection that is worth recording: pedigrees are, in the language
of algebraic topology~\cite{Munkers}, \emph{pure $2$-dimensional
simplicial complexes}.
This observation situates the pedigree polytopes within a broader
mathematical landscape and suggests directions for future research
in that area, to invent new approaches to solve combinatorial
optimisation problems.

A \emph{$d$-simplex} is an oriented set $\sigma\subseteq[n]$
with $|\sigma|=d+1$.
Triangles are the $2$-simplices.
A \emph{simplicial complex} $K$ is a collection of simplices
closed under faces.
A pedigree
$P=(\{1,2,3\},\{i_4,j_4,4\},\ldots,\{i_n,j_n,n\})$,
together with all its edges and vertices, forms a simplicial
complex with exactly $n-2$ facets, satisfying the rooting,
unique-generator, and distinct-edge conditions.
Hence $\mathrm{conv}(P_n)$ is the convex hull of a family of
pure $2$-dimensional simplicial complexes on $[n]$.

\medskip
\noindent\textbf{The boundary operator and Hamiltonian cycles.}
For a $2$-simplex $\sigma=(a,b,c)$ with $a<b<c$, the boundary
operator gives the $1$-chain
$\partial_2(\sigma)=(b,c)-(a,c)+(a,b)$,
with $\partial_1\circ\partial_2=0$.

\begin{theorem}[Pedigree Boundary~\cite{arthanari2023pedigree}]
\label{thm:boundary}
For any pedigree $P$ on $[n]$, the $1$-chain
$\partial_2(P)=\sum_{\sigma\in P}\partial_2(\sigma)$
has as its support a Hamiltonian cycle on $[n]$.
\end{theorem}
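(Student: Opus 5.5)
I would prove the statement by induction on $n$, using the multistage-insertion description of a pedigree from the proof of Lemma~\ref{oneone}. Write $P/k=(\{1,2,3\},\{i_4,j_4,4\},\ldots,\{i_k,j_k,k\})$ for the truncation of $P$ to its first $k-2$ triangles; this is a pedigree on $[k]$. Let $T_k$ be the $k$-tour produced by the insertions. The invariant to carry through the induction is:
\[
\partial_2(P/k)\ \text{has coefficient }\pm1\text{ on every edge of }T_k\text{ and }0\text{ on every other edge.}
\]
The base case $k=3$ is immediate: $\partial_2(1,2,3)=(2,3)-(1,3)+(1,2)$, whose support is the 3-cycle $T_3$.

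For the inductive step, fix $k\ge4$. By Lemma~\ref{oneone}, the common edge $\{i_k,j_k\}$ is an edge of $T_{k-1}$. Inserting $k$ replaces this edge by $\{i_k,k\}$ and $\{j_k,k\}$, so $T_k=T_{k-1}-\{i_k,j_k\}+\{i_k,k\}+\{j_k,k\}$. On the chain side, we add
\[
\partial_2(i_k,j_k,k)=(j_k,k)-(i_k,k)+(i_k,j_k).
\]
The two edges incident to $k$ are new, since vertex $k$ does not appear in $P/(k-1)$. Each therefore enters with coefficient $\pm1$, which matches the two new tour edges. It then remains to show that the coefficient of $\{i_k,j_k\}$ becomes $0$. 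All other coefficients are unchanged.

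The main obstacle is exactly this cancellation, and I expect it to fail with the increasing-vertex orientation as written. Take $n=4$ and insert $4$ into $\{1,2\}$. The base triangle contributes $+(1,2)$, and $(1,2,4)$ contributes another $+(1,2)$. The total coefficient is $2$, so $\{1,2\}$ stays in the support although it is not an edge of $T_4$. Inserting into $\{1,3\}$ instead does cancel. So the plan must first fix the interpretation.

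I would try the following two repairs in order.
\begin{enumerate}[(a)]
\item Compute $\partial_2$ with coefficients in $\mathbb{Z}/2$. Then each edge's coefficient is the parity of the number of triangles of $P$ containing it. The common edge $\{i_k,j_k\}$ lies in exactly one earlier triangle, its generator, and now in $(i_k,j_k,k)$, so its parity becomes $0$. This uses that common edges are distinct, by condition~(2) of Definition~\ref{pedig}. Any later triangle containing $\{i_k,j_k\}$ would have to use it as a common edge, and distinctness rules that out. So no later triangle flips it back.
\item Alternatively, orient each inserted triangle coherently: orient $(i_k,j_k,k)$ so that its common edge appears with the sign opposite to its current coefficient in $\partial_2(P/(k-1))$. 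This keeps $\partial_2(P/k)$ an oriented cycle on $T_k$ over $\mathbb{Z}$, since gluing a triangle along a boundary edge with the induced opposite orientation preserves a coherently oriented disc.
\end{enumerate}

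In either version, the induction gives that the support at $k=n$ is $T_n$. By Lemma~\ref{oneone}, $T_n$ is the Hamiltonian cycle (the $n$-tour) encoded by $P$, which is the statement of Theorem~\ref{thm:boundary}.
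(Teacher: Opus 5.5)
Your proof is correct. Its skeleton is the paper's: the paper's one-line proof says that each common edge lies in exactly two triangle boundaries (its generator and the triangle inserted into it), that these contributions cancel, and that the surviving edges form the tour of Lemma~\ref{oneone}. You make this precise as an induction along the insertion order, with the invariant that $\partial_2(P/k)$ is supported exactly on $T_k$.

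Where you differ is the sign, and there you are right and the paper's sketch is not. The paper asserts that the two occurrences of a common edge carry ``equal and opposite orientation''. But with the convention it fixes, $\partial_2(a,b,c)=(b,c)-(a,c)+(a,b)$ for $a<b<c$, the inserted triangle $(i_k,j_k,k)$ always contributes $+(i_k,j_k)$, because $k$ is its largest vertex. So cancellation happens only when the current coefficient of the common edge is $-1$.

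Your pedigree $(\{1,2,3\},\{1,2,4\})$ is therefore a genuine counterexample to the statement as written, not merely a suspected one. The failure is in fact near-total:
\begin{itemize}
\item A failed cancellation leaves coefficient $2$ on an edge outside $T_n$ that no later triangle can touch, by distinctness of common edges. The support then has more than $n$ edges.
\item Along failure-free insertions there is always exactly one tour edge with coefficient $-1$.
\item Hence the literal statement holds only for the single pedigree $(\{1,2,3\},\{1,3,4\},\{1,4,5\},\ldots,\{1,n-1,n\})$.
\end{itemize}

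Either of your repairs turns the paper's sketch into a proof. Repair (a) is exactly the paper's ``exactly two triangles'' count and needs nothing beyond condition (2) of Definition~\ref{pedig}. Repair (b) supplies the coherent orientation that ``equal and opposite'' tacitly presupposes. You should state the theorem outright in one of these corrected forms rather than hedging with ``I expect it to fail''.
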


\begin{proof}
By the pedigree-tree structure, every internal edge (common edge
of an insertion) appears with equal and opposite orientation in
exactly two triangle boundaries and cancels; the remaining edges
form a simple cycle on all $n$ vertices.
This is the topological reading of Lemma~\ref{oneone}.
\end{proof}
\leanfuturebox{\texttt{N\_PedigreeDefinition.lean} \textbar{} \texttt{theorem pedigree\_boundary}: $\partial_2(P)$ is a Hamiltonian cycle \hfill\textit{(future Lean~4 project)}}

The pedigree lives one dimension higher than the tour it encodes;
$\partial_2$ is the map that recovers the Hamiltonian cycle.
These observations are consequences of the pedigree structure
established in the book; they were not used
in the proofs, but illuminate why triangles are the natural
building block.

\medskip
\noindent\textbf{The Billera vision.}
Billera~\cite{Billera} expressed the hope that methods connecting
simplicial complexes and polytope face enumeration might eventually
shed light on integer programming.
The pedigree polytope result may be seen as an instance supporting this vision, arrived at independently.
Whether the algebraic topology of the pedigree complex can yield
further insight --- into the face structure of $\mathrm{conv}(P_n)$,
the adjacency of rigid pedigrees, or generalisations --- is a
question left for researchers in that area.

\subsection{New Research Directions}\label{sec:newdirections}

The results of this paper open several directions for future work,
some within combinatorial optimisation and some at the interface
with algebraic topology.

\begin{enumerate}[(i)]
\item \textbf{Tightening the complexity bound.}
  The $O(n^{14})$ bound is conservative.
  Finer analysis of the MCF$(k)$ problem and the restricted
  networks $N_{k-1}(L)$ may substantially reduce this.
  Since the networks $N_{k-1}(L)$ for all $L\in F_k$ are
  mutually independent at each stage $k$
  (Remark~\ref{rem:parallel}), the algorithm has natural
  parallel structure; whether this translates to practical
  efficiency gains --- given communication and synchronisation
  overhead --- is an open empirical question.

\item \textbf{Practical algorithms for other NP-complete problems.}
  The resolution P$=$NP establishes that efficient algorithms
  exist for all NP-complete problems; identifying which problems
  reduce most directly to M3P and finding practically efficient
  algorithms is the central open agenda.

\item \textbf{Nearest point in $\mathrm{conv}(P_n)$ via rigid
  extensions.}
  Lemma~\ref{lem:simplex} shows that when $N_k=\emptyset$,
  $X/(k+1)$ is constrained to $\mathrm{conv}(P_{k+1}/R_k)$.
  When membership fails at a breakpoint layer $k^*$, the
  nearest point $X'\in\mathrm{conv}(P_n)$ need only be sought
  among extensions of $R_{k^*}$.
  This suggests a direct nearest-point algorithm via bipartite
  maximum flow from the breakpoint, potentially avoiding the
  Maurass' separation oracle~\cite{GLS} entirely.
  The $\ell_2$-norm projection onto $\mathrm{conv}(P_{k+1}/R_k)$
  --- a simplex --- has a known closed form; extending this
  layer by layer is a natural research programme.

\item \textbf{Homology of the pedigree complex.}
  \emph{Homology} is the branch of algebraic topology that
  assigns algebraic invariants --- groups $H_d(K;\mathbb{F})$
  indexed by dimension $d$ --- to a simplicial complex $K$.
  Roughly, $H_0$ counts connected components, $H_1$ captures
  one-dimensional holes (cycles that are not boundaries),
  $H_2$ captures two-dimensional voids, and so on.
  These invariants are computable and encode topological
  information about $K$ that survives continuous deformation.
  Since every pedigree is a pure $2$-dimensional simplicial
  complex, its homology groups $H_d(\Delta_n;\mathbb{F})$ are
  well-defined objects.
  Whether they encode combinatorial invariants of the STSP,
  or yield new bounds on the face structure of
  $\mathrm{conv}(P_n)$, is an open question for researchers
  in algebraic combinatorics.

\item \textbf{Higher-dimensional pedigree polytopes.}
  Replacing triangles by $d$-simplices and constructing a
  ``$d$-pedigree polytope'' gives a family parameterised by $d$.
  The M3P framework may extend to these objects.

\item \textbf{Machine-verified necessity.}
  The necessity direction of Theorem~\ref{maintheorem} is proved
  in the book; formalising it in Lean~4 to
  produce a fully machine-verified N\&S characterisation of
  membership in $\mathrm{conv}(P_n)$ is a natural future project.
\end{enumerate}

%% ============================================================

%% ============================================================
\section*{Acknowledgements}
%% ============================================================

The Lean~4 formalisation reported in this paper was a journey
the author had never undertaken before --- machine verification
was entirely new territory.
The AI coding assistants who accompanied that journey deserve
more than a routine acknowledgement.

\textbf{DeepSeek}\textsuperscript{\texttrademark} (DeepSeek AI)
was instrumental at a critical juncture.
Following desk rejections from several journals, it was DeepSeek
that reframed the situation entirely: it suggested that the
appropriate response to institutional scepticism about a
mathematical result is not to revise the argument, but to
verify it by machine, using Lean~4.
That suggestion changed everything.
DeepSeek also provided foundational guidance in the early stages
of the formalisation, helping to establish the Lean~4 coding
patterns for pedigree-related structures.

\textbf{Claude}\textsuperscript{\textregistered} (Anthropic)
was present at every subsequent step of the formalisation,
from the layered network types and MCF feasibility structures
through to the final machine-verified proof of
\texttt{theorem p\_equals\_np}.
If the decades of pedigree polytope research are the
mathematical reality in this picture, Claude was the Monet ---
the artist who, session by session, translated that reality
into the formal language of Lean~4, making the invisible
rigour visible to the world.
The author had not used a proof assistant before this project;
Claude made it possible.

\textbf{Gemini}\textsuperscript{\texttrademark}
(Google DeepMind) contributed through helpful discussions
at key points during the project.

All mathematical content, the pedigree polytope framework,
and the P~=~NP argument are the sole intellectual contribution
of the author, developed over several decades of research.
The AI systems did not generate mathematics; they made the
formalisation of that mathematics accessible for the first time.

The author thanks his family, and especially his wife, Jaya,
for her patience and sacrifices while the author was absorbed
in the world of AI agents and Lean~4 coding.

The author thanks the University of Auckland for research support.

\appendix
\section{Supporting Results and Proofs}\label{Appendixresults}
%% ============================================================

\subsection{Results on Instant Flow}\label{resultsinstant}

\begin{definition}[$INST(\lambda,l)$]\label{inducedFAT}
Given $X/k{+}1\in\mathrm{conv}(P_{k+1})$, $\lambda\in\Lambda_{k+1}(X)$,
partition $I(\lambda)$ by $\mathbf{x}_l^r$ and $\mathbf{x}_{l+1}^r$
into $S_O^q$ and $S_D^s$. The FAT problem $INST(\lambda,l)$ has
origins $S_O^q$ (supply $x_l(e_q)$), destinations $S_D^s$
(demand $x_{l+1}(e_s)$), forbidden arcs
$F=\{(q,s)\mid S_O^q\cap S_D^s=\emptyset\}$.
The \emph{instant flow} $f_{q,s}=\sum_{r\in S_O^q\cap S_D^s}\lambda_r$
is feasible by Lemma~\ref{theorem:lemma3}.
\end{definition}

\begin{lemma}\label{oflow}
If for each $r\in I(\lambda)$, either $path(X^r/l)$ is available
in $N_{l-1}(L_l^r)$ or $X^r/l\in R_{l-1}$, then the instant flow
for $INST(\lambda,l)$ is feasible for $F_l$.
\end{lemma}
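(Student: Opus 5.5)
The plan is to check, one by one, the conditions that make a flow feasible for the FAT problem $F_l$: supply and demand balance, use of permitted arcs only, and arc capacities $C(L)$. The instant flow $f_{q,s}=\sum_{r\in S_O^q\cap S_D^s}\lambda_r$ is already feasible for $INST(\lambda,l)$ by Lemma~\ref{theorem:lemma3}, applied with $\mathcal{D}=I(\lambda)$, $g(r)=\lambda_r$, and the two partitions given by $\mathbf{x}^r_l$ and $\mathbf{x}^r_{l+1}$. So row sums equal $x_l(e_q)$ and column sums equal $x_{l+1}(e_s)$. The first step is to match the origins of $INST(\lambda,l)$ with those of $F_l$, splitting them into two groups.
\begin{itemize}
\item For each $r$ with $X^r/l\in R_{l-1}$, route $\lambda_r$ from the corresponding shrunk node. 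This needs $\lambda_r=\mu_{X^r/l}$ summed appropriately, which is exactly what rigidity (frozen flow, identical in every feasible solution) should give.
\item For the remaining $r$, route $\lambda_r$ from the ordinary node $[l:e_q]\in V_{[l-3]}$, whose supply is then the updated capacity $\bar{x}(v)$.
\end{itemize}
Once this bookkeeping is settled, the supply and demand constraints of $F_l$ follow from the row and column sums above.

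Next, I would show that every arc carrying positive instant flow is a permitted link of $F_l$. If $f_{q,s}>0$, there is some $r\in I(\lambda)$ with $x^r_l(e_q)=x^r_{l+1}(e_s)=1$. Since $X^r$ is a pedigree, $[l:e_q]$ is a generator of $[l+1:e_s]$ (or lies on the rigid pedigree's terminal node). Moreover $x(\cdot)>0$ at both ends, so $L=([l:e_q],[l+1:e_s])$ is a link of $F_l$. Arcs with $S_O^q\cap S_D^s=\emptyset$ carry zero flow, so forbidden arcs are never used.

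The main step, and the one I expect to be the obstacle, is the capacity bound $f_{q,s}\le C(L)$, where $C(L)$ is the max-flow in $N_{l-1}(L)$. The plan is to exhibit an explicit flow in $N_{l-1}(L)$ of value at least $f_{q,s}$. Superpose the paths $path(X^r/l)$ for $r\in S_O^q\cap S_D^s$ that are not rigid, each with weight $\lambda_r$; by hypothesis each path is available in $N_{l-1}(L^r_l)=N_{l-1}(L)$. Arc and flow-conservation constraints then hold path by path. The delicate point is the node capacities $\bar{x}(v)$, which are the original $x(v)$ minus rigid mass. I would argue that the total weight at $v$ is at most $\sum_{r:\,x^r(v)=1,\;X^r/l\notin R_{l-1}}\lambda_r$, which equals $x(v)-\sum_{P\in R_{l-1},\,v\in P}\mu_P=\bar{x}(v)$. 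This uses again that rigid weights are frozen, so every $\lambda$ gives the rigid pedigrees precisely the weights $\mu_P$. The max-flow value then dominates the non-rigid part of $f_{q,s}$. The rigid part is carried by the shrunk-node arcs, whose capacity $\bar{\mu}$ equals that mass.

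Combining the three steps shows that the instant flow is feasible for $F_l$. I would double-check in particular that the deletion rules (a)–(g) of Definition~\ref{restrict} never remove a node lying on an admissible $path(X^r/l)$. This is implicit in the hypothesis "available in $N_{l-1}(L^r_l)$", but it is where the identification $\lambda_r\leftrightarrow\mu_P$ could fail. If it does, the lemma needs a stronger hypothesis.
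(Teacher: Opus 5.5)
\textbf{The frozen-weight identity is a genuine gap.} Your skeleton is the argument the paper relies on; the proof of Theorem~\ref{imptheorem} appeals to it. That skeleton is: take the balance constraints from Lemma~\ref{theorem:lemma3}, get $f_{q,s}\le C(L)$ by superposing the available paths $path(X^r/l)$ with weights $\lambda_r$ inside $N_{l-1}(L)$, and route rigid mass through shrunk nodes. But the step you postpone is not bookkeeping. Rigidity is a statement about feasible solutions of $F_{l-1}$. It tells you nothing about $\lambda$ unless you already know that the instant flow of $INST(\lambda,l-1)$ is feasible for $F_{l-1}$, and that is the lemma itself one stage down. So the identity
\[
\sum_{r\in I(\lambda):\,X^r/l=P}\lambda_r=\mu_P\qquad(P\in R_{l-1}),
\]
and its analogues for the earlier rigid sets $R_q$, $q<l-1$, must be proved by induction on $l$. (Those earlier masses are also removed in $\bar{x}$ and sit in shrunk nodes of $N_{l-1}$.)

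The base case is $F_4$, which has no shrunk nodes and arc capacities $x(u)$. In the inductive step, feasibility of the instant flow at stage $l-1$ forces its value on each rigid link to equal the frozen value. Uniqueness of the path into the tail of a rigid link then forces every $r$ carrying that mass to have the same restriction $P$. This needs the path hypothesis at \emph{every} stage $l'\le l$, not only at $l$. That is exactly what Lemma~\ref{pedpath} supplies where the lemma is used (Theorem~\ref{infeasiblity}), so your closing suspicion that the single-stage hypothesis is too weak is justified.

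Without the identity, three things remain unverified: the supplies $\mu_P$ at shrunk nodes (FAT supplies are equalities), the supplies $\bar{x}$ at ordinary origins, and the throughput bound $\bar{x}(v)$ at nodes of $N_{l-1}(L)$. Your formula $\bar{x}(v)=x(v)-\sum_{P\in R_{l-1},\,v\in P}\mu_P$ also ignores the earlier rigid sets. Pedigrees extending a rigid $P\in R_q$ enter $N_{l-1}(L)$ through its shrunk node, not through $v$. The throughput bound at $v$ must therefore be summed only over those $r$ whose path uses $v$ as an ordinary node.

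\textbf{Two smaller corrections.} First, arc capacities do not hold ``path by path.'' Superposed paths share arcs $a$ created at earlier stages, and you need $\sum_{r\text{ using }a}\lambda_r\le c_a$. If $c_a$ is the link capacity $C(a)$ of that stage, this is again the inductive hypothesis.

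Second, $[l:e_q]$ is in general not a generator of $[l+1:e_s]$. The generator of $\{i,j,l+1\}$ with $i<j$ lies in layer $j$ (or is $\{1,2,3\}$ when $j\le 3$), which equals $l$ only when $l\in e_s$. Read permittedness of $L$ instead from availability of $path(X^r/l)$ in $N_{l-1}(L)$, which gives $C(L)\ge\lambda_r>0$. For rigid $r$, read it from the arc leaving the corresponding shrunk node.
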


\begin{lemma}[Existence of Pedigree Paths]\label{pedpath}
Every $X^*$ active for $X/k{+}1\in\mathrm{conv}(P_{k+1})$ satisfies:
for $4\leq l\leq k$, either $P^*/l\in R_{l-1}$ or
$path(P^*/l)$ is available in $N_{l-1}(L_l^*)$.
\end{lemma}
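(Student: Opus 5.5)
The plan is to argue by induction on $l$, from $l=4$ up to $l=k$, for a fixed active pedigree $X^*$ (so $\lambda^*>0$ in some $\lambda\in\Lambda_{k+1}(X)$) with restrictions $P^*/l$ and links $L_l^*=([l:e^*_l],[l{+}1:e^*_{l+1}])$. Two facts are used throughout. First, every node $[m:e^*_m]$ of $P^*$ has $x_m(e^*_m)\geq\lambda^*>0$, so it lies in $V_{[m-3]}$. Second, consecutive triangles of $P^*$ stand in the generator relation, so the corresponding arcs exist in $N_4$ and in the later layers by the definition of $\mathcal{A}(N_4)$ and its recursive extension.

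At each stage $l$ there is a dichotomy. If $P^*/l$ coincides with a rigid pedigree of $R_{l-1}$, the first alternative holds and nothing more is needed. Otherwise I must show that no node of $P^*/l$ is placed in the deletion set $\mathbf{D}$ of Definition~\ref{restrict} for the link $L=L_l^*$. I would go through rules (a)--(g) in turn and show each deleted node contradicts the pedigree property of $P^*/(l{+}1)$:
\begin{itemize}
\item Rules (a) and (b) remove triangles reusing the common edges $\{i,j\}$ or $\{r,s\}$ of the two triangles in $L$ at earlier layers. $P^*$ cannot contain such triangles, by the distinct-common-edge condition (2) of Definition~\ref{pedig}.
\item Rules (c) and (d) remove the non-generators at layers $s$ and $j$. $P^*$ must contain a generator of $u$ and of $v$ at exactly those layers, by condition (1).
\item Rule (e) removes the other layer-$l$ nodes. $P^*$ uses only $u$ at layer $l$.
\item Rule (f) is a cascade. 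I would handle it by a secondary induction on the order of deletion: a node of $P^*$ always keeps its own generator in $P^*$, and that generator is undeleted by the induction hypothesis, so the node is never cascaded away.
\end{itemize}

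Rule (g), which deletes rigid pedigrees that contain a deleted node, is then harmless: $P^*/l$ contains no deleted node, and in any case we are in the non-rigid case. Hence all nodes and arcs of $P^*/l$ survive in $N_{l-1}(L)$, and $path(P^*/l)$ is present as a source-to-$\mathrm{tail}(L)$ path.

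The step I expect to be the main obstacle is making ``available'' mean more than ``present as a subgraph''. The capacities in $N_{l-1}$ are the updated $\bar{x}(v)$, from which the weights $\mu_P$ of rigid pedigrees have been subtracted. I would have to show that the path still carries capacity at least $\lambda^*$, or at least positive capacity, after these subtractions. My first attempt would use Lemma~\ref{theorem:lemma3} applied to the instant-flow partition of $I(\lambda)$. The total $\lambda$-weight through each node splits into a rigid part and a non-rigid part. The rigidity of arcs, identified by the FFF algorithm, forces the rigid part to equal $\sum\mu_P$ in every feasible solution, including the instant flow. So the residual $\bar{x}$ at a node of $P^*$ is at least the weight of non-rigid active pedigrees through it, which includes $\lambda^*$.

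This capacity argument needs the induction hypothesis at level $l{-}1$ together with Lemma~\ref{oflow}: these guarantee that the instant flow is feasible for $F_{l-1}$, so the rigid flows are frozen at values consistent with $\lambda$. This interleaving of the two lemmas is the delicate point. It is where I would check most carefully that rigid arcs correspond to rigid pedigrees exactly as described in Section~\ref{DefRk}.
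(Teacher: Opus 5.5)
The paper gives no proof of this lemma. It is quoted from the book and used, together with Lemma~\ref{oflow}, only in the sketch of Theorem~\ref{infeasiblity}. So your argument has to stand on its own.

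\textbf{First gap: the path need not be in $N_{l-1}$ to begin with.} Your check of rules (a)--(f) is fine. The gap is in what you treat as free, namely that $path(P^*/l)$ is a subgraph of $N_{l-1}$ before any restriction. Your ``second fact'' is wrong on two counts.
\begin{itemize}
\item The generator of $\{i_m,j_m,m\}$ in $P^*$ sits at layer $j_m$, usually not $m-1$. In $(\{1,2,3\},\{1,2,4\},\{1,3,5\})$ the last triangle is generated by the base triangle.
\item More seriously, the arcs of $N_{l-1}$ joining layers $m-3$ and $m-2$ are not fixed by any combinatorial rule. They are the arcs of $F_m$ (for $m\geq5$, links with $C(L)>0$) that survive the FFF step: dummy arcs are dropped and rigid ones are absorbed into $R_m$.
\end{itemize}
To know that the arc for $L^*_m$ is still present, you first need it to be an arc of $F_m$ at all, which is exactly what the lemma gives at level $m$. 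You then need it not to be dummy. That requires Lemma~\ref{oflow} at level $m$: the instant flow is feasible for $F_m$ and puts flow at least $\lambda^*$ on $L^*_m$. So an interleaved strong induction is needed already for the subgraph claim: the lemma for all $m<l$ gives Lemma~\ref{oflow} for all $m<l$, which gives survival of every arc of $P^*$. It is not only needed for the capacity bookkeeping you postpone to the end. The same induction is what identifies the frozen flow on a rigid link with the $\lambda$-weight of the corresponding rigid pedigree, because it forces every active pedigree using that link onto the unique path.

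\textbf{Second gap: your dichotomy misses rigid prefixes.} You split into ``$P^*/l\in R_{l-1}$'' or ``no node of $P^*/l$ is deleted''. But a pedigree in $R_{m-1}$ stays rigid only if its extending link in $F_m$ is rigid. Otherwise it persists as a shrunk source node, which is why MCF$(k)$ carries the availability constraints with $\bar\mu(u)$, $u\in R_q$. So you can have $P^*/m\in R_{m-1}$ for some $m<l$ while $P^*/l\notin R_{l-1}$.

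In that case $\lambda^*$ is part of $\mu_{P^*/m}$, which has already been subtracted in $\bar x$ at the nodes of $P^*$ in layers up to $m$. Your claim that the residual capacity at every node of $P^*$ still contains $\lambda^*$ then fails. The path must instead be read as starting at the shrunk node $P^*/m$. You then need to show two things:
\begin{itemize}
\item this shrunk node is not removed by rule (g), which your node-level argument does give;
\item its outgoing arc is not dummy in $F_m$, again via Lemma~\ref{oflow}.
\end{itemize}
In this case rule (g) is exactly the rule that matters, and ``we are in the non-rigid case'' does not dispose of it.
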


\subsection{Proofs of Adjacency Theorems}

\begin{proof}[Proof of Lemma~\ref{forR4}]
Suppose $P^{[1]},P^{[2]}\in R_4$ are non-adjacent in
$\mathrm{conv}(P_5)$. Since $\mathrm{conv}(P_k)$ is a combinatorial
polytope~\cite{TSACompoly}, there exist $P^{[3]},P^{[4]}$ with
$\frac{1}{2}(X^{[1]}+X^{[2]})=\frac{1}{2}(X^{[3]}+X^{[4]})$.
$P^{[3]}$ has the first component of $P^{[1]}$ and second of $P^{[2]}$;
$P^{[4]}$ the reverse. Rerouting $\varepsilon=\min(\mu(P^{[1]}),\mu(P^{[2]}))$
contradicts the rigidity of $P^{[1]}$ and $P^{[2]}$ in $F_4$.
\end{proof}

\begin{proof}[Proof of Theorem~\ref{adjacency theorem}]
Suppose $P^{[1]},P^{[2]}\in R_{k-1}$ are non-adjacent in
$\mathrm{conv}(P_k)$.
Let $P^{[i]}=(e_4^{[i]},\ldots,e_{k-1}^{[i]},e_k^{[i]})$
and
\[
P'^{[i]}=(e_4^{[i]},\ldots,e_{k-1}^{[i]})
\]
be the unique path for link $L_i=(e_{k-1}^{[i]},e_k^{[i]})$.
\begin{caseof}
\case{$P'^{[1]}=P'^{[2]}$}{
  $L_1\neq L_2$; the pedigrees differ only in last component.
  Lemma~5.3 of~\cite{TSADMpaper} gives adjacency. Contradiction.}
\case{$P'^{[1]}\neq P'^{[2]}$}{
  Sub-case (a): $e_{k-1}^{[1]}=e_{k-1}^{[2]}$.
  Any $P^{[3]}$ ending in $e_k^{[1]}$ must use link $L_1$,
  contradicting uniqueness.
  Sub-case (b): $e_k^{[1]}=e_k^{[2]}$. Symmetric.
  Sub-case (c): all four edges distinct.
  Rerouting $\varepsilon$ through $P^{[3]},P^{[4]}$ contradicts
  rigidity of $L_1,L_2$.}
\end{caseof}
\end{proof}

%% ============================================================
\subsection{The Frozen Flow Finding Algorithm}\label{Appendixfrozen}
%% ============================================================

Given a feasible flow $f$ for a FAT problem, the set of rigid arcs
$\mathcal{R}$ is identified in linear time $O(|G_f|)$ using the
Frozen Flow Finding (FFF) algorithm of Gusfield~\cite{Gus}.
The algorithm constructs a mixed graph $G_f$ on the FAT node set;
rigid arcs are those not contained in any flow-change cycle ---
equivalently, the union of interfaces of diconnected components of
$G_f$ and bridges of their underlying undirected graphs.
Full definitions and correctness proof are in~\cite{Gus} and in Section~2.7 of the book.

%% ============================================================

\end{document}